\def\mpfile#1#2{\includegraphics{#1#2.eps}}
\def\mpfilescaled#1#2#3{\includegraphics[scale=#3]{#1#2.eps}}
\title{Orbits of linear maps and regular languages}
\author{S. Tarasov\thanks{Supported in part by RFBR grant
    08--01--00414.}\\
serge99meister@gmail.com
\and
M. Vyalyi\thanks{Supported  by
RFBR grant 09--01--00709 and the scientific school grant %
NSh5294.2008.1.}\\ 
vyalyi@gmail.com
 }
\date{December 5, 2010}
\newtheorem{theorem}{Theorem}
\newtheorem{lemma}{Lemma}
\newtheorem{prop}{Proposition}
\newtheorem{cor}{Corollary}
\theoremstyle{definition}
\newtheorem{remark}{Remark} 
\let\ph\varphi
\let\eps\varepsilon
\let\al\alpha
\def\Tb{T^\text{\textup{bad}}}
\def\Ta#1{T^{\text{\textup{all},\:}#1}}
\def\BB{\mathbb B}
\def\ZZ{\mathbb Z}
\def\NN{\mathbb N}
\def\QQ{\mathbb Q}
\def\fI{\ensuremath{I_\BB}}
\def\fP{\ensuremath{P_\BB}}
\def\fS{\ensuremath{S_\BB}}
\def\fP{\ensuremath{P_\BB}}
\def\id{\mathrm{id}}
\def\reg(#1){$#1$-reali\-za\-bi\-li\-ty}
\def\Per{\ensuremath{\mathrm{Per}}}
\def\PePe{(\Per_{\Sigma_1}{\parallel}P_{\Sigma_2})}
\def\pepe{\Per_{\Sigma_1}{\parallel}P_{\Sigma_2}}
\begin{document}

\maketitle

\begin{abstract} We settle the equivalence between the problem of
hitting a polyhedral set by the orbit of a linear map and the 
intersection of a regular language and a language of permutations of
binary words  
(\fP-realizability problem).

The decidability of the both problems is presently unknown and the first one 
is a straightforward generalization of the famous Skolem problem and the  
nonnegativity problem in the theory of 
linear recurrent sequences.

To show a `borderline' status of 
\fP-realizability problem with respect to computability we
present some decidable and undecidable problems
closely related to it.

This paper is an extended version of the journal publication
\cite{VyaTar10} and contains some additional results.
\end{abstract}

\section*{Introduction}

Let $\Phi$
be a linear map of a vector space $V$ into itself
and let 
$x\in V$ 
be a vector 
fin $V$.  
The iterations of $\Phi$
applied to $x$ 
define
an \emph{orbit} 
$\mathop{\mathrm{Orb}}\nolimits_\Phi{}x$, 
i.e. the set
\begin{equation*}
\{\Phi^kx: k\in{\mathbb Z}^+\}. 
\end{equation*}

In the present paper we discuss algorithmic issues related to orbits.
We assume that $V$ is a rational coordinate space, so $\Phi$ and $x$ are 
represented by their (rational) components.

\emph{An orbit description problem} consists in finding specific
relations which either hold for all vectors in the orbit or are
violated by at least one vector in the orbit. Here we limit ourself to
a simple case where the relations are formed from Boolean combinations
of linear equalities and inequalities (see the exact definition below
in 
Section~\ref{Orb&LRS}).

Note that the most important case of the orbit problem is 
\emph{the chamber hitting problem\/}. In this particular case we check
whether an orbit intersects a closed polyhedron (i.e. a solution set for 
a finite system of nonstrict linear inequalities).

The orbit description problems are related to some 
problems on
linear recurrent sequences.

\emph{A linear recurrent sequence}
(LRS) $x_n$ of a degree $d$
is defined by:
\begin{equation}\label{LRSdef}
  \left\{\begin{aligned}
    &x_n=\sum_{i=1}^d a_{i}x_{n-i}&&\ \text{when }n>d,\\ 
    &x_n=b_n&&\
    \text{when } 1\leq n\leq d,\ \text{where }a_i,b_j\ \text{are constants.}
  \end{aligned}
  \right.
\end{equation}

The famous Skolem problem is perhaps the most known algorithmic 
problem on
linear recurrences.

\medskip

\textbf{The Skolem problem.}
Let $\{x_n\}$ 
be a specified  LRS  with integer coefficients. Whether $x_k=0$ for some $k$?

\medskip

The decidability of the Skolem problem is presently an open question, but 
it is known that it is decidable for the degrees 
$\leq5$ (the cases $d=3, 4$ are worked out by N.~Vereshchagin~\cite{Ver},
and the case $d=5$ is solved by
V.\,Halava~et~al.~\cite{Halava05}). 

In the opposite direction, the best hardness result on the Skolem problem is
NP-hardness~\cite{BP02}.

LRS is called \emph{nonnegative\/} if all its elements are
nonnegative.  One more important algorithmic question about LRS is
called the \emph{nonnegativity problem\/} and consists in checking
nonnegativity of a LRS. This problem is at least as hard as the Skolem
problem.  Being a bit more formal it means that the Skolem problem is
Turing reducible to the nonnegativity problem. This statement
certainly belongs to the public mind, but see the proof of the
statement in 
Section~\ref{reduction1}.

The nonnegativity problem is decidable for LRS of the degree~$\leq3$
(V.~Laohakosol, P.~Tangsupphathawat~\cite{LT09}). 
 
In this paper we often use some standard constructions from the theory of
algorithms and the complexity theory, e.g., Turing reducibility,
$m$-reducibility, polynomial reducibility. In particular, Turing
reducibility of a {\em Problem I} to {\em Problem II} means that 
while solving {\em Problem I}
the
reduction may ask an oracle, who can give answers to {\em Problem II}
(see, further~\cite{ShV,GJ}.)

The relations between the orbit description problems and LRS 
are described in Section~\ref{Orb&LRS}. 

These results are certainly not new but we include them for
completeness sake and in order to introduce necessary notation and
terminology.

Let briefly sketch the contents of the paper.

We recall basic properties of  LRS in Section~\ref{alg-comb}. 
LRS are closely related to regular languages. In particular, 
any LRS can be represented as a {\em difference} of the generating
functions of a pair of regular languages (see, e.g., \cite[Cor.~8.2]{SaSo}.
We will use a  modification of this result (see
Theorem~\ref{LR->2Auto} from  Section~\ref{HCP}). 
(All necessary facts about regular languages could be found
in~\cite{HopcroftMotwaniUllman,BeRe}.)   

The main result of the present paper consists in an 
algorithmic equivalence
between the  chamber hitting problem
and checking a particular property of the regular languages. Namely,
the property involved consists in checking whether a regular language
contains at least one word from a special set of words. We call this
set a \emph{permutation filter\/}.  Speaking informally, an arbitrary
word from the permutation filter gives a permutation of all binary words
of a fixed length~$n$. 
See a formal statement in Section~\ref{Filter} below.

The permutation filter is denoted by
$P_{\mathbb B}$ and the corresponding problem of checking this property
of regular languages is called
$P_{\mathbb B}$-realizability problem (see, Section~\ref{Filter}).

The proof of an algorithmic equivalence of a problem of
$P_{\mathbb B}$-realizability and the chamber hitting problem
(Theorem~\ref{PermF=Skolem} from Section~\ref{Filter}) 
proceeds
in
several steps. 

At first, 
we describe in Section~\ref{HCP} a polynomial
reducibility of the chamber hitting problem to the $P_{\mathbb
B}$-realizability problem. 
The reducibility uses the aforementioned Theorem~\ref{LR->2Auto}.

As a consequence of this result we prove (see,
theorem~\ref{NPhardness}) NP-hardness of a 
$P_{\mathbb B}$-realizability problem. This result may be of
independent interest. 

To construct a reduction in the opposite direction, i.e. a reduction
of the $P_{\mathbb B}$-realizability problem to the chamber hitting
problem, we settle some technical difficulties.
It turns out that a natural
construction described in Subsection~\ref{route}
gives a reduction of the $P_{\mathbb B}$-realizability problem to the
problem of hitting 
a translate of an integral polyhedral cone represented by generators. 
To reduce this problem to the 
problem of hitting a rational cone we use some additional technical 
tricks and their description is contained in Subsection~\ref{Zhitsubsect}.  
As an intermediate step 
we consider the case of a simplicial cone 
(the generating vectors are linear independent).
The general case is reduced to the simplicial one using
representation of an arbitrary integral cone as a union of a finite
set and a finite family of translates of simplicial integral cones
(Theorem~\ref{FrobNdim}). 
This result may be regarded as a sort of an integral Caratheodory's
theorem.
Recall that integral
variants of the Caratheodory's theorem are actively studied (see, for
instance,~\cite{CookW, EisenShmonin})
and have important applications to combinatorial optimization.

Finally, in Section~\ref{decid-and-undec} we
present some decidable and undecidable problems
closely related to \fP-realizability problem thus demonstrating its
`borderline' status with respect to computability.

\section{Algebraic and combinatorial properties of LRS}
\label{alg-comb}

Below we remind some  algebraic and combinatorial properties of LRS
which will be used 
in the sequel. All needed proofs and references 
may be found in~\cite{Stanley,BeRe,Halava05}.

1. Let $A$ and $h$ be, respectively, a linear operator 
and a linear function on a vector space 
$V$ and let  $x,y\in V$. Then the generating function
\begin{equation}\label{lin-func}
  f_{A,x,y}(t)=\sum_{r\geq0} h(A^kx-y)t^r
\end{equation}
is rational.

2. The generating function of any LRS is rational and Taylor's series
expansion coefficients of an arbitrary rational function at any point
in its domain form a LRS.

3. The set of LRS is closed under componentwise addition and
multiplication (i.e. 
under Hadamard product of sequences).

In the standard setting, the input to 
the Skolem problem is an integer LRS.
But if we are interested in signs of the elements of LRS  involved  only, then
there is no difference between integral and rational cases.
 
Indeed, let a LRS be specified by rational data and let
$N$ be the LCM of the denominators of all $a_i$ and $b_i$.
Form an integral LRS
\begin{equation}\label{ZLRS-def}
  y_n=\sum_{i=1}^d N^{i}a_{i}y_{n-i}\ \text{if }n>d,\quad y_n=N^{n+1}b_n\ 
  \text{if } 1\leq n\leq d.  
\end{equation}

\begin{prop}\label{Q->Z}
 $N^{n+1}x_n=y_n$ for all $n$.
\end{prop}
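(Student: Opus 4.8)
The plan is to prove the identity $N^{n+1}x_n = y_n$ by strong induction on $n$, treating the base range $1 \leq n \leq d$ and the recursive range $n > d$ separately, exactly mirroring the two-case structure of the defining recurrences~\eqref{LRSdef} and~\eqref{ZLRS-def}. For the base case, when $1 \leq n \leq d$ the definition of $y_n$ in~\eqref{ZLRS-def} is simply $y_n = N^{n+1} b_n$, and since $x_n = b_n$ in this range by~\eqref{LRSdef}, we get $y_n = N^{n+1} x_n$ directly, with no computation needed.

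For the inductive step, fix $n > d$ and assume $N^{m+1} x_m = y_m$ for all $m < n$ (in particular for $m = n-i$ with $1 \leq i \leq d$, all of which lie in $\{1, \dots, n-1\}$ since $n > d$). Then I would compute
\begin{equation*}
y_n = \sum_{i=1}^d N^{i} a_i y_{n-i} = \sum_{i=1}^d N^{i} a_i N^{n-i+1} x_{n-i} = N^{n+1} \sum_{i=1}^d a_i x_{n-i} = N^{n+1} x_n,
\end{equation*}
where the first equality is the definition of $y_n$, the second uses the induction hypothesis, the third collects the powers of $N$ since $N^i \cdot N^{n-i+1} = N^{n+1}$ is independent of $i$, and the last uses the recurrence for $x_n$. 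This closes the induction.

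The only point requiring a word of care is checking that $N^{n+1} x_n$ is genuinely an integer, i.e. that the sequence $y_n$ so defined is integral as claimed in the text preceding the proposition; but this is immediate once the identity is established, since $N a_i \in \ZZ$ and $N b_j \in \ZZ$ by the choice of $N$ as the LCM of all denominators, so $N^i a_i \in \ZZ$ and $N^{n+1} b_n \in \ZZ$, making every $y_n$ a $\ZZ$-linear combination of integers. There is no real obstacle here: the exponent $n+1$ in $N^{n+1}$ is chosen precisely so that the power-of-$N$ bookkeeping in the inductive step goes through cleanly, and the whole argument is a routine verification.
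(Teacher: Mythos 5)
Your proof is correct and takes essentially the same approach as the paper: base case from the defining formulas, then strong induction using the recurrence, collecting powers of $N$ via $N^i \cdot N^{n-i+1} = N^{n+1}$. The added remark on integrality of $y_n$ is a harmless bonus not present in the paper's proof.
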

\begin{proof}
  For $1\leq n\leq{}d$ the statement follows from the
  definition~\eqref{ZLRS-def}. 
  For  $n>d$ 
  we proceed by induction.
  Assuming that the statement holds for
  all $n< k$,
  we get
  \begin{equation*}
    y_{k}=\sum_{i=1}^d N^{i}a_{i}y_{k-i}=
    \sum_{i=1}^d N^{i}a_{i}(x_{k-i}N^{k-i+1})=
    \sum_{i=1}^dN^{k+1}a_{i}x_{k-i}=N^{k+1}x_{k}.
  \end{equation*}
  Thus the statement holds for
  $n=k$.
\end{proof}

\begin{remark}
Trivially, using the Euclidean algorithm, the identity
$$
\text{LCM}(x,y)=\frac{xy}{\text{LCF}(x,y)}
$$
and by Proposition~\ref{Q->Z}  we can compute $y_n$ from $x_n$ in time 
polynomial in the length of the input data.
So
for algorithmic problems, 
which are concerned with  the signs of LRS  elements only,
integral and rational cases are equivalent w.r.t. polynomial
reductions.

Hereinafter we assume that in algorithmic problems an LRS is represented by
the list of coefficients $a_i$, $b_j$ written in binary.
\end{remark}

Some LRS are solutions of enumeration  problems and for
this reason are nonnegative. For instance, the family of the so called
${\mathbb N}$-rational sequences coincide with the set of generating
functions of regular languages (see,~\cite{BeRe}).  
In the sequel
${\mathbb N}$-rational sequences are called
\emph{regular sequences\/}.

To be more precise, any deterministic automaton
$A$ over the alphabet
$\Sigma$
defines LRS $s_n$, where
\begin{equation}\label{auto-seq}
  s_n(A)=\#\{w: \mbox{ word } w \mbox{ is accepted by } A\mbox{ and } |w|=n\}.
\end{equation}
Exactly the sequences of the type~\eqref{auto-seq} will be  called regular.

LRS are not regular as they may have negative elements, but nevertheless
the following statement holds.

\begin{theorem}[{\cite[Cor.~8.2]{SaSo}}]\label{diff-th}
Any LRS is a difference of two regular sequences.  
\end{theorem}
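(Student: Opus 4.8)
The plan is to reduce an arbitrary LRS to a combination of sequences that are manifestly regular, using the rational generating function machinery recalled in items 1--3 above. First I would recall the structure of the generating function $F(t)=\sum_n x_n t^n$ of an LRS: by item~2 it is a rational function $F(t)=P(t)/Q(t)$, and after the usual normalization we may assume $Q(0)=1$, so that $Q(t)=\prod_i (1-\lambda_i t)^{m_i}$ with the $\lambda_i$ the (possibly complex, possibly irrational) reciprocals of the roots of the characteristic polynomial. The point is that the coefficients $x_n$ are, for $n$ large, a finite $\ZZ$- (or $\QQ$-) linear combination of terms $\binom{n}{j}\lambda_i^n$. The obstacle to regularity is exactly that these coefficients can be negative; the idea is to split each contributing term into a ``positive'' and a ``negative'' part whose coefficient sequences are separately nonnegative and regular.

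The cleanest route I would take is the partial fraction / geometric series route. Write $F(t) = R(t) + \sum_{i,j} \frac{c_{ij}}{(1-\lambda_i t)^{j}}$ where $R$ is a polynomial (absorbed trivially, since a polynomial is a difference of two polynomials with nonnegative coefficients, both obviously regular as they count words in a finite language with multiplicities). Each elementary fraction $\frac{1}{(1-\lambda t)^{j}}$ has coefficient sequence $\binom{n+j-1}{j-1}\lambda^n$. Now I would invoke the known closure properties: the class of regular sequences (= $\NN$-rational sequences = generating functions of regular languages, by the correspondence recalled around \eqref{auto-seq}) is closed under sum, under Hadamard product, and contains the characteristic-type sequences one gets from single automata. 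The key lemma to establish is that for any algebraic number $\lambda$ (arising as a reciprocal root of an integer polynomial) and any $j$, the sequence $\binom{n+j-1}{j-1}\lambda^n$ is a \emph{difference} of two regular sequences. For $j=1$ this is the heart of the matter: one must show $\lambda^n = u_n - v_n$ with $u_n,v_n$ regular. The binomial factor $\binom{n+j-1}{j-1}$ is itself a regular sequence (it counts, e.g., words of length $n$ over a suitable alphabet), and since regular sequences are closed under Hadamard product (item~3 plus the automaton product construction), multiplying by it preserves the ``difference of two regular'' property. Summing over $i,j$ and using closure under addition of regular sequences, and finally taking a common difference $A-B$ where $A$ collects all positive parts and $B$ all negative parts, yields the theorem. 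The finitely many exceptional initial terms where the closed form does not yet govern $x_n$ are handled by adding a polynomial correction, i.e. a finite-language counting sequence, to whichever side needs it.

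The main obstacle, and the step I would spend the most care on, is the base case: expressing a single exponential $\lambda^n$ (with $\lambda$ a complex algebraic number whose modulus may exceed or be less than $1$, and which may come in conjugate pairs producing oscillating real sequences) as a difference of two $\NN$-rational sequences. The standard trick is to pass to the companion matrix $M$ of the characteristic polynomial, so that $x_n = h(M^n b)$ for a suitable linear functional $h$ and vector $b$ (this is exactly item~1 with $y=0$), and then to observe that one can choose an integer $N$ and translate $M \rightsquigarrow M + N\,\mathbf{1}$ (or otherwise shift) so that the shifted matrix has nonnegative integer entries; the sequence $e_k^\top (M+N\mathbf 1)^n b_+$ is then literally a weighted count of paths of length $n$ in a finite digraph with multiplicities, hence $\NN$-rational, and the original $x_n$ is recovered as a signed combination — i.e. a difference — of finitely many such nonnegative path-counting sequences via the binomial expansion of $((M+N\mathbf 1)-N\mathbf 1)^n$. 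Making the bookkeeping of this expansion land cleanly on ``two regular sequences'' (collecting all the $+$ terms into one automaton, all the $-$ terms into another, using the union/sum closure) is the only genuinely fiddly part; everything else is assembling the closure properties already quoted. Since the paper cites \cite[Cor.~8.2]{SaSo} for this, I would in practice just invoke that reference, but the self-contained argument above is the shape of the proof.
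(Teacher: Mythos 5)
The paper does not actually supply a proof of Theorem~\ref{diff-th}: it is quoted from Salomaa and Soittola. What the paper \emph{does} prove is the strengthened, effective version Theorem~\ref{LR->2Auto}, and that proof is the right point of comparison. It runs through Lemma~\ref{LR->GW} (realize $x_n$ as a sum of integer-weighted walks of length $n$ in an explicit digraph), Lemma~\ref{Graph-numerical} (replace each integer weight by a gadget subgraph with that many internal paths, so that only $\pm 1$ weights remain), and then sign-tracking: take state set $\{\pm1\}\times V(G)$, flip the sign coordinate when a $(-1)$-weighted edge is traversed, and let automaton $A$ accept at $(+1,f)$ while $B$ accepts at $(-1,f)$. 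The crucial observation is that the \emph{sign} of a walk's weight is a two-valued, hence finite-state, quantity that lives happily in the automaton's state; the \emph{magnitude} is offloaded onto the multiplicity gadgets.

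Your proposal takes a different route, and its pivotal step has a genuine gap. You correctly flag that the partial-fraction route founders on complex $\lambda_i$ and retreat to the companion-matrix picture $x_n = h(M^n b)$, proposing to shift to a nonnegative matrix and recover $M^n$ from the binomial expansion of $((M+N\mathbf{1})-N\mathbf{1})^n$. Setting aside that $M+N\mathbf{1}$ only shifts the diagonal and so does not actually make a companion matrix entrywise nonnegative, the deeper problem is that the binomial expansion has $n+1$ summands, a number that grows with $n$, so one cannot ``collect the $+$ terms into one automaton and the $-$ terms into another using the union/sum closure'': those closure properties cover fixed finite sums, not sums over a range of length $n+1$. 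If one sums the like-signed terms in closed form, the positive part is $\tfrac{1}{2}\bigl((M+2N\mathbf{1})^n + M^n\bigr)$ and the negative part is $\tfrac{1}{2}\bigl((M+2N\mathbf{1})^n - M^n\bigr)$; pushing these through $h(\,\cdot\;b)$ gives two LRS that one may hope are nonnegative for large $N$, but a nonnegative $\ZZ$-rational sequence is not automatically $\NN$-rational. Establishing $\NN$-rationality here is essentially as hard as the original theorem, or requires Soittola's theorem, a substantial external result you do not invoke. The missing finite-state device is the sign-doubling trick above; with it in hand, no shifting or expansion is needed.
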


\section{Orbits of linear maps and LRS}\label{Orb&LRS}
In this section we state formally the orbit description problem and 
the related chamber hitting problem, and indicate relations 
of these problems to LRS.

Let $h_1,\dots,h_m$ be a family of affine
functions defined on a coordinate space ${\mathbb Q}^d$.  The
sign patterns of these functions induce a partition of ${\mathbb Q}^d$
into chambers.  Formally, let $s\in\{\pm1,0\}^m$.  Then \emph{a
chamber} is a set
\begin{equation*}
  \{x\in{\mathbb Q}^d : \mathop{\mathrm{sign}}(h_i(x))=s_i\ \text{for } 1\leq i\leq m\},
\end{equation*}
where $\mathop{\mathrm{sign}}(t)$
is a standard sign function
\begin{equation*}
  \mathop{\mathrm{sign}}(t)=\left\{
  \begin{aligned}
    1,&&&\text{for } t>0,\\
    0,&&&\text{for } t=0,\\
    -1,&&&\text{for } t<0.\\
  \end{aligned}
  \right.
\end{equation*}

\medskip

\textbf{Orbit description problem} (ODP).

{\sc INPUT:}
a square matrix $\Phi$ of order $d$; 
$d$-dimensional vector $x_0$; 
a family of affine functions $h_1,\dots,h_m$ 
on ${\mathbb Q}^d$ and a set of sign patterns
$s_1,\dots, s_r\in\{\pm1,0\}^m$.

{\sc OUTPUT:}
`yes' if  any point of the orbit
$\mathop{\mathrm{Orb}}\nolimits_\Phi{}x_0$ 
falls into the union of chambers
$H_{s_1}\cup,\dots, \cup H_{s_r}$, and `no' otherwise.

\begin{remark}
  We assume that matrices, vectors and affine functions in the ODP and
  all related problems are represented by the component lists, where
  the components are written in binary. 
\end{remark}

\textbf{Chamber hitting problem} (CHP) has the same input as the ODP,
but there is one sign pattern $s$ only.

The output of CHP is `yes' if the orbit
$\mathop{\mathrm{Orb}}\nolimits_\Phi{}x_0$ 
intersects the chamber $H_s$ and `no' otherwise.

\begin{prop}\label{CHP=ODP}
  \textup{CHP} is Turing-equivalent to \textup{ODP}. 
\end{prop}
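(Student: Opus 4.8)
The plan is to prove the two directions of Turing-equivalence separately, since each problem becomes an oracle for the other. The easy direction is reducing CHP to ODP. Given a CHP instance with matrix $\Phi$, vector $x_0$, affine functions $h_1,\dots,h_m$ and a single sign pattern $s$, I observe that the orbit hits the chamber $H_s$ if and only if the orbit does \emph{not} entirely avoid $H_s$, i.e. if and only if it is false that every orbit point lies in the union of all chambers $H_{s'}$ with $s'\neq s$. There are at most $3^m$ sign patterns in $\{\pm1,0\}^m$, so the complement of $H_s$ is a union of at most $3^m-1$ chambers; feeding this list of sign patterns to the ODP oracle and negating the answer solves CHP. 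One subtlety to note: $3^m$ is exponential in $m$, so this is only a Turing reduction (with exponentially many patterns written down), not a polynomial one — but Turing-equivalence is all that is claimed, so this is fine. Actually one can do better and keep it polynomial by a different encoding, but it is not needed here.

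For the reverse direction, reducing ODP to CHP, I would argue that an orbit point $\Phi^k x_0$ fails to lie in $H_{s_1}\cup\dots\cup H_{s_r}$ precisely when its sign vector $(\operatorname{sign}(h_1(\Phi^k x_0)),\dots,\operatorname{sign}(h_m(\Phi^k x_0)))$ equals some pattern $s^\ast\notin\{s_1,\dots,s_r\}$. So the ODP answer is `yes' iff for \emph{every} pattern $s^\ast$ not in the given list, the orbit misses the chamber $H_{s^\ast}$. Thus I would enumerate all patterns $s^\ast\in\{\pm1,0\}^m\setminus\{s_1,\dots,s_r\}$ (again at most $3^m$ of them), make one CHP oracle call for each — with the same $\Phi$, $x_0$, $h_1,\dots,h_m$ and target pattern $s^\ast$ — and output `yes' iff all these calls return `no'. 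This is again a valid Turing reduction.

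A cleaner alternative for the reverse direction, which avoids exponential blow-up, is to reduce a single CHP-style membership-in-a-fixed-chamber test to ODP directly: asking whether the orbit hits $H_s$ is the negation of asking whether the orbit stays inside $\bigcup_{s'\ne s}H_{s'}$, and symmetrically one can phrase the ODP `yes' condition chamber-by-chamber. Either way the core point is purely Boolean bookkeeping: the union of chambers appearing in ODP and the complement-of-a-single-chamber appearing in CHP are interdefinable via finitely many sign patterns, and "the orbit is contained in a union of chambers" is the conjunction over the missing patterns of "the orbit misses that chamber."

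The main obstacle, such as it is, is essentially notational rather than mathematical: one must be careful that a sign pattern $s\in\{\pm1,0\}^m$ with some zero coordinates describes a chamber lying on the boundary between open chambers, so the chambers $\{H_s\}_{s\in\{\pm1,0\}^m}$ genuinely partition $\mathbb{Q}^d$ and every orbit point lies in exactly one of them — this is what makes the complementation step valid. I would state this partition fact explicitly, then the reductions in both directions are immediate, and since each is a Turing reduction the two problems are Turing-equivalent.
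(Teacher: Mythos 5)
Your proof is correct and follows essentially the same approach as the paper: both directions reduce by complementing the given sign pattern(s) within the partition $\{H_s\}_{s\in\{\pm1,0\}^m}$ of $\mathbb{Q}^d$, asking the oracle about the complementary chambers, and negating. Your explicit note that this is a partition (so complementation is legitimate) and your observation that the exponential number of oracle calls is acceptable for Turing-equivalence are both sound and merely make explicit what the paper leaves implicit.
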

\begin{proof}
  To reduce the CHP to the ODP we take the  complement in the set
$\{\pm1,0\}^m$ to the sign pattern of the chamber in the input 
of an instance of the CHP and fix all other input parameters. We
obtain an instance of the ODP that outputs `yes' iff the instance of
the CHP outputs `no'.

The reduction in the opposite direction is proved analogously.
For any chamber not included in the input list of an instance of the
ODP we solve the corresponding CHP.  All the answers are `no' iff
the instance of the ODP reports `yes'.
\end{proof}

\begin{remark}
  It is rather obvious that the CHP is recursevely enumerable (belongs
  to the~class $\Sigma^1$ of the arithmetic hierachy) and the ODP is
  co-enumerable (belongs to the class $\Pi^1$). The proof of
  Proposition~\ref{CHP=ODP} shows that these problems are
  complementary in 
  a broad
  sense.

  Another pair of
  complementary problems is the chamber description problem (the ODP restricted
  to the case of one chamber) and the problem of hitting the complement
  to a chamber. 

  We do not know reducibilities between the chamber description
  problem and the CHP.
\end{remark}

\begin{prop}\label{Skolem=Hit1}
The nonnegativity problem is equivalent to the \textup{ODP} 
restricted to  one linear function and a nonstrict inequality (the chambers
$H_{0}$,  $H_{+1}$).

The Skolem problem is equivalent to the \textup{CHP} 
restricted to one linear function and equality (the chamber
$H_{0}$).
\end{prop}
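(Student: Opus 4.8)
The plan is to establish both equivalences by exhibiting Turing reductions in each direction, using the dictionary between orbits of linear maps and linear recurrent sequences developed via the generating function~\eqref{lin-func}. First I would recall that for a LRS $\{x_n\}$ of degree $d$ we may form the companion matrix $\Phi$ of the recurrence~\eqref{LRSdef} acting on ${\mathbb Q}^d$, take $x_0=(b_1,\dots,b_d)^{\mathsf T}$ (or the appropriate shift), and let $h$ be the coordinate functional picking out the first component; then $h(\Phi^n x_0)=x_{n+c}$ for a fixed constant shift $c$. Conversely, given any $\Phi$, $x_0$ and affine function $h$, the sequence $x_n=h(\Phi^n x_0)$ satisfies a linear recurrence whose coefficients are (up to sign) those of the characteristic polynomial of $\Phi$, by Cayley--Hamilton; an affine $h$ only adds a constant term, which can be absorbed by passing to a LRS of degree $d+1$ (adjoining the constant sequence). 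All of this data is computable in polynomial time from the input, since it amounts to reading off the characteristic polynomial and evaluating a few matrix--vector products.

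Next I would handle the two assertions separately. For the nonnegativity part: a LRS $\{x_n\}$ is nonnegative iff for every $n$ the point $\Phi^n x_0$ lies in the half-space $\{h\geq 0\}$, i.e.\ in the union of the chambers $H_{0}\cup H_{+1}$ for the single affine function $h$; this is precisely the ODP restricted as stated, so the nonnegativity problem reduces to this restricted ODP. For the converse reduction, given an instance of the restricted ODP with one affine function $h$, form $x_n=h(\Phi^n x_0)$, which is a LRS (after the degree-$(d+1)$ trick to absorb the affine constant), and the ODP answer is `yes' iff $\{x_n\}$ is nonnegative. For the Skolem part the argument is identical with ``$\geq 0$'' replaced by ``$=0$'': $x_k=0$ for some $k$ iff the orbit $\mathop{\mathrm{Orb}}\nolimits_\Phi x_0$ hits the chamber $H_0$ of the single functional $h$, so the Skolem problem is equivalent to the CHP so restricted.

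The one genuinely delicate point, which I expect to be the main obstacle, is the bookkeeping around \emph{initial segments and index shifts}, together with the affine-versus-linear and rational-versus-integral normalizations. A LRS is only defined by its recurrence for $n>d$, with the first $d$ values given as arbitrary initial data, whereas $h(\Phi^n x_0)$ is governed by the characteristic recurrence for \emph{all} $n\geq 0$; one must check that prepending finitely many terms or shifting the index by a bounded constant changes neither nonnegativity nor the existence of a zero (for the zero problem one should be slightly careful, but checking the finitely many initial terms separately is trivial and costs only polynomial time). Likewise, to stay within the ``integer LRS'' convention of the Skolem and nonnegativity problems one invokes Proposition~\ref{Q->Z} and the following remark to pass from rational to integral data in polynomial time, since only the signs (resp.\ vanishing) of the terms matter. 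Once these normalizations are dispatched, the reductions in both directions are immediate and, in fact, many-one and polynomial-time; the statement's word ``equivalent'' is then justified in the strong sense.
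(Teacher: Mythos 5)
Your proposal is correct and follows essentially the same route as the paper: both directions use the companion matrix of the recurrence, and the reverse reduction rests on the fact that $h(\Phi^n x_0)$ is itself an LRS (the paper cites rationality of the generating function~\eqref{lin-func}, you invoke Cayley--Hamilton, but these are the same fact). Your extra care about index shifts, initial segments, and the rational-to-integral normalization via Proposition~\ref{Q->Z} is sound bookkeeping that the paper leaves implicit, and your affine-constant remark is harmless though unnecessary here since the restricted problems involve a linear, not affine, functional.
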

\begin{proof}
  As is well known the LRS~\eqref{LRSdef} 
  is related to a linear operator~$A$ in the 
  $d$-dimen\-sional coordinate space given  in matrix notation by
  \begin{equation}
    A=\begin{pmatrix}
      a_{1}&a_{2}&\dots&a_{d-1}&a_d\\
      1&     0&\dots     &   0&0\\
      0& 1&     \dots   &     0&0\\
      \hdotsfor{5}\\
      0&0&\dots&1&0
    \end{pmatrix}
  \end{equation}

  It is easy to check by induction that
$$
A^k(b_{d},\dots,b_1)^T = (x_{k+d},x_{k+d-1},\dots,x_{k+1})^T.
$$

Thus, the Skolem problem for the  LRS~\eqref{LRSdef} 
is reduced to the CHP with
$\Phi=A$, $x_0=(b_{d},\dots,b_1)^T$, $h_1=x_d$ 
and the chamber
$H_0$.

Analogously, the nonnegativity problem is reduced to the ODP with
the same input as above and two chambers
$H_0$, $H_{+1}$.

The reductions in the opposite direction use standard facts about LRS
listed in section~\ref{alg-comb}.  Indeed, the generating function
$\sum_n h(A^nx_0)t^n$ is rational.  Hence, the sequence $h(A^nx_0)$ is
an LRS and both reductions follow.
\end{proof}

We don't know whether the general ODP and CHP can be reduced to,
respectively, the nonnegativity problem and the Skolem problem, but we
can indicate particular cases when such reductions do exist.

\medskip

\textbf{Union of subspaces hitting problem} (SHP).

{\sc INPUT:}
a square matrix $\Phi$ of order $d$; 
a $d$-dimensional vector $x_0$; 
a family of linear functions $h_{jk}$ 
on ${\mathbb Q}^d$,
$1\leq j\leq m$, $1\leq k\leq
 r_j$. 

{\sc OUTPUT:}
`yes' if  the orbit
$\mathop{\mathrm{Orb}}\nolimits_\Phi{}x_0$ 
intersects the  union of affine subspaces
 \begin{equation*}
   \bigcup_{j=1}^m\{x: h_{j1}(x)=h_{j2}(x)=\dots=h_{jr_j}(x)=0\}
 \end{equation*}
and `no' otherwise.

\medskip

\begin{lemma}\label{Qhit->Skolem}
  The \textup{SHP} 
is reducible to the Skolem problem.
\end{lemma}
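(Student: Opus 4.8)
The plan is to reduce the SHP to the Skolem problem by turning the ``hit one of $m$ subspaces'' condition into a single Skolem-type question about a product LRS. First I would handle the case $m=1$, i.e.\ a single subspace $\{x:h_{1}(x)=\dots=h_{r}(x)=0\}$. The orbit $\mathrm{Orb}_\Phi x_0$ hits this subspace iff there is a $k$ with $h_j(\Phi^k x_0)=0$ simultaneously for all $j$. Since each sequence $h_j(\Phi^k x_0)$ is an LRS (its generating function is rational by fact~1 of Section~\ref{alg-comb}), the sum of squares $s_k=\sum_{j=1}^{r} h_j(\Phi^k x_0)^2$ is again an LRS, using closure of LRS under Hadamard product and sum (fact~3). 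Moreover $s_k\ge 0$ always, and $s_k=0$ exactly when the orbit point $\Phi^k x_0$ lies in the subspace. Hence the single-subspace SHP is precisely the Skolem problem for the LRS $\{s_k\}$ — in fact it coincides with the CHP for ``$h=0$'', consistent with Proposition~\ref{Skolem=Hit1}.

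For general $m$ I would use a Turing reduction: the orbit hits $\bigcup_{j=1}^m S_j$ iff it hits at least one $S_j$. So I would call the Skolem oracle $m$ times, once for each subspace $S_j$ via the sum-of-squares LRS constructed above, and answer `yes' iff at least one call returns `yes'. This already proves the lemma as a Turing reduction, which is all that is claimed (the statement says ``reducible'', and the paper works with Turing reducibility throughout). If a many-one reduction were wanted one could instead build a single LRS $\prod_{j=1}^m \big(\text{the }j\text{-th sum-of-squares LRS}\big)$ — the product over $j$ of the $s^{(j)}_k$ — which again is an LRS by Hadamard-closure, is nonnegative, and vanishes at $k$ iff some $S_j$ is hit at step $k$; but the $m$-fold oracle call is cleaner and suffices.

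The main point to be careful about is the passage from rational to integer data and the bookkeeping of LRS degrees: the coefficients $h_j(\Phi^k x_0)$ are rational, the squares and products are rational LRS, and one invokes the remark after Proposition~\ref{Q->Z} to clear denominators in polynomial time so the Skolem oracle receives a bona fide integer LRS. There is no real obstacle here — the construction is elementary — the only thing to verify is that the generating function of $s_k$ (respectively the product LRS) is genuinely rational and effectively computable from $\Phi$, $x_0$ and the $h_j$, which follows from facts~1--3 of Section~\ref{alg-comb} together with standard effective operations on rational generating functions. I expect the ``hard part'' to be purely expository: making explicit that squaring and summing/multiplying LRS is effective on the level of coefficient data, so that the reduction is genuinely algorithmic and not merely existential.
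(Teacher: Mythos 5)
Your proof is correct and matches the paper's: the sum-of-squares-then-product construction $\prod_j\sum_k h_{jk}(\Phi^n x_0)^2$ that you mention as the ``alternative'' is exactly the $m$-reduction the paper gives as its primary proof, relying on the same closure of LRS under componentwise sum and Hadamard product from Section~\ref{alg-comb}. You lead with the $m$-fold Turing reduction and the paper leads with the product, but the paper's remark after the lemma makes precisely your point, namely that the product LRS may have exponential degree whereas the oracle-per-subspace version stays polynomial.
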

\begin{proof}
It follows from properties of LRS listed in Section~\ref{alg-comb} 
that the sequences
  $$
  \varphi_n(j,k)=h_{jk}(\Phi^nx_0),\, j=0,\dots, m,
  $$
are LRS. But as LRS are closed under componentwise sum and product,
the sequence 
  $$
  \varphi_n=\prod_{j=0}^m \sum_{k=1}^{s_j} \varphi_n(j,k)^2
  $$
is also LRS.

Note that taking the input of the SHP we can algorithmically compute
the representation of~$\varphi_n$ in the form~\eqref{LRSdef}.  And the
reduction of the SHP to the Skolem problem follows as $\varphi_n=0$
iff for some $j$ it holds
  $$
  \varphi_n(j,1)=\varphi_n(j,2)=\dots=\varphi_n(j,s_j)=0.
  $$
\end{proof}

\begin{remark}
Note that $m$-reducibility in the proof above 
may not be a polynomial one as the degree of the sequence
$\varphi$ 
may be exponential w.r.t. the initial degree $d$.
But the answer to a general SHP is disjunction of answers to separate
subspace hitting problems. And for this particular case the
reducibility described in the proof is polynomial.  Hence, we conclude
that the SHP can be solved in polynomial time with the Skolem
problem oracle. In other words, the SHP is polynomial time Turing
reducible to the Skolem problem.
\end{remark}

The famous Skolem-Mahler-Lech theorem~\cite{ BeRe, Halava05,Tao} 
asserts that the set of zeroes 
of a LRS consists of a finite set and a
union of a finite number of arithmetical progressions. 

The proof of
Lemma~\ref{Qhit->Skolem} implies that the same is true for the SHP.

\begin{cor}[\bf Skolem-Mahler-Lech theorem for SHP]
The set of those $k$ for which $\Phi^kx_0$ falls into the union of
subspaces $V_1,\dots,V_{m}$, is a union of a finite set and a a finite
number of arithmetical progressions.
\end{cor}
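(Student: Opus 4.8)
The plan is to reduce the statement about the SHP to the classical Skolem--Mahler--Lech theorem exactly along the lines already used in the proof of Lemma~\ref{Qhit->Skolem}. The key observation is that the set in question,
\[
Z=\{k\in\NN : \Phi^kx_0\in V_1\cup\dots\cup V_m\},
\]
is precisely the zero set of the single LRS
\[
\varphi_n=\prod_{j=1}^m\sum_{k=1}^{r_j}\bigl(h_{jk}(\Phi^nx_0)\bigr)^2,
\]
since a real (or rational) sum of squares vanishes iff every summand vanishes, so $\varphi_n=0$ iff $\Phi^nx_0$ lies in some $V_j$. First I would invoke the facts from Section~\ref{alg-comb}: each sequence $n\mapsto h_{jk}(\Phi^nx_0)$ is an LRS by property~1 (rationality of the generating function~\eqref{lin-func}), and the class of LRS is closed under componentwise sum and product by property~3. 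Hence $\varphi_n$ is itself an LRS.

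Next I would simply apply the Skolem--Mahler--Lech theorem to $\varphi_n$: its zero set is the union of a finite set and finitely many arithmetic progressions. Since $Z$ equals this zero set, the corollary follows immediately.

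There is really no serious obstacle here; the statement is a direct corollary once one has the sum-of-squares trick packaging ``hitting a union of subspaces'' as ``a single LRS vanishes.'' The only point that needs a word of care is that the argument requires working over an ordered field (so that a sum of squares is zero only when each term is), which is fine since the data and the orbit lie in $\QQ^d\subset\RR^d$; and that one should note $\varphi_n$ has rational coefficients, so the classical integer-coefficient form of Skolem--Mahler--Lech applies after clearing denominators (as already discussed in the Remark following Proposition~\ref{Q->Z}, this does not change the zero set). Thus the proof is: form $\varphi_n$, observe it is an LRS with $Z=\{n:\varphi_n=0\}$, and quote the Skolem--Mahler--Lech theorem.
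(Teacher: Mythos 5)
Your proposal is correct and is essentially the paper's own argument: the paper derives this corollary simply by pointing to the proof of Lemma~\ref{Qhit->Skolem}, which constructs precisely the LRS $\varphi_n=\prod_j\sum_k h_{jk}(\Phi^nx_0)^2$ whose zero set is the hitting set, and then one applies the classical Skolem--Mahler--Lech theorem. Your remarks about needing an ordered field for the sum-of-squares step and about clearing denominators are sound but routine, and do not constitute a departure from the paper's reasoning.
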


\medskip

\textbf{Polyhedron localization problem} (PLP)
is a particular case of ODP when the family of chambers forms a closed
convex
polyhedron (i.e. a solution set of a system of nonstrict linear
inequalities).

\medskip

\begin{lemma}\label{Qcover->nonneg}
  The \textup{PLP} 
  is reducible to the nonnegativity problem.
\end{lemma}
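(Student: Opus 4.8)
The plan is to follow the scheme of Lemma~\ref{Qhit->Skolem}, only now a \emph{conjunction} of nonstrict inequalities has to be collapsed into the nonnegativity of one LRS; the right tool for this is \emph{interleaving} of sequences rather than taking sums and products.

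First I would record the polyhedron as a system $h_1(x)\ge0,\dots,h_m(x)\ge0$ of nonstrict linear inequalities with affine $h_i$ (an equality occurring in the input is rewritten as a pair of opposite nonstrict inequalities), so that the orbit of $x_0$ under $\Phi$ lies in the polyhedron precisely when $h_i(\Phi^n x_0)\ge0$ for all $i$ and all $n\ge0$. Splitting $h_i=\ell_i+c_i$ into a linear part and a constant and using the rationality of the generating function~\eqref{lin-func} with $y=0$ together with closure of LRS under addition (Section~\ref{alg-comb}), each $\varphi^{(i)}_n:=h_i(\Phi^n x_0)$ is an LRS whose generating function $F_i(t)=\sum_{n\ge0}\varphi^{(i)}_n t^n$ is rational and computable from the input.

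Next I would interleave these $m$ sequences into one sequence $\psi$ defined by $\psi_{mn+i-1}:=\varphi^{(i)}_n$ for $1\le i\le m$, $n\ge0$. Its generating function is $\Psi(t)=\sum_{i=1}^m t^{\,i-1}F_i(t^m)$, which is again rational and computable, so by the second property of Section~\ref{alg-comb} (Taylor coefficients of a rational function form an LRS) the sequence $\psi$ is an LRS whose defining data in the form~\eqref{LRSdef} can be written down effectively. Since every nonnegative integer has the unique shape $mn+i-1$ with $1\le i\le m$, the set of elements of $\psi$ is exactly $\{\varphi^{(i)}_n:1\le i\le m,\ n\ge0\}$; hence $\psi$ is nonnegative iff every $\varphi^{(i)}$ is, iff the orbit stays inside the polyhedron. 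Passing $\psi$ to the nonnegativity oracle thus solves the PLP, and --- as in the remark after Lemma~\ref{Qhit->Skolem} --- this is actually an $m$-reduction, even a polynomial one, since $\psi$ has degree at most $m$ times the degree of the characteristic polynomial of $\Phi$.

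I do not expect a genuine obstacle. The only steps needing a little care are the passage from linear to affine functions (handled by the splitting $h_i=\ell_i+c_i$) and the assertion that interleaving preserves the LRS property and is effective --- which is immediate from closure of rational functions under the substitution $t\mapsto t^m$ and multiplication by $t^{\,i-1}$, combined with the Taylor-coefficient property of Section~\ref{alg-comb}.
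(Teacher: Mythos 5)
Your proof is correct and follows the same route as the paper: both reduce PLP to nonnegativity by interleaving the $m$ sequences $h_j(\Phi^n x_0)$ into a single LRS whose generating function is $\sum_{j} t^{j} f_j(t^m)$ (your $\Psi(t)=\sum_{i} t^{\,i-1}F_i(t^m)$ up to an index shift), so that nonnegativity of the combined sequence encodes simultaneous nonnegativity of all constraints. The only minor difference is bookkeeping at the start: the paper normalizes the interior sign pattern to $(+1,\dots,+1)$ and handles any zero coordinates by a preliminary affine-subspace membership test, while you fold equalities into pairs of opposite nonstrict inequalities before interleaving.
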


\begin{proof}
W.l.o.g. we assume that the sign pattern of the interior of a polyhedron
is $(+1,\dots,+1)$ 
and  sign patterns for the faces of the polyhedron are obtained by
replacing some $1$'s  by $0$'s. 
Indeed, all negative components in a
sign pattern can be removed by a sign change. And all zero components
in the sign pattern of  the interior of a polyhedron
can be removed after checking that the orbit falls into (affine)
subspace resulting from the solution of a system of linear equations
(corresponding to zero components of the sign pattern). The last
problem is certainly decidable as the affine hull of the orbit
coincides with the affine hull of the first $d+1$ orbit points.

Now rearrange the sequences $\varphi_n(j)=h_j(\Phi^nx_0)$ into one
sequence $\varphi_n$ by consequently putting the elements of
$\varphi_n(j)$ into places with an index having residue $j$
modulo
  $m$, i.e. $\varphi_{sm+j}\stackrel{def}{=}\varphi_s(j),\, s\geq 0,
0< j<m$. 
The resulting sequence $\varphi_n$ is an LRS. Indeed, its generating 
function  $f(t)$ is rational as it satisfies the identity
 $$
  f(t)=\sum_{j=1}^m t^j f_j(t^m),
  $$
where 
for any $j=1,\dots,m$, a function $f_j(t)$  is a generating function for
$\varphi_n(j)$. 
 Hence, $f(t)$ is a finite sum of rational functions. 
\end{proof}

The Skolem problem is the weakest of all problems mentioned above.  To
decide the Skolem problem it is enough to enumerate the family of LRS
with nonnegative elements or the family of the PLP instances with
`yes' answers (due to $m$-reducibility in Lemma~\ref{Qcover->nonneg}).

The following proposition belongs to folklore.

\begin{prop}\label{reduction1}
If the family of LRS with nonnegative coefficients 
is recursively enumerable then 
the
Skolem problem is decidable.
\end{prop}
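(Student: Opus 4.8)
The plan is to reduce the Skolem problem to the problem of recognizing a "yes" instance of the nonnegativity problem, using the recursive enumerability hypothesis together with the obvious recursive enumerability of the complement of the Skolem problem. Concretely, given an integer LRS $\{x_n\}$, I would run two semi-decision procedures in parallel (dovetailing). The first one searches for a witness that $x_k=0$ for some $k$: it simply computes $x_0,x_1,x_2,\dots$ in turn and halts with answer "yes" as soon as it finds an index $k$ with $x_k=0$. This procedure halts precisely on the "yes" instances of the Skolem problem, so the set of such instances is recursively enumerable.

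The second procedure must semi-decide the "no" instances, i.e. those LRS that never vanish. Here is where the hypothesis enters. If $\{x_n\}$ never takes the value $0$, then by the Skolem–Mahler–Lech theorem its sign, though it may oscillate, is "eventually structured"; more directly, from $\{x_n\}$ one can algorithmically build auxiliary LRS whose nonnegativity is equivalent to $\{x_n\}$ being everywhere nonzero. The cleanest route is: form the LRS $\{x_n^2\}$ (a product of two LRS, hence an LRS by closure under Hadamard product, Section~\ref{alg-comb}), and then the LRS $\{x_n^2-1\}$ is \emph{not} quite what we want since $x_n^2$ need not be $\geq 1$. Instead I would note that $\{x_n\}$ is everywhere nonzero iff the sequence $y_n := x_n^2$ is everywhere strictly positive; and since the $x_n$ are integers, $x_n\neq 0$ iff $x_n^2\geq 1$, i.e. iff $y_n-1\geq 0$ for all $n$. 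The sequence $\{y_n-1\}$ is an LRS (subtract the LRS constantly equal to $1$), computable from the input. Thus the "no" instances of the Skolem problem for $\{x_n\}$ correspond exactly to the "yes" instances of the nonnegativity problem for $\{y_n-1\}$. If the family of nonnegative LRS is recursively enumerable, then enumerating it and watching for the appearance of $\{y_n-1\}$ gives a semi-decision procedure for the "no" instances.

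Running both semi-decision procedures in parallel on a given LRS, exactly one of them will eventually halt — the first if $x_k=0$ for some $k$, the second if no such $k$ exists — and we output the corresponding answer. This decides the Skolem problem, as claimed.

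The main obstacle is the passage from "$\{x_n\}$ never vanishes" to an equivalent \emph{nonnegativity} statement about an effectively constructible LRS; the trick above works only because the original LRS has integer coefficients, so that $x_n\neq 0$ is equivalent to the quantitative bound $x_n^2\geq 1$. Two further small points need care: that squaring genuinely yields an LRS with effectively computable defining data (immediate from the closure properties cited in Section~\ref{alg-comb}), and that the enumeration hypothesis is about membership in the family of nonnegative LRS rather than about a decision procedure — but recursive enumerability of a set of instances is exactly what a semi-decision procedure needs, so dovetailing the two enumerations is legitimate.
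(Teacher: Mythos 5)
Your proposal is correct and follows essentially the same route as the paper: form the auxiliary LRS $x_n^2-1$ (an LRS by closure under Hadamard product and difference), observe that its nonnegativity is equivalent to $\{x_n\}$ never vanishing (using integrality), and dovetail a search for a zero of $\{x_n\}$ against the given enumeration of nonnegative LRS. The paper phrases the dovetailing via Post's theorem, but the argument is the same; your brief detour about $x_n^2$ ``not quite'' working resolves immediately to the identical conclusion.
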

\begin{proof}
  If a sequence $x_n$ is a LRS then the sequence $x_n^2-1$ is also a LRS as
  the family of all LRS is closed under componentwise sum/difference and
product, see, Section~\ref{alg-comb}.  But the sequence $x_n^2-1$ has
nonnegative elements iff $x_n\ne0$ for all~$n$. 
To complete the proof we apply E.~Post theorem~\cite{ShV}. We
enumerate all LRS with nonnegative coefficients and compare them to
$x_n^2-1$.  In parallel we enumerate all LRS having zeroes and compare
them to $x_n^2-1$ also. One of these enumerations should eventually
stop.
\end{proof}

\section{CHP and regular languages}\label{Filter} 

In this section we relate
the
Skolem problem to some properties of regular
languages. To be more exact we define problems of \emph{regular
realizability\/}: whether a given regular language contains a word of
a particular kind.  It is convenient to describe the problem of
realizability as follows.  Let $L$ be a language in a finite alphabet
$\Sigma$.  Informally, $L$ encodes a {\em property} that is checked.
The input to the problem of $L$-realizability is a description of some
regular language~$R\subseteq\Sigma^*$ and we are asked whether the
intersection $L\cap R$ is nonempty.

\begin{remark}
Depending on a way of presentation of a regular language we obtain
different in general variants of $L$-realizability problem.  We assume
that $R$ is given via deterministic automaton accepting it. These
clarification is of course immaterial if we are interested in
decidability only, but may be important for complexity
estimates.
\end{remark}

\emph{The permutation filter} $P_{\mathbb B}$ 
is a language over the alphabet $\{\#,0,1\}$ of all words
$\#w_1\#{}w_2\#{}\dots w_N\#$, where $N=2^n,\, 
n\geq1$, 
and the set $\{w_i\},\, i=1,2,\dots,N$ 
consists of all  binary words of the length~$n$. 

Words from $P_{\mathbb B}$ are called \emph{permutation words\/}.
Informally, a permutaton word corresponds to a permutation
of a set of all binary words of some fixed length~$n$. This length is
called the \emph{block rank} of a permutation word.

\begin{theorem}\label{PermF=Skolem}
\textup{CHP}  and $P_{\mathbb B}$-realizability problem are Turing equivalent.
\end{theorem}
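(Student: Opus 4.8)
The plan is to establish the Turing equivalence by two separate reductions, each of which the paper announces will be carried out in later sections, so here I only sketch the architecture and flag the hard direction.

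\textbf{Direction 1: CHP reduces to $\fP$-realizability.} Starting from an instance of CHP — a matrix $\Phi$, a start vector $x_0$, affine functions $h_i$ and a sign pattern $s$ — I would first reduce, via Proposition~\ref{Skolem=Hit1} and the closure properties of LRS from Section~\ref{alg-comb} (sums, Hadamard products, squaring), the question of whether some orbit point lands in the chamber $H_s$ to a question about the sign behaviour of one or more LRS; the key point is that $h_i(\Phi^n x_0)$ is an LRS in $n$. Then I invoke Theorem~\ref{LR->2Auto} (the announced modification of \cite[Cor.~8.2]{SaSo}, i.e. Theorem~\ref{diff-th}) to write the relevant LRS as a difference $s_n(A_1)-s_n(A_2)$ of two regular counting sequences. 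The event ``$s_n(A_1)=s_n(A_2)$'' (or the appropriate inequality) is precisely the existence of a length-$n$ bijection between the words accepted by $A_1$ and those accepted by $A_2$; encoding such a bijection as a word in $\{\#,0,1\}^*$ listing the matched pairs, one builds a deterministic automaton $R$ whose accepted words are exactly the candidate ``permutation words'' that witness a hit. Intersecting $R$ with the permutation filter $\fP$ then encodes exactly the CHP instance, and the construction is polynomial-time — this is the content the paper attributes to Section~\ref{HCP}.

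\textbf{Direction 2: $\fP$-realizability reduces to CHP.} Given a deterministic automaton $R$ over $\{\#,0,1\}$, I want to decide whether $R$ accepts some word of the form $\#w_1\#\cdots\#w_N\#$ with $N=2^n$ and $\{w_i\}$ the set of all binary words of length $n$. The natural route (Subsection~\ref{route}) is to track, as the automaton reads a candidate word, the pair consisting of the automaton's state and a ``profile'' recording how many length-$n$ binary words with each prefix behaviour have been seen so far; iterating $R$'s transition structure gives a linear map, and the condition ``every binary word of length $n$ appears exactly once'' becomes a linear/affine condition on the accumulated counts — i.e. hitting a translate of an integral polyhedral cone given by generators. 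This is not yet an instance of CHP (which asks about hitting a rational chamber), so one must reduce cone-by-generators hitting to chamber hitting.

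\textbf{The main obstacle} is exactly this last step: passing from ``hit a translate of an integral cone described by generators'' to ``hit a rational chamber (intersection of halfspaces)''. The difficulty is that a cone's generator description and its halfspace description are not polynomially interchangeable in general, and worse, integrality constraints must be respected. The paper's plan (Subsection~\ref{Zhitsubsect}) is to handle first the simplicial case, where the generators are linearly independent and one can essentially invert the generator matrix, and then reduce the general integral cone to a finite union of a finite set and translates of simplicial integral cones via the integral Carathéodory-type result Theorem~\ref{FrobNdim}; the finiteness of that decomposition, together with the fact that CHP-style problems are closed under finite unions (as in Proposition~\ref{CHP=ODP} and the SHP remark), then glues everything together. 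I would therefore spend the bulk of the proof effort on Theorem~\ref{FrobNdim} and on the arithmetic bookkeeping needed to keep the simplicial reductions within the CHP framework, treating Direction~1 and the ``profile'' linear-algebra of Direction~2 as comparatively routine.
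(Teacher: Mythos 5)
Your sketch follows the paper's own architecture: CHP reduces to $\fP$-realizability by converting each constraint $h_i(\Phi^n x_0)$ to an LRS, applying Theorem~\ref{LR->2Auto} to write it as $s_{\ell n}(A)-s_{\ell n}(B)$, and building an automaton that certifies a bijection (or injection) between the length-$\ell n$ words of the two languages inside a permutation word; and $\fP$-realizability reduces back by passing to the transition monoid, expressing block counts as walk weights (WWHP), using Parikh's theorem to get an integer-cone hitting problem, and finally applying the integral Carath\'eodory decomposition of Theorem~\ref{FrobNdim} and the simplicial-cone analysis to land in CHP. You correctly identify Theorem~\ref{FrobNdim} as the crux; the one step you gloss over is that in the forward direction the several constraints must be certified simultaneously at a common block rank $\ell n$, which the paper handles by padding $\ell$, prefixing each block with a constraint index, and chaining the per-constraint automata into a single $\tilde C$.
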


To prove the theorem we construct reductions in both directions, but
as a matter of fact, their complexities are essentially different. 

In the next section we polynomially reduce the CHP to the
$P_{\mathbb B}$-realizability problem and obtain as a corollary
$NP$-hardness of the last problem.  This follows from the fact that 
the Skolem problem is $NP$-hard~\cite{BP02} and according to
proposition~\ref{Skolem=Hit1} 
it 
is polynomially reducible to CHP. 

\begin{remark}
  A different proof of $NP$-hardness of the $P_{\mathbb
  B}$-realizability problem can be derived from the properties of periodic
  filters, see~\cite{Vya09}.
\end{remark}

The reduction in the opposite direction is shown 
in Section~\ref{Filter->Qhit} but it takes superexponential time.

\section{Polynomial reduction of the CHP to\\
the  $P_{\mathbb B}$-realizability problem}\label{HCP} 

We briefly sketch the idea of the reduction.  First,
using~\eqref{lin-func} we associate a LRS with any linear constraint
(equality or inequality) in the description of a chamber.  Then by
proposition~\ref{Q->Z} we convert this LRS into an integral one.
According to Theorem~\ref{diff-th} this LRS can be represented as a
difference of two regular sequences. It turns out that this
construction is not enough to establish polynomial reduction and we
elaborate it slightly and represent the integral LRS involved as a
difference of two regular sequences on some explicit arithmetic
progression of its indices. The last representation can be computed in
polynomial time.  Next we find a special automaton that compares
numbers of words of specified length in two regular languages provided
the input is a permutation word. This trick gives a reduction of the Skolem
problem to the $P_{\mathbb B}$-realizability problem. To finish the proof
we need an additional construction converting several automata
described above into one.

We proceed to the proof.

At first note that for any square matrix $A$ of order $d$,
$d$-dimensional vector $x_0$ and a linear function $h$ having rational
coefficients there is a polynomial time algorithm to recover the
coefficients of LRS for the sequence $h(A^nx_0)$.  Indeed, the degree
of the LRS involved does not exceed the order of the matrix and,
hence, the coefficients of LRS and the initial data can be determined
if we solve the corresponding system of linear equations for the first
$2d$ elements of the sequence.

In this section  we assume that all LRS involved have integer coefficients
and integral 
initial data. This is possible due to
Proposition~\ref{Q->Z} as we are interested in signs of the expressions  
$h(A^nx_0)$ only. 

Next we express a LRS as a sum of weights of the walks in a digraph
with fixed start and finish vertices.  Let $\Gamma(V,E)$ be a digraph
(loops and parallel edges allowed) and let $c\colon E\to{\mathbb Z}$
be a weight function for edges of $\Gamma$.  A~weight of a walk is a
product of weights over the edges of the walk.

\begin{lemma}\label{LR->GW}
  There exists a polynomial time algorithm that takes an integer LRS
  $\{x_n\}$ and outputs a weighted digraph $G_x$ with two marked
  vertices $s$ and $f$, such that the sum of weights over all walks of
  the length~$n$ starting at the vertex~$s$ and finishing at the
  vertex~$f$ equals~$x_n$ for all~$n\geq1$.
\end{lemma}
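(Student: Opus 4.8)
The plan is to realize the LRS $\{x_n\}$ as the $(s,f)$-entry of the $n$-th power of a weighted adjacency matrix, i.e. to encode the recurrence as a companion-type digraph and then splice in a small gadget that produces the correct initial values $x_1,\dots,x_d$. Recall from~\eqref{LRSdef} that $x_n=\sum_{i=1}^d a_i x_{n-i}$ for $n>d$, with prescribed $x_n=b_n$ for $1\le n\le d$, and that by the remark following Proposition~\ref{Q->Z} we may assume all $a_i$ and $b_n$ are integers given in binary. The construction must run in polynomial time, so the digraph is allowed only polynomially many vertices and edges, each carrying a weight whose binary length is polynomial in the input; this rules out naive unary-type tricks but is easily met by a companion construction.

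First I would build the \emph{shift part}: vertices $v_1,\dots,v_d$, with an edge of weight $a_i$ from $v_1$ to $v_i$ for each $i$ (so $v_1$ plays the role of ``current term'') and an edge of weight $1$ from $v_i$ to $v_{i-1}$ for $2\le i\le d$ (shifting the window). A walk of length $k$ from $v_1$ back to $v_1$ in this subgraph has weight equal to the coefficient of $x_{n-k}$ when $x_n$ is unfolded $k$ steps, so powers of this adjacency matrix reproduce the recursion exactly once we are past the initial segment. Next I would attach an \emph{initialization part}: a source $s$ together with a short directed path $s=u_0\to u_1\to\cdots$ (or a small set of parallel weighted edges from $s$ into the $v_i$'s) chosen so that the total weight of length-$n$ walks from $s$ to the marked finish vertex $f:=v_1$ equals exactly $b_n$ for $1\le n\le d$, after which control has entered the shift part and the recurrence takes over. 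Concretely, one clean way is to let $f=v_1$, add edges $s\to v_i$ of suitable weight for each $i$, and verify by an easy induction on $n$ (using that $A^k(b_d,\dots,b_1)^T=(x_{k+d},\dots,x_{k+1})^T$ from the proof of Proposition~\ref{Skolem=Hit1}) that the walk sums match $x_n$ for all $n\ge 1$; any off-by-one in the indexing is absorbed by a single extra edge from $s$.

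The routine part is the induction verifying that the weighted walk-count from $s$ to $f$ of length $n$ equals $x_n$: for $n>d$ this is immediate from the shift structure, and for $1\le n\le d$ it is a finite check built into the initialization gadget. The only genuine obstacle is getting the boundary conditions exactly right \emph{and} keeping everything polynomial: one must make sure the initialization gadget does not accidentally contribute to walks of length $>d$ (so it should be ``absorbing'' in the right direction, e.g.\ no edges returning from the shift part to $s$) and that its edge weights have polynomial bit-length — which they do, since they are $\ZZ$-linear combinations of the $b_n$ and $a_i$ with small integer coefficients. I expect the bulk of the write-up to be this careful bookkeeping of indices and the verification that the gadget is harmless for large $n$; the algebraic content is just that the companion matrix generates the LRS.
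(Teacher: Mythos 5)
Your proposal is correct and uses essentially the same idea as the paper's proof: set the recurrence coefficients $a_i$ as weights on cycles through the finish vertex so that long walks obey the recurrence, and solve a triangular system for the weights on the edges out of $s$ to match the initial values $b_1,\dots,b_d$ (giving exactly the paper's $p_i = b_i - \sum_{j<i} a_j b_{i-j}$). The only structural difference is cosmetic: you place the cycles on a compact $O(d)$-vertex companion-matrix digraph, whereas the paper uses $d$ vertex-disjoint $s$-to-$f$ paths plus $d$ vertex-disjoint cycles through $f$ ($O(d^2)$ vertices); both layouts admit the same unique decomposition of every $s$-to-$f$ walk into an initial segment followed by a sequence of cycles (in yours because leaving $v_1$ for $v_i$ forces a return in exactly $i$ steps), which is what makes the triangular system apply.
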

\begin{proof} Take a LRS~\eqref{LRSdef} and construct a digraph
$G_x$ as follows. (See Fig.~\ref{LRgraph}.)

\begin{figure}
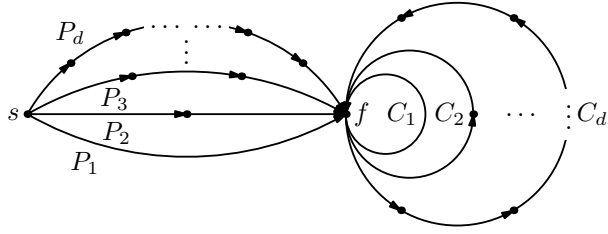

  \centerline{\mpfile{ga}{1}}
  \caption{A graph for a LRS}\label{LRgraph}
\end{figure}

  Connect the start vertex $s$ and the finish vertex $f$ by $d$
  vertex-disjoint 
(except for the terminals)
paths $P_1$, \dots, $P_d$ such that 
the edge
length of
  $P_i$ is~$i$. Add $d$ vertex-disjoint 
(except for the terminal $f$) 
directed
cycles $C_1$, \dots, $C_d$,
  passing through the vertex~$f$. 
The 
edge
length of~$C_i$ is also~$i$.
  The cycles $C_i$ have no common vertices with 
the
paths $P_j$ except for
  the vertex~$f$.

Now set edge weighting of $G_x$.
All edges of $G_x$ except for the first edges of all
paths $P_i$ and cycles $C_j$ 
has weight~$1$.
Set the weight of the first (outgoing from $s$) edge  on the path $P_i$ 
to $p_i,\, i=1,\dots,d$,
and set the  weight of the 
first (outgoing from $f$) edge on the cycle $C_i$ to $q_i,\, i=1,\dots,d$.
  
  Let~$z_n$ be the sum of weights over all walks of the length~$n$
  from~$s$ to $f$. There is a recurrence on~$z_n$.
  Note that for $n\leq d$ there exists a walk of the length~$n$ from~$s$ to
  $f$ along a
  path of the digraph and for $n>d$ any walk of the length $n$ is
  obtained from a shorter one by adding a cycle. Thus
  \begin{align*}
    z_i &= p_i +\sum_{j=1}^{i-1}q_{i-j} z_{j}, &&\text{for } 1\leq
    i\leq d;\\
    z_n &=\sum_{j=1}^{d}q_{j} z_{n-j}, &&\text{for } n>d.
  \end{align*}
  To ensure equalities $z_n=x_{n}$ 
  set
  \begin{equation*}
    q_{j}=a_j.
  \end{equation*}
  For $1\leq i\leq d$ equalities $z_i=b_i$ imply
  \begin{align*}
    p_1 &= b_1;\\
    p_2 &= b_2 - q_{1}b_1;\\
    p_3 &= b_3 - q_{1}b_2- q_{2}b_1;\\
    &\dots
  \end{align*}

  To complete the proof note that all computations above  take a
  polynomial time.
\end{proof}

\begin{lemma}\label{Graph-numerical}
  There exists a polynomial time algorithm that takes positive
   integers~$n$ (in binary) and $k$ (in unary) and outputs a digraph,
   a start vertex~$s$ and a finish vertex~$f$ such that the number of
   paths of the length~$k$ from~$s$ to $f$ equals~$n$ provided
   $k>\log_2n$.
\end{lemma}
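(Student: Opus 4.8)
## Proof plan for Lemma~\ref{Graph-numerical}

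The plan is to realize the integer $n$ in binary notation by counting length-$k$ walks through a small gadget, where the high-order structure of the gadget encodes the binary expansion $n=\sum_{i} b_i 2^i$. First I would write $n$ in binary as $n = \sum_{i=0}^{m} b_i 2^i$ with $m=\lfloor\log_2 n\rfloor \le k-1$; the input size is polynomial in $\log n + k$, so $m$ and all the $b_i$ are available in polynomial time. The building block is a ``doubling'' gadget: a small digraph between two vertices with exactly $2$ distinct paths of some fixed length $\ell$ between them (e.g. two parallel $\ell$-edge paths sharing only endpoints, or a short diamond padded out to length $\ell$). Chaining $j$ such gadgets in series gives exactly $2^j$ paths of the appropriate total length from the entrance to the exit.

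Next I would assemble the full graph. Build a ``spine'' that passes through checkpoints $v_0, v_1, \dots, v_m$, where the segment from $v_{i-1}$ to $v_i$ is one doubling gadget, so that the number of paths from $v_0$ to $v_i$ of the proper length is $2^i$. Attach to each checkpoint $v_i$ with $b_i=1$ a branch leading to the common finish vertex $f$; a path that ``exits'' at $v_i$ contributes a factor $2^i$, and making the exit branches mutually length-consistent forces every $s$–$f$ walk of length $k$ to choose exactly one exit point $i$ with $b_i=1$, contributing $2^i$ paths. Summing over the exit points gives $\sum_{i:\,b_i=1} 2^i = n$ paths of length $k$ from $s$ to $f$. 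The one genuinely fiddly point is length bookkeeping: I must pad the gadgets and the exit branches (using length-$1$ chains, i.e. subdivided edges, or dummy loops is not allowed since we count \emph{paths}, so pure subdivision) so that \emph{every} $s$–$f$ walk of length exactly $k$ corresponds to choosing one valid exit $i$ and no ``short'' or ``long'' spurious walks of length $k$ exist. Here the hypothesis $k > \log_2 n \ge m$ is exactly what guarantees enough length budget to route from $s$ through the spine to any checkpoint $v_i$ and then to $f$ in total length $k$ (one tunes the lengths so the fixed spine-plus-exit length is constant, say $k$, independent of $i$, by giving the lower checkpoints correspondingly longer exit branches).

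The main obstacle I anticipate is precisely this uniform-length requirement combined with the restriction to counting \emph{paths} (simple, or at least length-$k$ walks that are automatically injective here) rather than walks: one cannot freely insert loops to adjust lengths, so the padding must be done with vertex-disjoint subdivided segments, and one must check that distinct exit choices never accidentally yield coinciding vertex sets or that a single branch is not traversable in two ways of the same length. I would handle this by making all gadgets and branches internally vertex-disjoint except at the designated junction vertices $v_i$, $s$, $f$, so that a length-$k$ walk is forced to be a genuine path and its combinatorial type is determined by the sequence of junctions it visits; then the count is a clean product-and-sum as above. Finally I would observe that the whole construction has size polynomial in $m+k = O(\log n + k)$ and is computable in polynomial time, which completes the proof.
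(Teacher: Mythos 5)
Your proposal is correct and uses essentially the same idea as the paper: parallel edges (or length-$\ell$ doubled paths) generate $2^i$ walks, padding paths normalize every route to total length $k$, and the bits of $n$ select which powers of two to sum. The only cosmetic difference is that you run a single shared spine with exit branches at the nonzero bit positions, whereas the paper builds a disjoint ``thread'' (doubled path plus padding) for each nonzero bit and identifies all their starts into $s$ and all their finishes into $f$; both are polynomial-size and the walk/path count works out identically.
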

\begin{proof}
  Let  $L_t$ be a doubled directed path having $t+1$ vertices
  $\{1,2,\dots,t+1\}$ and 
pairs of parallel edges $(i,i+1)$ between any consecutive
  vertices $i$ and $i+1,\, i=1,\dots,t$.  Clearly, there are exactly $2^t$
  walks of the length~$t$ between vertices $1$ and $t+1$.

A \emph{thread\/} $L_j'$ is a digraph obtained from $L_j$ by attaching
a directed path of the length~$k-j$ to its finish vertex $j+1$.  The
start and the finish vertices of a thread are, respectively, the start
vertex of $L_j$ and the finish vertex of a path.

Let $\{j_1,\dots,j_l\} $ be the positions of nonzero bits in 
the binary representation~$n$. Take the threads $L_{j_i},\, i=1,\dots,l$ 
and identify their start vertices and, respectively, their
finish vertices into global start vertex $s$ and global finish vertex $f$. 
By construction there are exactly~$n$ paths of the length $k$ from~$s$ to~$f$ in
  the resulting graph.

  The case of $n=11$, $k=4$ is
  illustrated in Fig.~\ref{n-repr}.

\begin{figure}[h]
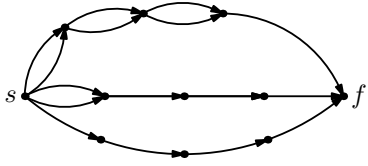

  \centerline{\mpfile{ga}{2}}
  \caption{Construction of Lemma~\ref{Graph-numerical} for $n=11$, $k=4$}\label{n-repr}
\end{figure}

Evidently the algorithm is polynomial.
\end{proof}

Now we are able to prove a variation of Theorem~\ref{diff-th}.

\begin{theorem}\label{LR->2Auto}
  There exists a polynomial time algorithm that 
  takes an integer LRS $x_n$ and outputs an integer~$\ell$ 
that 
is polynomially bounded in the
  input length (i.e. the total binary length of all coefficients and
  the initial data of the LRS)
and two
  deterministic automata $A$, $B$ over the alphabet~$\{0,1\}$ such
  that  $x_{n}=s_{\ell{}n}(A)-s_{\ell{}n}(B)$ for all~$n\geq1$. 
\end{theorem}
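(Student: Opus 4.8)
The plan is to combine Lemma~\ref{LR->GW} and Lemma~\ref{Graph-numerical} into a single construction, paying attention to the one place where the naive approach breaks. Starting from the integer LRS $x_n$, Lemma~\ref{LR->GW} produces a weighted digraph $G_x$ with start $s$, finish $f$ and integer edge-weights such that $x_n$ equals the sum of walk-weights over all $s$--$f$ walks of length $n$. The obvious idea is to split the weights into positive and negative parts, replace each weighted edge by a bundle of parallel unit-weight edges (as many as the absolute value of the weight), and so realise $x_n$ as a difference $N^+_n - N^-_n$ of walk-counts in two unweighted digraphs, which are essentially nondeterministic automata over a one-letter alphabet; determinising and recoding gives automata $A,B$ over $\{0,1\}$ with $x_n = s_n(A)-s_n(B)$. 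The difficulty is that the weights $p_i,q_i$ of $G_x$ are integers written in binary, so a weight can be exponentially large in the input length, and replacing it by that many parallel edges is not a polynomial-time step.

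The fix, and the reason the statement speaks of an arithmetic progression $\ell n$ rather than all $n$, is to \emph{encode each large weight in binary along an auxiliary path segment} using Lemma~\ref{Graph-numerical}. Concretely I would choose $\ell$ to be (a polynomial bound on) the number of bits needed to write the largest weight appearing in $G_x$, rounded up, so that $\ell$ is polynomial in the input length. Then, for each edge of $G_x$ carrying weight $w$ (with, say, $|w| \le 2^{\ell}$), I replace that single edge by the gadget of Lemma~\ref{Graph-numerical} with parameters $n = |w|$ and $k = \ell$: this is a small digraph with $|w|$ distinct $s$--$f$ paths of length exactly $\ell$, computable in time polynomial in $\ell$ and hence in the input. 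Unit-weight edges get replaced by a plain directed path of length $\ell$. After this substitution, an $s$--$f$ walk of length $n$ in $G_x$ becomes, in a bijective and weight-preserving way, a collection of $|w_1|\cdots|w_r|$ walks of length $\ell n$ in the new digraph; the sign of the product of the original weights is tracked by routing the walk through a "positive copy" or a "negative copy" of the gadget according to the parity of the number of negative-weight edges used. Summing, the number of $s$--$f$ walks of length $\ell n$ in the positive digraph minus that in the negative digraph equals $x_n$.

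Finally I would turn these two weighted-free digraphs into deterministic automata. Reading an $s$--$f$ walk of length $\ell n$ as a word of length $\ell n$ over a suitable alphabet and then binary-encoding that alphabet, one obtains nondeterministic automata whose length-$(\ell n)$ accepted-word counts are $N^+_n$ and $N^-_n$; the standard subset construction makes them deterministic while preserving the number of accepted words of each length (determinisation does not change the accepted language, only its presentation, and $s_m(\cdot)$ depends only on the language). One must be slightly careful that the binary recoding multiplies lengths by a further constant, which I fold into the constant $\ell$, and that the subset construction may blow up the state count — but the theorem only asks for $\ell$ to be polynomially bounded, not the automata, so an exponential number of states is acceptable here (and in fact is handled more carefully in the downstream reduction). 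The main obstacle, then, is purely the binary-size issue of the weights, and the arithmetic-progression trick via Lemma~\ref{Graph-numerical} is exactly what resolves it; everything else is bookkeeping of walk-counts and a routine determinisation.
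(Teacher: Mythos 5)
Your construction is essentially the paper's: start from the weighted digraph of Lemma~\ref{LR->GW}, expand each heavy-weight edge via the gadget of Lemma~\ref{Graph-numerical} (stretching every walk by a common factor~$\ell$ of polynomial bit-length), track the sign of a walk by a parity bit, and extract two automata whose accepted-word counts at length~$\ell n$ differ by~$x_n$. But the last paragraph contains a genuine gap. You claim ``an exponential number of states is acceptable here'' because the theorem bounds only~$\ell$. It does not: the theorem asserts a \emph{polynomial time algorithm} that \emph{outputs} the two deterministic automata, and a polynomial-time algorithm cannot write down an exponential-size output. Polynomial size of $A,B$ is in fact the entire content of the theorem (Theorem~\ref{diff-th} already gives the unrestricted statement), and it is exactly what is used downstream to obtain the NP-hardness of \reg(\fP) in Theorem~\ref{NPhardness}. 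So appealing to the subset construction, with an admitted risk of exponential blow-up, breaks the statement as written.

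The repair is to avoid nondeterminism entirely rather than to determinize afterward, and in fact your intermediate object already wants to be deterministic. For the walk count $N^{\pm}_n$ to equal an accepted-word count in the first place, you need the map from walks to labelled words to be injective, i.e.\ distinct outgoing edges of each vertex must carry distinct labels; once that holds the machine is already a (partial) DFA over its alphabet, and no subset construction is involved. What remains is to achieve such a deterministic labelling over $\{0,1\}$ with only polynomially many states. The paper does this by first normalizing the unweighted digraph so that every vertex has fan-out exactly~2: the threads from Lemma~\ref{Graph-numerical} have fan-out $\le 2$ by construction, remaining edges are stretched into paths (fan-out~1), the large fan-outs at $s$ and $f$ (at most $2Md$) are absorbed into binary trees of depth $k = O(\log(Md))$ attached there, and a sink vertex with two self-loops soaks up the spare transitions. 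The $\pm1$ parity is then carried in the state, so the state set is $\{\pm1\}\times V(G)$, with $A$ accepting at $(+1,f)$ and $B$ at $(-1,f)$. Your informal ``positive copy / negative copy'' routing is the same idea, but it has to be built into the DFA explicitly rather than invoked as a post-processing step; your ``large alphabet then binary re-encode'' route can also work, provided you keep the per-vertex labels distinct and verify the alphabet size (hence the encoding factor) stays polynomial.
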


\begin{proof}
  The first step is described in the proof of Lemma~\ref{LR->GW}. 
  As a result we obtain a digraph~$H$. 

  Then we choose integers 
  $M>\log_2\max\{|a_1|,\dots,|a_{d}|,|b_1|,\dots,|b_{d}|\}$ and
  $k>\log_2(2Md)$.
  Now the integer $\ell$ to output is set to $M+k$.

  At the next step we apply Lemma~\ref{Graph-numerical} and transform
  the digraph~$H$ to a digraph satisfying two additiional properties:

1)  the fan-out of each vertex  
is~$2$;

2)  the 
  absolute value of each weight is~$1$.

  The transformation consists of the following steps.

  1. For each integer $a_i$ ($b_i$)
   apply Lemma~\ref{Graph-numerical} to construct  a digraph
  $A_i$ ($B_i$) 
   having exactly $|a_i|$ ($|b_i|$) paths of the length $M$ from
   the start  to the finish.

  2.  Replace the first edges of all cycles $C_i$ (respectively, of
 all paths $P_i$) of the graph~$H$ by pasting in the graphs $A_i$
 (respectively, $B_i$). Namely, the start vertex of a graph $A_i$
 ($B_i$) is identified with the initial vertex of an edge and the
 finish vertex is identified with the end vertex of the edge.
  
  Other edges of the graph $H$ are replaced by directed paths of
  the length~$\ell$.

  Fan-outs of all vertices in the resulting digraph $H'$ are at
  most~$2$ except for the start vertex $s$ and for the finish
  vertex~$f$. Note that fan-outs of the start and the finish vertices
  are at most $2Md$.

  3. Attach to the start and to the finish vertex of the graph~$H'$ 
  rooted directed binary trees of depth~$k$ having
  $2Md$~leaves. Replace the starting vertex of each edge outgoing the
  start vertex (the finish vertex) by a leaf of the corresponding
  tree in such a way that the fan-out of each vertex in the resulting graph
  $H''$ is at most~$2$.

\begin{figure}
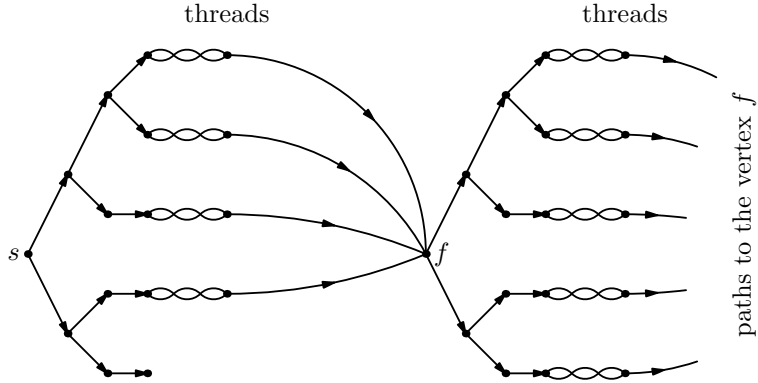

  \centerline{\mpfile{ga}{3}}
  \caption{A graph $H''$}\label{H'}
\end{figure}

  As a result we obtain a modified digraph~$H''$ (see Fig.~\ref{H'}). 
  It has vertices with
  the fan-out less than~$2$. Add an auxillary vertex $u$ and edges to the~$u$
  from all vertices with the fan-out less than~$2$ in the digraph~$H'$
  to make all fan-outs equal~$2$. Then add
  two loops at the vertex~$u$. After these operations we obtain a 
  digraph~$G$.

  Now we assign weights to the edges of the  digraph~$G$.
  The weight of an~edge ingoing to the start vertex of a thread of a
  digraph~$A_i$ ($B_i$) coincides with the sign of the corresponding
  integer $a_i$ ($b_i$). All other edges have  weights~$1$.

  It is clear from the construction that in the digraph~$G$ the sum of
  weights over all walks of the length $\ell{}n$ from $s$ to $f$
  equals~$x_{n}$.
  
  Now we construct  automata $A$ and $B$ over the alphabet
  $\{0,1\}$. For both of them the state set is $\{\pm1\}\times
  V(G)$. Choose a labelling of the edges of the
  digraph~$G$ by  0 and 1 such that for each vertex edge labels of the
  two outgoing edges are different. (Recall that the fan-out of any
  vertex of $G$ is~2.)

  Let's describe the automaton~$A$.

  The initial state of the~$A$ is $(+1,s)$ and the only accepting
  state is $(+1,f)$. Reading a symbol~$\alpha$ in the state $(\sigma,w)$
   the automaton~$A$ goes to the state~$(\sigma', w')$ if the edge
  $(w,w')$ in the digraph~$G$ has label $\alpha$. If the edge
  $(w,w')$  has positive
  weight then $\sigma'=\sigma$. Otherwise, $\sigma'=-\sigma$.
  The rule is illustrated in Fig.~\ref{AutP}.

\begin{figure}
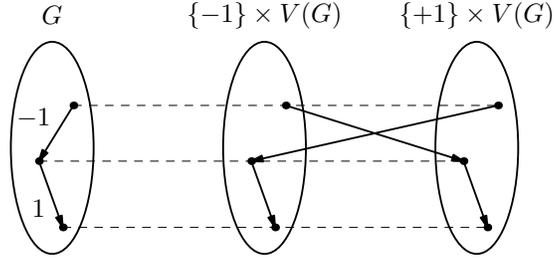

  \centerline{\mpfile{ga}{4}}
  \caption{Counting the sign}\label{AutP}
\end{figure}

  The automaton~$B$ differs from the automaton~$A$ in the accepting
  state only. The accepting state of $B$ is~$(-1,f)$. 

  It is clear that the lengths of all words accepted by the
  automata~$A$ and $B$ are  multiples of~$\ell$.

  Note that both automata can be constructed from the initial LRS in
  polynomial time. 

  To finish the proof we note that $x_n$ equals the difference between
  the number of words of the length $\ell{}n$ accepted by the automata
  $A$ and $B$. Indeed, the walks of negative weight correspond to the
  words accepted by the~$B$ while the walks of positive weight
  correspond to the words accepted by the~$A$.
\end{proof}

Now we are ready to present a polynomial reduction of the Skolem
problem to the $P_{\mathbb B}$-realizability problem.

Theorem~\ref{LR->2Auto} implies that for the reduction it is sufficient
to build up an algorithm running in polynomial time such that takes a
pair of automata $A$, $B$ over the alphabet $\{0,1\}$ and an integer
$\ell$ in unary and outputs the description of an automaton~$C$ such
that  $L(C)\cap P_{\mathbb B}\ne\varnothing$ iff 
$s_{\ell{}n}(A)=s_{\ell{}n}(B)$ holds for some  $n$.

Let informally explain how to construct such an automaton.  
The automaton $C$ is made of two automata $C'$ and $C''$ by the
product construction. 
The automaton $C$ accepts iff $C' $ and $C''$ accept.  

Recall
that we are interested in operation of automata on permutation words,
i.e., words of the type $\#w_1\# w_2\#\dots w_k\#$, where
$w_i\in\{0,1\}^*,\, i=1,\dots,k$.  

The block rank (the length of
each $w_i$) should be a multiple of $l$.  This last condition is
verified by the automaton~$C''$. It counts the length of $w_i$ modulo
$l$ and rejects in the case of a nonzero residue. Otherwise it
accepts.

The automaton $C'$ starts from an accepting state $q_0$ and reads the
current block word $w_i$ separated by the delimiters $\#$.

If $w_i$ belongs to $L_{AB}=L(A)\setminus L(B)$ then $C'$ reads the next block
word $w_{i+1}$ and checks whether it belongs to $L_{BA}=L(B)\setminus
L(A)$. If the last check fails then $C'$ rejects, otherwise it returns
to the state $q_0$.

If $w_i$  belongs to $L_{BA}$ then $C'$ rejects.

If $w_i$  belongs to
$L_\sim=(L(A)\cap L(B)) \cup \overline{L(A)\cup L(B)}$ then $C'$ passes 
to a new accepting state $q_1$. 

Starting from the state $q_1$ the automaton $C'$ reads the
current block word~$w_i$. If the block belongs $L_{AB}\cup L_{BA}$
then $C'$ rejects. Otherwise it returns to the state $q_1$.

The structure of the automaton $C'$ is pictured in Fig.~\ref{Auto=}.

\begin{figure}
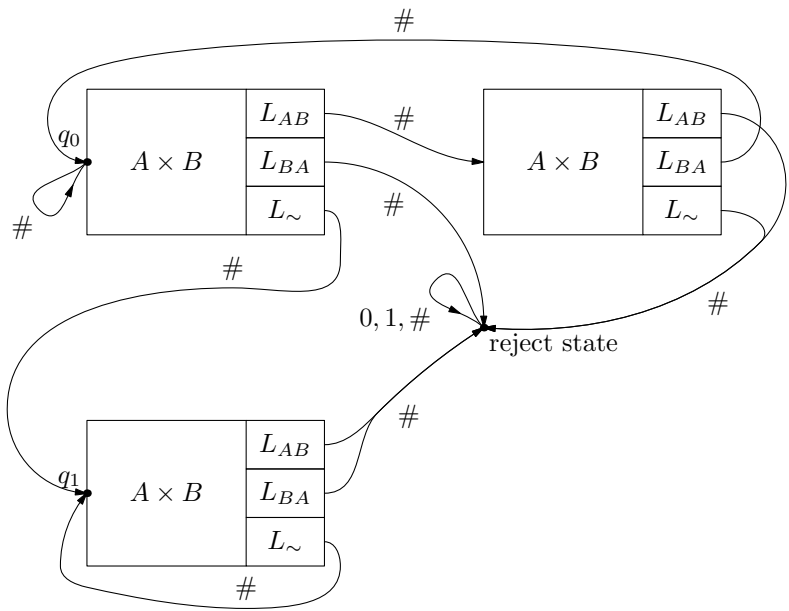

  \centerline{\mpfile{auto}{1}}
  \caption{The structure of the automaton $C'$. The initial state is
  $q_0$.  The only accepting states
  are $q_0$,~$q_1$}\label{Auto=}
\end{figure}

Let prove that $C$ gives the required reduction of the 
Skolem problem to the $P_{\mathbb B}$-realizability problem.

First we must show that if an instance of the Skolem problem has a
positive answer then $C$ accepts at least one word from
the permutation filter. 

Indeed, assume that some $x_n$ vanishes. Then
$s_{\ell{}n}(A)=s_{\ell{}n}(B)$ and there is a one-to-one
correspondence between the words of the length~$\ell{}n$ in $L_{AB}$
and the words of the length~$\ell{}n$ in $L_{BA}$.  A required
permutation word $W$ is obtained by arranging corresponding words in
pairs (other words of the length~$\ell{}n$ are arranged arbitrary
after all pairs).
By construction $C$ accepts~$W$.

On the other hand, assume that $C$ accepts some word $W$ from the
permutation filter. It follows from the construction that the block
rank of $W$ is a multiple of $\ell$. We can write $W$ in the form 
$W_1 W_2$. Here
$W_1$ is prefix of $W$ that is a concatenation of $\#w_i\# w_{i+1}$,
where $w_i\in L_{AB}$ and $w_{i+1}\in L_{BA}$.
And $W_2$ is a suffix of $W$ that is concatenation of $\#w_i$,
where $w_i\in L_\sim$. Thus,  $|L(A)\setminus L(B)|= |L(B)\setminus
L(A)|$. It means that $x_n=0$ and the instance of the Skolem problem
has a positive answer.

Clearly the size of the automata $C$ is polynomial in the size of the LRS.
This finishes the proof of the reduction.

From  the results of this section and the results from~\cite{BP02} we get
an immediate corollary.

\begin{theorem}\label{NPhardness}
  The problem of $P_{\mathbb B}$-realizability is \textup{NP}-hard.
\end{theorem}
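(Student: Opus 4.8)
The plan is to chain together the two reductions that have just been established in this section and the preceding one. The goal statement, Theorem~\ref{NPhardness}, asserts NP-hardness of the $P_{\mathbb B}$-realizability problem, so it suffices to exhibit a polynomial-time reduction from some known NP-hard problem to it. The natural candidate for the source of hardness is the Skolem problem, which is NP-hard by the result of~\cite{BP02}.

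First I would recall that by Proposition~\ref{Skolem=Hit1} the Skolem problem is (polynomially) reducible to the CHP, so it is enough to reduce the CHP, or even directly the Skolem problem, to the $P_{\mathbb B}$-realizability problem in polynomial time. Second, I would invoke the construction carried out in the body of this section: combining Theorem~\ref{LR->2Auto} (which turns an integer LRS $x_n$ into two deterministic automata $A$, $B$ and an integer $\ell$, polynomially bounded, with $x_n = s_{\ell n}(A) - s_{\ell n}(B)$) with the product automaton $C = C' \times C''$ described afterwards, we obtain in polynomial time an automaton $C$ with $L(C)\cap P_{\mathbb B}\ne\varnothing$ iff $s_{\ell n}(A) = s_{\ell n}(B)$ for some $n$, i.e.\ iff $x_n = 0$ for some $n$. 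This is precisely a polynomial-time many-one reduction from the Skolem problem to the $P_{\mathbb B}$-realizability problem.

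Putting these together: given an instance of the Skolem problem (an integer LRS), apply the polynomial algorithm above to produce the automaton $C$; then the Skolem instance is a `yes' instance exactly when $L(C)\cap P_{\mathbb B}\ne\varnothing$. Since every step runs in polynomial time and the NP-hard Skolem problem thereby reduces to $P_{\mathbb B}$-realizability, the latter is NP-hard, which is what the theorem claims.

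There is no real obstacle here, since all the work has already been done: the only thing to check is that the claimed time bounds in Theorem~\ref{LR->2Auto} and in the subsequent product construction are genuinely polynomial (in particular that $\ell$ is given in unary of polynomial size, so that feeding it to $C''$ costs only polynomial time), and that the NP-hardness of Skolem is stated for the same input encoding (binary coefficients) used throughout. Modulo those bookkeeping remarks, the theorem follows immediately as a corollary of the preceding development.
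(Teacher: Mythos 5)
Your proposal is correct and takes essentially the same route as the paper: Theorem~\ref{NPhardness} is stated as an immediate corollary of the polynomial-time reduction constructed in Section~\ref{HCP} (via Theorem~\ref{LR->2Auto} and the product automaton $C$) together with the NP-hardness of the Skolem problem from~\cite{BP02}, exactly as you spell out. Your bookkeeping remarks about $\ell$ being polynomially bounded and presented in unary, and about the input encoding, are the right things to check and are indeed satisfied by the construction.
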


Now we extend  the previous result to the general chamber hitting
problem. 

The construction of the automaton $C$ can be easily modified to accept
a permutation word satisfying the condition $s_{\ell{}n}(A)<
s_{\ell{}n}(B)$.  In the notation above to be accepted the suffix
$W_2$ of a permutation word should contain at least one more occurence
of a word from $L(B)\setminus L(A)$.

The counter part of the modified automaton $C_<$ does not change.
The structure of the automaton $C'_<$ is pictured in Fig.~\ref{Auto<}.

\begin{figure}
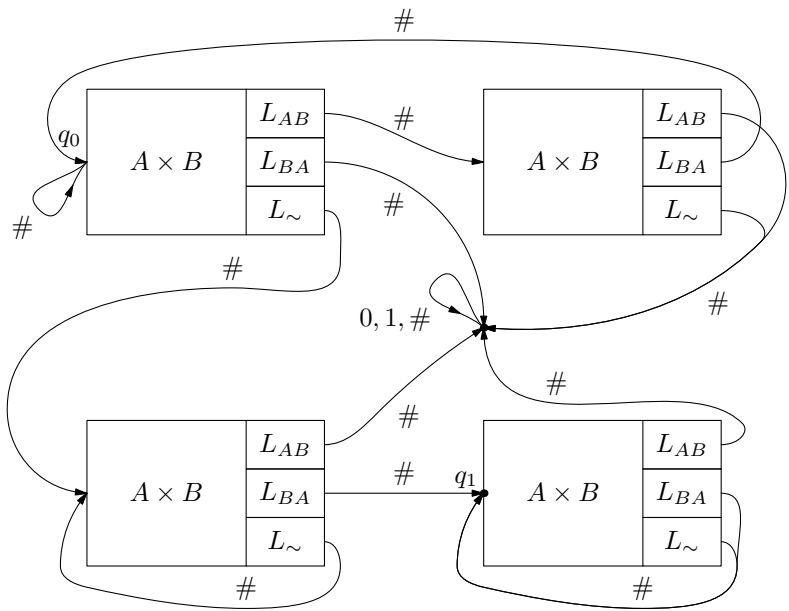

  \centerline{\mpfile{auto}{2}}
  \caption{The structure of the automaton $C'_<$. The initial state is
  $q_0$. The only accepting state is~$q_1$}\label{Auto<}
\end{figure}

It is evident that the size of the
modified automaton $C_<$ is upperbounded by the size of the automaton
$C$ up to a constant factor.

To complete a reduction of the chamber hitting problem to the
$P_{\mathbb B}$-realizability problem we construct an automaton that
 checks several conditions of the form\linebreak
$s_{\ell{}n}(A)=s_{\ell{}n}(B)$ ($s_{\ell{}n}(A)< s_{\ell{}n}(B)$).

Repeating the previous argument let's construct for each constraint
$h_i(\Phi^nx_0)<0$ (or $h_i(\Phi^nx_0)=0$) from the description of the
chamber a pair of automata $A_i$, $B_i$ and an integer $\ell$ common
to all pairs $A_i$, $B_i$ such that a constraint $h_i(\Phi^nx_0)<0$
($h_i(\Phi^nx_0)=0$) holds iff an inequality $s_{\ell{}n}(A)<
s_{\ell{}n}(B)$ (an equality $s_{\ell{}n}(A)= s_{\ell{}n}(B)$)
holds.
The main difficulty here stems from the condition that the block
ranks $\ell{}n$ 
of all permutation words certifying the corresponding constraints in the reducibility 
{\bf must be equal}. To solve this problem we use the following
arguments.

At first, choose the same integer $\ell$ for all pairs of
automata. The required value of $\ell$ is $1$ plus the three times
the maximum of all logarithms of all data for all LRS
involved. Then 
increase to $\ell/2$ the values of the parameters $k$ and $M$ from the proof
Theorem~\ref{LR->2Auto}.
It~can be done by attaching paths to threads of the graphs 
constructed in
Lemma~\ref{Graph-numerical}.

Next step is to construct for each pair $A_i$, $B_i$, where $1\leq
i\leq m$ and $m$ is the number of constraints of CHP, the automaton
$C'_i$ that certifies the~inequality $s_{\ell{}n}(A)< s_{\ell{}n}(B)$ (or
the equality $s_{\ell{}n}(A)= s_{\ell{}n}(B)$) as described above.

If $m$ is not a power of~$2$ then by definition an automaton $C'_i$,
where $m< i\leq 2^p$, $p=\lceil \log_2 m \rceil$, is a dummy 
automaton accepting all words. 

Now all automata $C'_i$ are combined into an automaton $C$ that checks
all conditions $s_{\ell{}n}(A)\, ?\, s_{\ell{}n}(B)$ for some length
$\ell{}n$.

\eject

The combined automaton has a product form $\tilde C=\tilde C'\times\tilde C''$
and accepts words $\#w_1\#{}w_2\#{}\dots w_N\#$ from the permutation
filter that satisfy the following requirements.
\begin{itemize}
\item [--] 
  The block rank is $p+\ell{}n$ for some $n$ (in the sequel we
  denote $w_i=u_iv_i$, where $|u|=p$).
\item [--] 
  Prefixes $u_i$ form a nondecreasing
  sequence w.r.t. the lexicographic order.
\item [--] For each $1\leq r\leq m $ let $z$ be the binary
  representation of $r-1$ of the length $p$. 
Form a subword  
  $
  \#w_i\# w_{i+1}\#\dots\# w_{i+j}\#
  $
  that consists of all block words with the prefix $z$.
  Then the word
  $\#v_{i}\#v_{i+1}\#\dots \#v_{i+j}\#$
  is  accepted by the automaton $C'_r$.
\end{itemize}

The first requirement is verified by a counter $\tilde C''$. The
second and the third requirements are verified by an automaton $\tilde
C'$, which is a sort of a concatenation of modified automata $\tilde
C'_i$.  In a modified automaton $\tilde C'_i$ each $\#$-transition
from a state $q'$ to a state $q''$ is changed by a $\#$-transition
from $q'$ to the initial state of a copy of an auxiliary automaton
$K_i$. This automaton reads a~word $z$ of the length $p$ and compares
it with $p$-length binary representations of $i-1$ and $i$. If $z$
represents $i-1$ then $K_i$ passes to the state $q''$ of the automaton
$C'_i$. If $z$ represents $i$ and $q''$ is an accepting state in the
automaton $C'_i$ then $K_i$ passes to the initial state of the
automaton $C'_{i+1}$. Otherwise it rejects.  This modification is
illustrated in Fig.~\ref{Auto_i}.

\begin{figure}
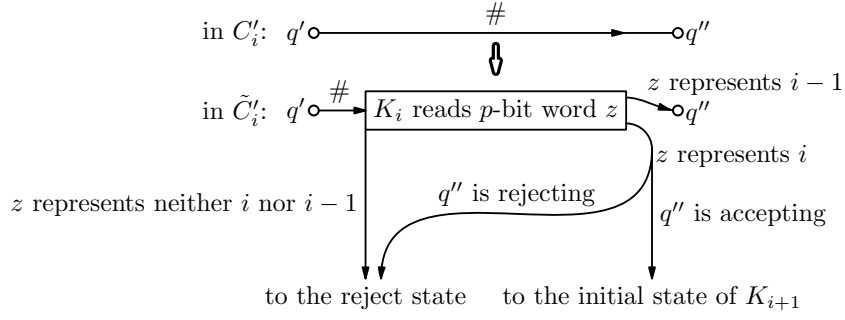

   \centerline{\mpfilescaled{auto}{3}{0.975}}
  \caption{Modification of a $\#$-transition in an automaton $C'_i$}
  \label{Auto_i}
\end{figure}

The initial state of the automaton $\tilde C'$ is the initial state of
$C'_1$. The accepting states are the initial states of those copies of the
auxiliary
automaton $K_{2^p}$ that are inserted instead of $\#$-transitions to 
accepting states of the automaton $C'_{2^p}$.

The size of the automaton $C$ is $O(Dm\log m )$, where $D$ is the
maximum of the sizes over all automata~$C_i$.  Indeed, an auxiliary
automaton $K_i$ can be implemented using $O(p)=O(\log m$) states. So
the size of a modified $C'_i$ is increased by a factor $O(\log m)$ and
we combine sequentially $2^p=O(m)$ modified automata $C'_i$.
Let prove that the resulting automaton $\tilde C$ gives the required
reduction of the CHP to the $P_{\mathbb B}$-realizability problem.

Suppose  that for some $n$ all constraints in an instance of the CHP
are satisfied. Then there exists a permutation of block words of the
length $\ell{}n$, which satisfies the accepting requirements of the
automaton $C_i$ for each $1\leq i\leq m$. Extending blocks by
prefixes $z_i$ that represent $i-1$ in binary and arranging
permutations in the order of~$i$ we obtain the permutation word
satisfying all three requirements stated above. Hence this word is
accepted by $\tilde C$. 

On the other hand, assume that $\tilde C$ accepts some word $W$ from the
permutation filter. The counter $\tilde C''$ accepts. So
the block rank of $W$ is  $p+\ell{}n$. The automaton $\tilde C'$
accepts also. It follows from the construction that it must pass
through all automata $\tilde C'_i$. There are $2^p$ possible
prefixes of the length $p$. So the construction of modified automata
implies that the sequence of block prefixes of the length $p$ is
uniquely determined. Namely, it has a form
$$
\underbrace{\overbrace{0000\dots0}^{\text{\footnotesize $p$ bits}},
\dots,
\overbrace{0000\dots0}^{\text{\footnotesize $p$
    bits}}}_{\text{\footnotesize $2^{\ell{}n}$ prefixes}},
\dots\kern1em\dots,
\underbrace{\overbrace{111\dots1}^{\text{\footnotesize $p$ bits}},
\dots,
\overbrace{111\dots1}^{\text{\footnotesize $p$
    bits}}}_{\text{\footnotesize $2^{\ell{}n}$ prefixes}}.
$$
Thus
acceptance by $\tilde C'$ means that  suffixes corresponding to each
prefix $z$, where $z$
is a binary representation of~$i-1$, form 
a permutation of the binary words of the length $\ell{}n$ and the
corresponding permutation
word is accepted by the automaton $C_i$. The latter implies that
$i$th constraint in the instance of the CHP is satisfied. It implies that the
answer for the instance of the CHP is positive.

\section{Reduction of the $P_{\mathbb B}$-realizability problem\\ to the CHP}
\label{Filter->Qhit}

The reduction is composed from several intermediate reductions. 

At first, we show that  the $P_{\mathbb B}$-realizability
problem is reducible to the Walk Weight Hitting Problem. The walks in
this problem are walks in a digraph.

Let  $\Gamma(V,E)$ be a digraph. We assume that the edges of the
digraph are colored
in~$s$ colors from the set $\{1,2,\dots,s\}$. 

For a walk $\tau$ we define a \emph{weight}
$w(\tau)\in {\mathbb Z}^s$ as an $s$-dimensional integer vector
\begin{equation*}
  w(\tau)=(c_1(\tau),c_2(\tau),\dots, c_i(\tau),\dots c_s(\tau)),
\end{equation*}
where $c_i(\tau)$ equals the number of edges colored in the
color~$i$ on the walk $\tau$.

\textbf{Walk Weight Hitting Problem} (WWHP)

{\sc INPUT:}
a digraph $\Gamma$, an $s$-coloring of the edges of the $\Gamma$, 
two vertices  $a$, $b$ of   $\Gamma$,
an integer matrix $\Phi$ of order~$s\times s$ and an integer vector $x_0$
of dimension~$s$. 

{\sc OUTPUT:}
`yes' if  the orbit
$\mathop{\mathrm{Orb}}\nolimits_\Phi{}x_0$ 
intersects a set of weights of the walks from the vertex $a$ to the
vertex~$b$ and `no' otherwise.

\medskip

\begin{lemma}\label{PermF->hitRoute}
  The $P_{\mathbb B}$-realizability problem is Turing reducible to the
  \textup{WWHP} problem.
\end{lemma}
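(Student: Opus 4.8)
The plan is to encode a deterministic automaton $R$ (the input to the $P_\BB$-realizability problem) together with the structure of the permutation filter $P_\BB$ into a single digraph with colored edges, and to reformulate the two combinatorial conditions defining $P_\BB$ — namely that the block rank is some $n$ and that the set of blocks is \emph{all} of $\{0,1\}^n$ — as a hitting condition for the orbit of a suitable linear map. First I would build the product automaton $R'$ of $R$ with a small gadget that checks the syntactic shape $\#w_1\#w_2\#\dots w_N\#$ and, on each block, tracks the block length; $R'$ accepts exactly the words of $L(R)\cap P_\BB$ \emph{shape} (correct delimiters, equal block lengths $n$, and $N=2^n$ blocks), while remembering in its state the $R$-state reached after each block. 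The digraph $\Gamma$ will be essentially the transition graph of $R'$, with $a$ the initial state and $b$ the accepting configuration.

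The key point is the colors. I would introduce one color per binary block length — more precisely, color the edges so that the weight vector of a walk records (i) the common block length $n$, (ii) the total number $N$ of blocks read, and (iii) for the purposes of the "all binary words appear" condition, enough information that a linear map iterated $n$ times produces exactly the target $N=2^n$. Concretely: reading a block of length $n$ contributes $1$ to a "block-count" coordinate, and the delimiter structure forces all blocks to have equal length, which is recorded in a "length" coordinate; then the orbit of the map $\Phi$ that doubles one coordinate on each step, started at the vector $(1,0,\dots)$, sweeps out exactly the points $(2^n, n, \dots)$, so that "orbit hits the weight set" becomes "there is a walk whose block-count equals $2^{\text{block-length}}$." Since in a permutation word the blocks are distinct words of length $n$ and there are exactly $2^n$ of them iff \emph{every} length-$n$ binary word occurs, and since a DFA reading $N$ distinct length-$n$ words with $N=2^n$ must read all of them, matching the count $N=2^n$ is equivalent to membership in $P_\BB$. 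I would make this precise by having $R'$ additionally verify distinctness-compatible bookkeeping, or, more cheaply, observe that we only need $L(R)\cap P_\BB\neq\varnothing$, so it suffices that \emph{some} accepted word has the delimiter shape with $2^n$ blocks each of length $n$ — and any such word, if its blocks are forced distinct by the filter gadget, is automatically a permutation word. The Turing reduction then asks the WWHP oracle on $(\Gamma, \text{coloring}, a, b, \Phi, x_0)$ with $\Phi$ the doubling-plus-shift matrix and $x_0=(1,0,\dots,0)$.

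The main obstacle I expect is enforcing, inside a \emph{finite} digraph with a \emph{fixed} coloring, that all blocks have the \emph{same} length $n$ while also letting $n$ be unbounded: a DFA cannot count $n$ in its states. The resolution is to push this constraint into the linear-algebraic part rather than the automaton part — that is exactly what the vector $x_0$ and $\Phi$ are for. One colors edges by their \emph{role} (inside-block symbol, delimiter, etc.) rather than by a length, so the weight vector counts, say, total symbols and total delimiters; then "equal block lengths" becomes a linear relation among weight coordinates that the point $\Phi^k x_0$ satisfies, and "$N=2^n$ with $n=$ common length" is captured by the exponential coordinate of the orbit. A second subtlety is that the WWHP oracle answers about the orbit hitting the weight \emph{set}, which is exactly the existential quantifier over walks (hence over words of $L(R)$) and over the step index $k$ (hence over $n$) that we want; so a single oracle call suffices and the reduction is in fact many-one, though the statement only claims Turing reducibility. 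I would close by checking that $\Gamma$, the coloring, $\Phi$ and $x_0$ are computable from (a DFA for) $R$, which is routine once the gadget $R'$ is written down.
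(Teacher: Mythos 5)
Your approach has a genuine gap at its central point: enforcing that the $2^n$ blocks actually range over \emph{all} binary words of length $n$. A finite digraph with a coloring by ``role'' (in-block symbol, delimiter, etc.)\ produces walk weights that are Parikh-type counts; having $2^n$ blocks each of length $n$ does not imply the blocks are distinct, and a walk's weight vector cannot see the difference between a permutation word and, say, $2^n$ repetitions of the same block. You notice this and appeal to a ``filter gadget'' that ``forces distinctness,'' but no such finite-state gadget exists --- distinctness of all blocks is not a regular condition, and if it were, $P_\BB$-realizability would be trivially decidable. So the equivalence ``matching the count $N=2^n$ is equivalent to membership in $P_\BB$'' is false, and the reduction as proposed does not prove the lemma.

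The paper circumvents this by a quite different idea: it does not try to verify the permutation property syntactically. Instead, it works in the transition monoid of the DFA. The effect of a permutation word of block rank $n$ on the automaton depends only on the multiset of composite transition maps $\{f(w_i)\}$, and for a permutation word this multiset is \emph{canonical}: each $g\in Q^Q$ appears exactly $\nu_n(g)$ times, where $\nu_n(g)=|\{w\in\{0,1\}^n: f(w)=g\}|$. The quantities $\nu_n(g)$ are the coordinates of $\Phi^n e(\id)$ for the doubling map $\Phi e(f)=e(ff_0)+e(ff_1)$ on $\QQ(Q^Q)$. The digraph is then taken to be the Cayley graph of the monoid generated by $\{f_\# f : f\in M_{01}\}$, colored by generators, and the WWHP condition $w(\tau)=\Phi^n x_0$ says exactly that the walk uses generator $f_\# g$ precisely $\nu_n(g)$ times. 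The crucial correctness point --- which your plan lacks a substitute for --- is the bijective reassembly: given any walk with these Parikh counts, you can label the $i$-th application of $f_\# g$ with a distinct length-$n$ word in $f^{-1}(g)\cap\{0,1\}^n$, and since the counts sum to $2^n$, the labels exhaust $\{0,1\}^n$ and yield a genuine permutation word accepted by $R$. Without this step (or something playing its role), a hitting condition on counts alone cannot certify that a permutation word exists.

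If you want to salvage your outline, the fix is precisely to replace the ``role'' coloring by a coloring indexed by elements of the transition monoid, take $\Phi$ as the doubling operator on $\QQ(Q^Q)$ rather than a low-dimensional exponential counter, and use the Cayley graph of the $\#$-separated monoid as $\Gamma$; at that point you have reconstructed the paper's proof.
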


The proof of Lemma~\ref{PermF->hitRoute} is presented in
Subsection~\ref{route}. 

Then we will reduce 
the WWHP 
to the  Integer  cone Hitting
Problem. \emph{An integer cone} ${\mathbb N}(v_1,\dots,v_r)$ is the
set of vectors 
$$
\sum_{i=1}^ra_iv_i,\quad  a_i\in\NN,
$$
where $v_i\in\ZZ^d$. We denote the set of nonnegative integers by
$\NN$. 

\textbf{Integer  cone Hitting Problem} (IHP)

{\sc INPUT:}
a square matrix $\Phi$ of order $d$; 
a $d$-dimensional vector $x_0$; 
a family of vectors $v_{i}\in \ZZ^d$, $i=0,1,\dots,r$. 

{\sc OUTPUT:}
`yes' if  the orbit
$\mathop{\mathrm{Orb}}\nolimits_\Phi{}x_0$ 
intersects the translate of the integer cone $v_0+{\mathbb N}(v_1,\dots,v_r)$
and `no' otherwise.

\medskip

\begin{lemma}\label{WWHP->IHP}
  The \textup{WWHP}  is Turing reducible to the \textup{IHP}.
\end{lemma}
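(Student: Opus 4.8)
The goal is to reduce WWHP to IHP. The key observation is that in the WWHP the object we are trying to hit is the set of colour-count vectors $\{w(\tau)\}$ of all walks from $a$ to $b$ in the coloured digraph $\Gamma$, and this set is \emph{almost} an integer cone: concatenating walks adds their weight vectors, and a walk from $a$ to $b$ decomposes into a shortest $a$--$b$ walk plus a collection of closed walks that can be appended along the way. So the plan is first to reduce the semigroup of $a$--$b$ walk weights to a finite generating description, and then to feed that description into IHP while keeping the orbit part $(\Phi,x_0)$ essentially unchanged.

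\textbf{Step 1: describe $\{w(\tau)\}$ by finitely many cones.} Using the cycle-space structure of walks, I would argue that the set of weight vectors of $a$--$b$ walks is a finite union of translates of integer cones. Concretely, fix a spanning structure; every $a$--$b$ walk $\tau$ has a weight $w(\tau) = w(\tau_0) + \sum_i n_i w(\gamma_i)$ where $\tau_0$ ranges over a finite set of ``base'' $a$--$b$ walks (e.g. those that traverse no edge more than $|V|$ times, so finitely many weight patterns), the $\gamma_i$ range over the finitely many simple cycles of $\Gamma$, and $n_i \in \NN$ — but one must be careful that the cycles appended are actually reachable from $\tau_0$. The clean way is: for each strongly connected component $S$ touched by some $a$--$b$ walk, the simple cycles inside $S$ can all be appended in arbitrary nonnegative multiplicities once $\tau_0$ enters $S$. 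Thus $\{w(\tau): \tau \text{ an } a\text{--}b \text{ walk}\}$ is a finite union $\bigcup_{j} \bigl(v_0^{(j)} + \NN(v_1^{(j)},\dots,v_{r_j}^{(j)})\bigr)$, where all the generating vectors and base points are the colour-count vectors of explicitly enumerable short walks and simple cycles, hence computable from the input of WWHP. (The union is finite but possibly of size exponential in $|V|$; this costs nothing for Turing reducibility.)

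\textbf{Step 2: reduce to a single cone via the oracle.} The WWHP instance answers ``yes'' iff $\mathop{\mathrm{Orb}}\nolimits_\Phi x_0$ meets at least one of these finitely many translated integer cones. For each $j$ I would form an IHP instance with the same matrix $\Phi$ and the same starting vector $x_0$ (the ambient dimension $d=s$ is unchanged, since weights live in $\ZZ^s$), and with the cone data $v_0^{(j)}, v_1^{(j)},\dots,v_{r_j}^{(j)}$. The reduction queries the IHP oracle on each of these instances and outputs ``yes'' iff some query returns ``yes''. Since there are finitely many queries, each computable from the WWHP input, this is a legitimate Turing reduction. (One should also handle the degenerate cases — e.g. if no $a$--$b$ walk exists at all, output ``no'' directly; and if some base walk weight already lies in the orbit of length... — but these are absorbed into the finite family.)

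\textbf{The main obstacle.} The delicate point is Step 1: correctly characterizing the walk-weight set as a finite union of translated integer cones, and in particular making sure the bookkeeping of ``which cycles may be appended to which base walk'' is done so that the resulting union is \emph{exactly} the weight set, not merely an over- or under-approximation. The subtlety is that appending a cycle $\gamma$ to $\tau_0$ requires $\gamma$ to share a vertex with $\tau_0$; grouping cycles by strongly connected component and by which components $\tau_0$ visits resolves this, but it must be stated carefully. Everything after that (plugging into IHP, taking a finite disjunction over the oracle) is routine, and the exponential blow-up in the number of cones is harmless for Turing reducibility.
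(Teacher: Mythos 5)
Your overall plan is correct and matches the paper's structure exactly in Step~2: express the $a$--$b$ walk-weight set as a finite union of translated integer cones, then query the IHP oracle once per cone and take the disjunction. Where you diverge is in Step~1, and the divergence is worth noting. You attempt a direct, hands-on proof that the walk-weight set is semilinear, by decomposing walks into a bounded-length base walk plus reachable simple cycles, with bookkeeping via strongly connected components. The paper instead observes that writing the colour sequence along each $a$--$b$ walk yields a word over the $s$-letter colour alphabet, that the set of all such words is a regular language, and that the weight vector $w(\tau)$ is precisely the Parikh image of that word; the desired semilinear description (with explicitly computable base points and generators) then drops out immediately from Parikh's theorem. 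Your approach essentially amounts to re-deriving the regular-language case of Parikh's theorem from scratch, which is exactly why you flag the cycle-attachability bookkeeping as the ``main obstacle'': the standard combinatorial proof of Parikh's theorem (stratifying by the exact set of states visited, not just by SCC membership) is what is needed to make that bookkeeping airtight, and the paper avoids having to redo it by citing the theorem. So the proposal is sound in outline, but the hard part you identify is genuinely nontrivial and is precisely what the paper outsources to a classical result; recognizing that your Step~1 is an instance of Parikh's theorem would both close the gap and shorten the argument.
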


The proof of Lemma~\ref{WWHP->IHP} is presented in
Subsection~\ref{integer-cones}. 

Next step is to reduce the IHP 
to the Polyhedral cone Hitting
Problem. This last intermediate problem is stated as follows.

\textbf{Polyhedral cone Hitting Problem} (PHP)

{\sc INPUT:}
a square matrix $\Phi$ of order $d$; 
a $d$-dimensional vector $x_0$; 
a family of linear functions $h_{j}$ 
on ${\mathbb Q}^d$; a shift vector $v_0\in \QQ^d$.

{\sc OUTPUT:}
`yes' if  the orbit
$\mathop{\mathrm{Orb}}\nolimits_\Phi{}x_0$ 
intersects the translate $v_0+K$ of the closed polyhedral cone
 \begin{equation*}
   K=\{x\in {\mathbb Q}^d: h_j(x)\geq 0\}
 \end{equation*}
and `no' otherwise.

\medskip

The PHP problem is Turing-reducible to the ODP problem. The proof is
similar to the proof of Proposition~\ref{CHP=ODP}.

So, the total chain of reductions is
$$
\text{$P_{\mathbb B}$-realizability}\leq_T\text{WWHP}\leq_T
\text{IHP}\leq_T\text{PHP}\leq_T\text{ODP}\leq_T\text{CHP}.
$$

As 
mentioned above the last two reductions are based on
Proposition~\ref{CHP=ODP}. So, to complete the reduction we should
prove the following theorem.

\begin{theorem}\label{Zhit->Qhit}
  The \textup{IHP} problem is Turing reducible to the \textup{PHP}
  problem. 
\end{theorem}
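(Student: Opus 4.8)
The goal is to reduce the Integer cone Hitting Problem (IHP) to the Polyhedral cone Hitting Problem (PHP). The difficulty is that an integer cone $\mathbb{N}(v_1,\dots,v_r)$ given by generators is in general \emph{not} the lattice-point set of any rational cone: the rational cone $\mathbb{Q}_{\geq0}(v_1,\dots,v_r)$ contains integer points that are not nonnegative-integer combinations of the $v_i$ (the failure of an integral Carath\'eodory property). So one cannot simply replace the integer cone by its real relaxation and intersect with $\mathbb{Z}^d$; that would change the set being hit. The plan is therefore to decompose the integer cone into pieces each of which \emph{is} describable by polyhedral data, and to handle the orbit-membership question piece by piece using the Turing-reduction framework (so that finitely many PHP oracle calls, with a disjunction of answers, suffice).

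\medskip

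\textbf{Step 1: simplicial reduction.} First I would invoke Theorem~\ref{FrobNdim} (the ``integral Carath\'eodory'' statement announced in the introduction) to write an arbitrary integer cone $\mathbb{N}(v_1,\dots,v_r)$ as a finite union
$$
\mathbb{N}(v_1,\dots,v_r) = F \cup \bigcup_{j} \bigl(t_j + \mathbb{N}(u_{j,1},\dots,u_{j,s_j})\bigr),
$$
where $F$ is a finite (explicitly computable) set of integer vectors, each $t_j\in\mathbb{Z}^d$, and each family $u_{j,1},\dots,u_{j,s_j}$ is \emph{linearly independent}. Then the translate $v_0+\mathbb{N}(v_1,\dots,v_r)$ is a finite union of: translates of simplicial integer cones, plus a finite set of points. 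The orbit $\mathrm{Orb}_\Phi x_0$ hits the whole union iff it hits one of the pieces. Hitting the finite set $v_0+F$ is decidable (it is a special case of the union-of-subspaces / single-point hitting problem, reducible to the Skolem problem by Lemma~\ref{Qhit->Skolem}, hence handled without even needing the PHP oracle). So by this disjunction it suffices to reduce, to PHP, the problem of hitting a single \emph{simplicial} integer cone $w_0+\mathbb{N}(u_1,\dots,u_s)$ with $u_1,\dots,u_s$ linearly independent.

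\medskip

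\textbf{Step 2: the simplicial case via the fundamental parallelepiped.} For a simplicial cone the gap between the integer cone and its rational relaxation is controlled: every integer point of $\mathbb{Q}_{\geq0}(u_1,\dots,u_s)$ lies in $p + \mathbb{N}(u_1,\dots,u_s)$ for a \emph{unique} $p$ in the half-open fundamental parallelepiped $\Pi = \{\sum_i \lambda_i u_i : 0\leq \lambda_i <1\}$, and $\Pi\cap\mathbb{Z}^d$ is a finite set whose cardinality equals the absolute value of the relevant determinant and which can be enumerated. Thus
$$
\mathbb{N}(u_1,\dots,u_s) = \bigcup_{p\in \Pi\cap\mathbb{Z}^d}\bigl(p + \Lambda\bigr)\ \cap\ \mathbb{Q}_{\geq0}(u_1,\dots,u_s),
$$
where $\Lambda = \mathbb{Z}u_1+\dots+\mathbb{Z}u_s$ is the lattice spanned by the $u_i$. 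Hitting $w_0+\mathbb{N}(u_1,\dots,u_s)$ then splits, over the finitely many residues $p$, into the problem: does $\mathrm{Orb}_\Phi x_0$ meet the affine lattice $w_0+p+\Lambda$ \emph{and} the rational cone $w_0+\mathbb{Q}_{\geq0}(u_1,\dots,u_s)$ simultaneously? The cone part is exactly a PHP instance (the facets of a simplicial cone give the linear functions $h_j$, and $w_0+p$ is the shift vector). The lattice-membership part $w_0+p+\Lambda$ is a conjunction of congruence conditions on the coordinates of $\Phi^n x_0$; since each coordinate of $\Phi^n x_0$ is an LRS, membership in a fixed congruence class is an eventually-periodic condition in $n$, and intersecting finitely many such conditions again gives an arithmetic-progression condition on $n$. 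The standard trick here is to restrict attention to $n$ in a fixed residue class $n\equiv a\pmod{q}$ (finitely many choices), replace $\Phi$ by $\Phi^q$ and $x_0$ by $\Phi^a x_0$, and observe that along such a progression the affine-lattice constraint either always holds or never holds; intersecting with the progression itself is again expressible by enlarging the ambient space (or simply by restricting the PHP instance to the subsequence), so the whole conjunction collapses to a finite disjunction of PHP instances on suitably modified $(\Phi,x_0,\text{cone})$ data.

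\medskip

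\textbf{Main obstacle.} The conceptual crux is Step 1 — controlling the non-simplicial case, i.e.\ Theorem~\ref{FrobNdim}; that is where the genuine combinatorial/number-theoretic content sits, and the excerpt flags it as a ``sort of integral Carath\'eodory theorem.'' Granting that, the remaining work in Step 2 is the bookkeeping needed to turn ``hit a translated sublattice intersected with a rational cone'' into ``hit a translated rational cone,'' which is routine once one passes to arithmetic subprogressions of the index $n$ and uses that coordinates of the orbit are LRS with eventually periodic residues. Because we only need Turing-reducibility, the proliferation of cases — finitely many simplicial pieces, finitely many parallelepiped residues, finitely many index residues — costs only finitely many oracle calls whose answers we OR together, so the superexponential blow-up noted in the introduction is acceptable and the reduction goes through.
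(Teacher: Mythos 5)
Your proposal follows the same route as the paper: Theorem~\ref{FrobNdim} to pass from a general integer cone to a finite set of points plus finitely many translated \emph{simplicial} integer cones, and then, in the simplicial case, splitting the hitting condition into a lattice-membership part (periodic in the exponent $n$) and a rational-cone part (a PHP instance), with the usual trick of restricting $n$ to an arithmetic progression, replacing $\Phi$ by $\Phi^q$ and $x_0$ by $\Phi^a x_0$. This is exactly the content of the paper's Propositions~\ref{simplex}, \ref{subspace}, \ref{group} and Lemma~\ref{Zsimplexhit->Qhit}, assembled in the proof of Theorem~\ref{Zhit->Qhit}.

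Two slips, though. First, the parallelepiped display in Step~2 is wrong as written: for a full-rank lattice, $\bigcup_{p\in\Pi\cap\ZZ^d}(p+\Lambda)$ is all of $\ZZ^d$, so your right-hand side reduces to $\ZZ^d\cap\QQ_{\geq0}(u_1,\dots,u_s)$, which is strictly larger than $\NN(u_1,\dots,u_s)$ in general (take $u_1=(2,0)$, $u_2=(0,2)$). The fact you actually need, and the one the paper states as Proposition~\ref{simplex}, is simpler: when the $u_i$ are linearly independent, $\NN(u_1,\dots,u_s)=\Lambda\cap\QQ_{\geq0}(u_1,\dots,u_s)$, with no coset union at all. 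So the residue $p$ should be dropped entirely; only the lattice $\Lambda$ (equivalently its coset $w_0+\Lambda$) and the single rational cone remain, and the periodicity argument you sketch after that is then correct and matches the paper. Second, your handling of the singular finite set $F$ is a genuine logical gap: ``reducible to the Skolem problem by Lemma~\ref{Qhit->Skolem}'' does not yield decidability, since the Skolem problem is open, and the Turing reduction you are building only has a PHP oracle at its disposal, not a Skolem oracle. The paper avoids the issue by noting that a single point is a translate of the zero cone, so finite-set hitting is itself a finite disjunction of PHP instances and can simply be passed to the oracle.
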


The proof of Theorem~\ref{Zhit->Qhit} is
contained in Subsection~\ref{Zhitsubsect}.

\subsection{Proof of Lemma~\ref{PermF->hitRoute}}\label{route}

We start from some preliminary work.

Let $R$ be a regular language over the alphabet $\{0,1,\#\}$ and let $A$
be a deterministic automaton with the state set $Q$ accepting the
language~$R$. 
Our goal is to check whether $R\cap P_{\mathbb B}\ne\varnothing$.
Let pass to the transition monoid and
express this condition in terms of three maps $f_0$, $f_1$,
$f_\#$ of the set $Q$ into itself induced by
reading  respective symbols.

Define $f(w)$, where $w=w_1w_2\dots w_\ell\in \{0,1\}$, as
$$
\prod_{i=1}^\ell f_{w_{\ell-i+1}}.
$$

We denote the initial state of the automaton by  $q_s$ and the
accepting set by $Q_a$. The reduction algorithm checks for each
accepting state $q_f\in Q_a$   
whether it can be reached from the initial state $q_s$ after
reading some word from the permutation filter. Formally this condition
means that for some word $\#w_1\#{}w_2\#{}\dots
w_N\#\in P_{\mathbb B}$  we have
\begin{equation}\label{hit-acc}
  q_f=\left(\prod_{i=0}^{N-1} f_\# f(w_{N-i})\right)f_\# q_s.
\end{equation}

It is important that to verify the condition~\eqref{hit-acc} we do not
need to know the sequence $w_i$. It is sufficient to compute for each
map $g\in Q^Q$ the number $\nu_n(g)$ 
of its representations in the form $f(w)$, where
$w\in \{0,1\}^n$ (recall that $N=2^n$). Then the
condition~\eqref{hit-acc} is rewritten as
\begin{equation}\label{hit-acc2}
  q_f=\left(\prod_{i=0}^{N-1} f_\# g_i\right)f_\# q_s,  
\end{equation}
where each map $g\in Q^Q$ occurs exactly
$\nu_n(g)$ times in the sequence $g_i$.

It turns out that the integers $\nu_n(g)$ can be expressed as the coordinates
of the vectors taken from the orbit of a linear map.

In $\QQ$-vector space $\QQ(Q^Q)$ equipped with the  basis $\{e(f)\}$ 
indexed by maps $f\colon
Q\to Q$ we define a linear map by the action on the basis vectors
\begin{equation}\label{Phi-def}
  \Phi e( f) = e(ff_0) + e({ff_1}).
\end{equation}
(Recall  that map composition is taken from the right to the
left.)

\begin{prop}\label{2Phi}
  $\nu_n(g)$ equals the $e(g)$-coordinate of the vector
  $\Phi^ne(\mathop{\mathrm{id}})$ w.r.t. the basis  $\{e(f)\}$.
  Here $\mathop{\mathrm{id}}$ is the identity map.
\end{prop}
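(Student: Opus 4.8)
The plan is to prove Proposition~\ref{2Phi} by induction on $n$, carefully tracking what the iterates of $\Phi$ do to the basis vector $e(\mathrm{id})$ and matching the combinatorics with the definition of $\nu_n(g)$. Recall that $\nu_n(g)$ counts the words $w\in\{0,1\}^n$ with $f(w)=g$, where $f(w)=\prod_{i=1}^n f_{w_{n-i+1}}=f_{w_1}f_{w_2}\cdots f_{w_n}$ (composition right to left, so the last letter acts first). It is convenient to set up the bookkeeping so that the recursion on $\Phi$ matches the recursion on words obtained by appending a new first letter, i.e. $f(\alpha w)=f(\alpha)\,f(w)=f_\alpha\circ f(w)$ for $\alpha\in\{0,1\}$.

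First I would record the base case: $\Phi^0 e(\mathrm{id})=e(\mathrm{id})$, and $\nu_0(g)$ equals $1$ if $g=\mathrm{id}$ (the empty word) and $0$ otherwise, so the $e(g)$-coordinate of $e(\mathrm{id})$ is exactly $\nu_0(g)$. Next, for the inductive step, I would write $\Phi^{n}e(\mathrm{id})=\sum_{f\in Q^Q}\nu_n(f)\,e(f)$ by the induction hypothesis, apply $\Phi$ using linearity and the defining rule~\eqref{Phi-def}, and obtain
\begin{equation*}
\Phi^{n+1}e(\mathrm{id})=\sum_{f\in Q^Q}\nu_n(f)\bigl(e(ff_0)+e(ff_1)\bigr).
\end{equation*}
Collecting the coefficient of a fixed $e(g)$ gives $\sum_{f:\,ff_0=g}\nu_n(f)+\sum_{f:\,ff_1=g}\nu_n(f)$, which counts pairs $(w,\alpha)$ with $w\in\{0,1\}^n$, $\alpha\in\{0,1\}$, and $f(w)f_\alpha=g$. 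The remaining point is that $f(w)f_\alpha=f(w\alpha)$, the value of $f$ on the length-$(n+1)$ word obtained by appending $\alpha$ as the \emph{last} letter; since every word of length $n+1$ arises uniquely this way, the sum equals $\nu_{n+1}(g)$, completing the induction.

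The one genuine subtlety — and the step I would be most careful about — is the direction conventions: the map $\Phi$ multiplies on the right by $f_0$ or $f_1$, whereas $f(w)$ is defined as a product of the $f_{w_i}$ in reverse index order, so I must be sure that "right multiplication by $f_\alpha$" corresponds to the correct operation on words (appending $\alpha$ at the end versus the beginning) and that the convention is applied consistently between~\eqref{Phi-def} and the definition of $f(w)$. Once the two recursions are aligned under the same convention — as above, with $\Phi$'s right multiplication matching appending the new symbol as the last letter of $w$ — the identity $f(w)f_\alpha = f(w\alpha)$ is immediate from the definition of $f$, and there is no further obstacle; the rest is the routine linear-algebra bookkeeping of reading off coordinates. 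I would state this convention explicitly at the start of the proof to avoid any ambiguity.
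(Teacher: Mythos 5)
Your proof is correct and follows the same inductive argument as the paper: expand $\Phi^{n}e(\mathrm{id})=\sum_f \nu_n(f)e(f)$, apply the rule $\Phi e(f)=e(ff_0)+e(ff_1)$ termwise, and re-index over words of length $n+1$. Your extra care about the left/right conventions is warranted — the paper itself uses the identity $f(w)f_\alpha=f(w\alpha)$, and your presentation is aligned with that — and the remark that one should fix the convention explicitly is a sensible editorial point rather than a gap.
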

\begin{proof}
  By induction on $n$. The case of $n=0$ is trivial. 
  If the statement holds for  $n=k-1\geq0$ then it holds for~$n=k$:
  \begin{multline*}
    \Phi^ke(\mathop{\mathrm{id}})=\Phi \sum_{g\in Q^Q}\nu_{k-1}(g) e(g)= 
    \sum_{w\in\{0,1\}^{k-1}}\Phi e(f(w))=\\=
    \sum_{w\in\{0,1\}^{k-1}}(e(f(w)f_0)+e(f(w)f_1)=
    \sum_{w\in\{0,1\}^{k-1}}(e(f(w0))+e(f(w1))=\\=
    \sum_{w\in\{0,1\}^k}e(f(w))=\sum_{g\in Q^Q}\nu_{k}(g)  e(g).
  \end{multline*}
\end{proof}

Now we are going to describe the condition~\eqref{hit-acc2} in terms
of walk weights for  a suitable graph. 
 
For this purpose we will use the Cayley graphs  for monoids.

Let $G=\{g_1,\dots,g_m\}\subseteq Q^Q$ be a set of maps. It generates
a monoid $M$ 
 (a~monoid operation is a map composition, recall that by
definition a~monoid contains also the identity map.). 
By definition the vertices of \emph{the Cayley graph} $\Gamma_G$ of
the monoid are elements of the monoid $M$ and (directed) edges have the
form $(h, g_ih)$ for $h\in M$, $g_i\in G$. Note that the edges of the
Cayley graph are naturally colored by elements of~$G$. 
The Cayley graph of a monoid may contain parallel edges as the equality
$g_ih=g_jh$ for $i\ne j$  is possible for a monoid.

Let $M_{01}$ be a semigroup generated by the maps $f_0$, $f_1$ of the
automaton~$A$. This semigroup is finite and can be
described as the least set of maps from the state set $Q$ to itself
that contains the maps $f_0$, $f_1$ and is closed
w.r.t. multiplication by $f_0$, $f_1$. So, the semigroup $M_{01}$ can
be constructed efficiently\footnote{Hereinafter `efficiency' means a
mere existence of an algorithm.}.

In a similar way we define a monoid $M$ generated by maps $f_\#f$, $f\in
M_{01}$.  
This monoid is also constructed efficiently. 

Denote  by $\Gamma_M$ the Cayley graph of the monoid $M$ w.r.t. the set of
generators $\{f_\#f,\ f\in M_{01}\}$.

It follows from the construction that~\eqref{hit-acc2} holds iff there
exist an integer $n$ and a walk $\tau$ on the digraph $\Gamma_M$ from
the vertex $\mathop{\mathrm{id}}$ to a vertex $h$ such that
$h(f_\#(q_s))=q_f$ and $(w(\tau))_{f_\#g}=\nu_n(g)$ holds for all $g$.
And we conclude that the condition~\eqref{hit-acc2} is equivalent to
the condition
\begin{equation}\label{Cayley-hit}
  w(\tau)=\Phi^n x_0.
\end{equation}

\begin{remark}
  Note that the size of the monoid $M$ can be less than $|Q|^{|Q|}$
  and $\Phi$ acts on $\QQ(Q^Q)$. To
  fix a difference in vector dimensions we extend the coordinates of 
  walk weights by zero values for $e(g)$ such that $g\notin M$.
\end{remark}

Thus the reduction algorithm solves with help of a WWHP-oracle several
instances of the WWHP indexed by the accepting states $q_f$ and 
mappings
$h$
such that\linebreak $h(f_\#(q_s))=q_f$.

An instance has the following input data: a digraph is the Cayley
graph $\Gamma_M$, colors are generators, the initial vertex is the
identity map 
$\mathop{\mathrm{id}}$, the terminal vertex is $h$, the matrix is
defined by~\eqref{Phi-def} and the initial vector is
$x_0=e(\mathop{\mathrm{id}})$.

If the answer is positive for some instance from the described set 
then the answer in the  instance of the $P_{\mathbb
B}$-realizability problem involved is also positive due to
Proposition~\ref{2Phi} and condition~\eqref{Cayley-hit}.

Otherwise,
it follows from the definition
of the graph $\Gamma_M$ and Proposition~\ref{2Phi} that
the answer in the instance of the $P_{\mathbb
B}$-realizability problem is negative. 

Lemma~\ref{PermF->hitRoute} is proved.

\subsection{Proof of Lemma~\ref{WWHP->IHP}}\label{integer-cones}

We need the following 
auxillary
lemma.

\begin{lemma}\label{basic}
  Let $a,b$ be vertices of a digraph $\Gamma$ colored in $s$ colors
  and let $W(a,b)$ be the set of weights of all walks from the vertex~$a$
  to the vertex~$b$.  The set $W(a,b)$ is a finite union of translates
  of integer cones  in  ${\mathbb Z}^s$. 

  The list of the cones and the vectors of translation is constructed
  efficiently. 
\end{lemma}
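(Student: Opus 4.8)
The plan is to reduce the claim to a statement about the monoid of "walk weights'' and to make it effective via decomposition of walks along a spanning structure of the digraph. First I would fix the vertices $a,b$ and consider the set $P(a,b)$ of all walks from $a$ to $b$ in $\Gamma$. Every such walk can be decomposed by recording the sequence of distinct vertices it visits for the first time, together with the closed sub-walks (loops at already-visited vertices) that are inserted in between. More precisely, I would choose a simple path from $a$ to $b$ (if none exists, $W(a,b)=\varnothing$ and we are done), and note that any walk is obtained from a \emph{simple} $a$--$b$ path by repeatedly splicing in closed walks based at vertices already on the path. Since $\Gamma$ is finite, there are finitely many simple $a$--$b$ paths $\pi_1,\dots,\pi_N$; this gives $W(a,b)=\bigcup_{t=1}^N \bigl(w(\pi_t)+C_t\bigr)$, where $C_t$ is the set of weights of combinations of closed walks that can be spliced into $\pi_t$.

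The heart of the argument is then to show that each $C_t$ is itself a finite union of integer cones, and that the whole thing can be iterated/flattened to a finite union of translates of integer cones. The key observation is that a closed walk at a vertex $v$ decomposes into \emph{simple} cycles through $v$ (again finitely many, by finiteness of $\Gamma$), so the set of weights of closed walks based anywhere on $\pi_t$ is generated, as an additive submonoid of $\ZZ^s$, by the finite set of weights of simple cycles based at vertices of $\pi_t$ — i.e.\ it is exactly an integer cone $\NN(u_1,\dots,u_r)$ with the $u_j$ the cycle weights. One has to be slightly careful: splicing a closed walk at $v$ requires that $v$ already be on the partially built walk, so strictly speaking the set of realizable splicings is not a priori all of this cone. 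I would handle this by observing that once a single copy of a given simple cycle has been inserted, arbitrarily many further copies can be inserted at the same place, and that the "reachability'' constraints among cycles (which cycle can be spliced only after which) produce only finitely many "patterns'', each pattern contributing one integer cone; unioning over patterns and over the finitely many simple $a$--$b$ paths yields a finite union of translates of integer cones. This bookkeeping — making the splicing-order constraints precise and showing each pattern gives an honest integer cone — is the step I expect to be the main obstacle; everything else is finiteness of simple paths and simple cycles in a finite digraph.

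Finally, effectiveness is immediate from the construction: the digraph $\Gamma$ and its $s$-coloring are given explicitly, so one enumerates all simple $a$--$b$ paths and all simple cycles at each vertex (a finite search), computes their color-count vectors in $\ZZ^s$, and outputs the resulting finite list of translation vectors $w(\pi_t)$ together with the generator lists of the associated integer cones. Hence $W(a,b)$ is a finite union of translates of integer cones in $\ZZ^s$, and the list is produced algorithmically, which proves Lemma~\ref{basic}. (An alternative, cleaner route, which I would mention as a remark, is to encode $P(a,b)$ as a regular language over the alphabet of colored edges, pass to its Parikh image, and invoke Parikh's theorem to get a semilinear set; semilinearity is exactly "finite union of translates of integer cones'', and the standard proof of Parikh's theorem is effective. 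The elementary argument above is essentially an unfolding of this.)
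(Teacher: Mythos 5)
The paper's actual proof is exactly your closing remark: write the sequence of edge colors along a walk as a word over the $s$-letter color alphabet; the set of such words over all walks from $a$ to $b$ is a regular language; the weight of a walk is the Parikh image of the corresponding word; now invoke Parikh's theorem, whose standard proof is effective, and observe that a semilinear set is by definition a finite union of translates of integer cones. That is the entire argument the paper gives.

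Your main \emph{elementary} argument is, as you yourself note, an attempt to unwind Parikh's theorem from scratch, and the step you flag as the main obstacle is a genuine gap as written. The decomposition ``simple $a$--$b$ path plus spliced simple cycles'' is not yet a proof: the set of cycles available for splicing changes as you splice (inserting a cycle exposes new vertices, hence new cycles), so $C_t$ is not itself an integer cone and the naive formula $W(a,b)=\bigcup_t \bigl(w(\pi_t)+C_t\bigr)$ with $C_t$ a single cone is false. To repair it you would need to define a \emph{pattern} as a set $S$ of simple cycles whose union with $\pi_t$ is connected (equivalently, reachable from $a$), and then show that the weights realizable with exactly the cycle-support $S$ form the translate $w(\pi_t)+\sum_{C\in S}w(C)+\NN\bigl(\{w(C):C\in S\}\bigr)$ --- one copy of each cycle in $S$ to witness reachability, then arbitrarily many extra copies. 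Unioning over simple paths $\pi_t$ and over the finitely many connected cycle-sets $S$ then gives the claim. This is precisely the combinatorial core of the usual proof of Parikh's theorem for regular languages, so you would be reproving it; nothing is gained over simply citing it. I would lead with the Parikh-theorem argument (as the paper does) and drop the unfinished unwinding, or else fill in the pattern bookkeeping explicitly as above.
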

\begin{proof}
  Writing  a sequence of colors along a walk gives a word in the $s$-letter
  alphabet. 
  The collection of such words for all walks from the vertex
  $a$ to the vertex~$b$ forms a regular language.  

  The weight of a walk is just the Parikh image of the
  corresponding word. So the statement of the lemma follows from
  Parikh's theorem~\cite{Parikh}.
\end{proof}

Lemma~\ref{basic} implies that to check the
condition~\eqref{Cayley-hit} it is sufficient to check several conditions
of the form
\begin{equation}\label{Zhit-def}
  \Phi^nx_0\in v_0+W,
\end{equation}
where $v_0$ in an integer  $s$-dimensional vector and $W$ is an
integer cone in~${\mathbb Z}^s$. It gives a reduction of the WWHP to
the IHP.

\subsection{Proof of Theorem~\ref{Zhit->Qhit}}\label{Zhitsubsect}

Let $W_{\mathbb Q}\subseteq {\mathbb Q}^s$ be a rational cone
generated by vectors $v_1$, $\dots$, $v_r$. There is an algorithm
(see, e.g.~\cite[\S7.2]{Schrijver}) for passing to dual polyhedral
description of $W_{\mathbb Q}$.

The cone $W_{\mathbb Q}$ may contain some additional integral points
not belonging to 
integral cone $W$ so that conditions~\eqref{Zhit-def} should be modified.

Let us start with a simple case of an integral simplicial cone, when
generating vectors $v_i$ are linear independent.

\begin{prop}\label{simplex}
  If vectors $v_1$, \dots, $v_r$ 
  are linear independent then representation
  $x=\sum_{i}a_iv_i$, $a_i\in{\mathbb N}$ holds if and only if
  the following two representations hold:
  $x=\sum_{i}b_iv_i$, $b_i\in{\mathbb Q}_+$ and
  $x=\sum_ic_iv_i$, $c_i\in{\mathbb Z}$.  
\end{prop}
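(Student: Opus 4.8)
The plan is to prove the two implications separately. The forward direction is immediate: if $x=\sum_i a_i v_i$ with $a_i\in\mathbb N$, then taking $b_i=a_i\in\mathbb Q_+$ and $c_i=a_i\in\mathbb Z$ gives both required representations at once, so there is nothing to do here. The whole content is in the converse, where we are handed a nonnegative \emph{rational} representation $x=\sum_i b_i v_i$ and an \emph{integral} (but possibly sign-varying) representation $x=\sum_i c_i v_i$, and we must conclude that $x$ has a single representation that is simultaneously nonnegative and integral.

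The key observation is that linear independence of $v_1,\dots,v_r$ forces the representation of $x$ in terms of these vectors to be \emph{unique}. Indeed, if $\sum_i \lambda_i v_i=\sum_i \mu_i v_i$ then $\sum_i(\lambda_i-\mu_i)v_i=0$, and independence gives $\lambda_i=\mu_i$ for all $i$. Applying this to the two given representations, we get $b_i=c_i$ for every $i$. But then each common coefficient lies in $\mathbb Q_+\cap\mathbb Z=\mathbb N$, so setting $a_i=b_i=c_i$ yields the desired representation $x=\sum_i a_i v_i$ with $a_i\in\mathbb N$. This completes the converse and hence the proposition.

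Since uniqueness of coordinates with respect to a linearly independent family is exactly the standard fact underlying the notion of a basis (here of the subspace $\mathrm{span}(v_1,\dots,v_r)$, in which $x$ necessarily lies because it is a combination of the $v_i$), there is really no obstacle to speak of; the only point worth stating carefully is that both hypotheses in the converse are needed — the rational one to pin down the sign and the integral one to pin down integrality — and that linear independence is what allows these two pieces of information to be transferred onto the \emph{same} coefficient vector rather than two a priori different ones. This is precisely why the general (non-simplicial) case, treated afterwards via the integral Carathéodory-type decomposition of Theorem~\ref{FrobNdim}, requires genuinely more work: without independence the coordinates are no longer unique and a nonnegative rational combination need not agree with any integral combination.
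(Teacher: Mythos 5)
Your proof is correct and matches the paper's argument: both rest on the uniqueness of coordinates with respect to a linearly independent family, which forces $b_i=c_i$ and hence $b_i\in\mathbb Q_+\cap\mathbb Z=\mathbb N$. The extra remarks on why independence is essential and why the general case needs Theorem~\ref{FrobNdim} are accurate but not part of the proof itself.
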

\begin{proof}
The implication $\Leftarrow$ is trivial and the opposite implication 
$\Rightarrow$ follows from linear independence of vectors $v_i$,
as the equality 
  $$
  \sum_{i}b_iv_i=\sum_ic_iv_i
  $$
holds iff $b_i=c_i, \, i=1,\dots, r$. 
\end{proof}

Vector $x$ can be represented in a form
$x=\sum_ic_iv_i$, where $c_i\in{\mathbb Z}$, iff $x$ is an element of the
subgroup $G$ generated by vectors~$v_i$ in ${\mathbb Z}^s$.  And this
condition in turn is a conjunction of two conditions. The first
condition is simple: $x$ is in a subspace generated by vectors~$v_i$
in the vector space space ${\mathbb Q}^s$.  To formulate the second
condition let extend the set $v_i$ of generators of $G$ to the basis
of ${\mathbb Z}^s$ by adding some unit coordinate vectors $e_j$ so
that the resulting system formed from $v_i$ and $e_j$ constitute a
basis of ${\mathbb Z}^s$.  Such extension is always possible in view
of linear independence of $v_i$.  Let $\tilde G$ be the group generated
by $v_i$ and $e_j$.

\begin{prop}\label{subspace}
  Vector $x$ belongs to the subgroup $G$ generated by vectors~$v_i$ in
  the group ${\mathbb Z}^s$ iff $x$ is in the subspace generated by
  vectors~$v_i$ in ${\mathbb Q}^s$ and $x$ belongs to the subgroup
  $\tilde G$.
\end{prop}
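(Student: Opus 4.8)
The plan is to prove the two implications separately; only one of them carries any content, and that one reduces to uniqueness of coordinates in a basis.

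The direction $(\Rightarrow)$ is immediate. If $x$ lies in the subgroup $G$ generated by $v_1,\dots,v_r$ in ${\mathbb Z}^s$, then $x=\sum_i c_i v_i$ with $c_i\in{\mathbb Z}$, so $x$ lies in the ${\mathbb Q}$-subspace spanned by the $v_i$, and, since the $v_i$ are among the generators of $\tilde G$, we also have $x\in\tilde G$.

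For $(\Leftarrow)$ I would first record the structural fact used throughout: by construction the combined system $v_1,\dots,v_r$ together with the adjoined unit vectors $e_j$ is a ${\mathbb Z}$-basis of ${\mathbb Z}^s$; clearing denominators it is in particular ${\mathbb Q}$-linearly independent, and as it has $s$ elements it is a ${\mathbb Q}$-basis of ${\mathbb Q}^s$. Now assume $x\in\tilde G$ and that $x$ lies in the subspace $U=\mathrm{span}_{\mathbb Q}(v_1,\dots,v_r)$. Membership in $\tilde G$ yields an expansion $x=\sum_i c_i v_i+\sum_j d_j e_j$ with all $c_i,d_j\in{\mathbb Z}$. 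On the other hand, since the $v_i$ are linearly independent they form a basis of $U$, so membership of $x$ in $U$ yields a unique expansion $x=\sum_i b_i v_i$ with $b_i\in{\mathbb Q}$, which we view as an expansion in the large ${\mathbb Q}$-basis whose $e_j$-coefficients all vanish. Comparing the two expansions of $x$ in the ${\mathbb Q}$-basis $\{v_i\}\cup\{e_j\}$ and invoking uniqueness of coordinates, I obtain $d_j=0$ for every $j$ and $b_i=c_i\in{\mathbb Z}$ for every $i$. Hence $x=\sum_i c_i v_i$ with integer coefficients, i.e.\ $x\in G$, which finishes the argument.

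There is no real obstacle here: the whole proof hinges on two elementary observations --- that a ${\mathbb Z}$-basis of ${\mathbb Z}^s$ is a fortiori a ${\mathbb Q}$-basis of ${\mathbb Q}^s$, and that coordinates with respect to a basis are unique --- with linear independence of the $v_i$ (which also underlies the existence of the unit-vector completion and the uniqueness of the expansion $x=\sum b_i v_i$) doing all the work. The role of the proposition, together with Proposition~\ref{simplex}, is that membership of $x$ in a simplicial integral cone then decomposes into membership in a rational cone, membership in a rational subspace, and membership in the finitely generated subgroup $\tilde G$ of ${\mathbb Z}^s$ --- each of which can be rephrased as linear constraints suitable for the reduction toward the PHP.
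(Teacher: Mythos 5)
Your proof is correct and takes essentially the same route as the paper: both expand a vector of $\tilde G$ in the ${\mathbb Z}$-basis $\{v_i\}\cup\{e_j\}$, use ${\mathbb Q}$-linear independence to force the $e_j$-coefficients to vanish once $x$ lies in $\mathrm{span}_{\mathbb Q}(v_i)$, and conclude $x\in G$. (The paper's text labels the two implications $\Rightarrow$/$\Leftarrow$ in the reverse of what it proves, but the content matches yours.)
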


\begin{proof}
  The implication $\Leftarrow$ is obvious. To prove the opposite implication 
  $\Rightarrow$  assume that
  $x\in \tilde
  G$. By construction of the subgroup~$\tilde G$ it follows that
  $$
  x=g+\sum_j x_je_j,
  $$ where $g\in G$ and $e_j$~denote those added coordinate unit
  vectors.  Now it follows from linear independence of $v_i$ and $e_j$
  that if vector $x$ belongs to the subspace generated by vectors~$v_i$
  in the coordinate vector space ${\mathbb Q}^s$ then all coordinates
  $x_j$ are zero. Hence $x$ belongs to the subgroup~$G$.
\end{proof}

It follows from the propositions~\ref{simplex} and~\ref{subspace} that
a vector falls into integral conic hull of vectors $v_i$ if and only
if three conditions are fulfilled. The first two conditions are
${\mathbb Q}$-linear and control whether a vector falls into the cone
$W_{\mathbb Q}$.  The third condition consists in checking that a
vector belongs to a
fulldimensional lattice in ${\mathbb Z}^s$ (a subgroup of ${\mathbb Z}^s$
having finite index in 
${\mathbb Z}^s$).  Checking the first and the second condition for the
orbit of a linear map are particular variants of CHP and the third
condition holds for a nice family of  orbit points.

\begin{prop}\label{group}
  Let $G=\langle v_1,\dots, v_s\rangle\subseteq{\mathbb Z}^s$ be
  rank~$s$ subgroup of ${\mathbb Z}^s$, let $v_0$ be an integral
  vector in ${\mathbb Z}^s$ and let $\Phi$ be an integral $s\times s$
  matrix.

  Then the set
  $$
  H=\{n: \Phi^nx_0\in v_0+G\}
  $$ 
  is a union of a finite set $H_0$ and a finite set of arithmetic
  progressions. Moreover, there is an algorithm to compute $H$.
\end{prop}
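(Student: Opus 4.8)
The plan is to work modulo a suitable integer and exploit eventual periodicity of matrix powers in a finite ring. Since $G$ has finite index in $\ZZ^s$, let $q=[\ZZ^s:G]$ and consider the quotient group $\ZZ^s/G$, which is a finite abelian group of order $q$; fix an isomorphism (or simply work with the reduction homomorphism $\pi\colon\ZZ^s\to\ZZ^s/G$). The condition $\Phi^nx_0\in v_0+G$ is equivalent to $\pi(\Phi^nx_0)=\pi(v_0)$. Now the sequence $\pi(\Phi^nx_0)$ takes values in the finite set $\ZZ^s/G$, and I would like to say it is eventually periodic. The subtlety is that $\Phi$ need not act on $\ZZ^s/G$, because $G$ need not be $\Phi$-invariant. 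First I would remedy this: let $G'=\bigcap_{k\geq0}\Phi^{-k}G$ intersected with a sublattice making it of finite index — more concretely, observe that $qG$-cosets behave well, or better, pass to the lattice $L=q\ZZ^s\subseteq G$ (note $q\ZZ^s\subseteq G$ since $G\supseteq qG$ is false in general, so instead use $L=q\ZZ^s$ which does satisfy $L\subseteq G$ because $q\cdot(\ZZ^s/G)=0$ forces $q\ZZ^s\subseteq G$). Then $L=q\ZZ^s$ is $\Phi$-invariant, and $\Phi$ induces an endomorphism $\bar\Phi$ of the finite ring $(\ZZ/q\ZZ)^s$.

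With this in hand, the sequence $\bar\Phi^n(\bar x_0)$, where $\bar x_0$ is the image of $x_0$ in $(\ZZ/q\ZZ)^s$, lives in a finite set, so it is eventually periodic: there exist effectively computable $n_0\geq0$ and $T\geq1$ with $\bar\Phi^{n+T}(\bar x_0)=\bar\Phi^n(\bar x_0)$ for all $n\geq n_0$ — one finds $n_0,T$ by simply iterating until a repeat occurs, which must happen within $q^s$ steps. Now $\Phi^nx_0\in v_0+G$ depends only on $\Phi^nx_0 \bmod q\ZZ^s$ together with the coset it lands in: precisely, knowing $\bar\Phi^n(\bar x_0)\in(\ZZ/q\ZZ)^s$ determines the coset $\pi(\Phi^nx_0)\in\ZZ^s/G$, since the reduction map $\ZZ^s\to\ZZ^s/G$ factors through $\ZZ^s\to(\ZZ/q\ZZ)^s$ (as $q\ZZ^s\subseteq G$). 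Therefore the predicate ``$n\in H$'' depends only on the residue $\bar\Phi^n(\bar x_0)$, hence is eventually periodic in $n$ with period $T$: for $n\geq n_0$, $n\in H$ iff $n+T\in H$. Consequently $H\cap[n_0,\infty)$ is a finite union of arithmetic progressions with common difference $T$ (one progression for each residue class $r\bmod T$ with $n_0\leq r$ and $r\in H$), and $H\cap[0,n_0)$ is the finite set $H_0$. All the relevant objects — $q$ via the Smith normal form of the matrix whose columns are $v_1,\dots,v_s$, the inducing map $\bar\Phi$, and the transient/period pair $(n_0,T)$ — are computable, and checking $r\in H$ for $n_0\leq r<n_0+T$ is a finite computation, so the algorithm to produce $H$ is immediate.

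The main obstacle is purely bookkeeping: arranging a $\Phi$-invariant finite-index sublattice so that ``reduction'' makes sense and genuinely detects membership in $v_0+G$. The key observation that dissolves it is $q\ZZ^s\subseteq G$, which holds because $q$ annihilates the finite group $\ZZ^s/G$; once one works modulo $q$ rather than modulo $G$, invariance under $\Phi$ is automatic and the pigeonhole argument on the finite state space $(\ZZ/q\ZZ)^s$ does the rest. I would also remark that this proposition is exactly the ``nice family of orbit points'' alluded to in the paragraph preceding it: the third (lattice-membership) condition in the characterization of integral conic hull is satisfied on a union of a finite set and finitely many arithmetic progressions, which is precisely the structure needed to feed back into the CHP-style reductions.
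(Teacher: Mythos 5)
Your proof is correct and reaches the same conclusion by essentially the same mechanism — reduce the orbit modulo a finite quotient, invoke pigeonhole to get eventual periodicity, read off the transient and the progressions — but with a cleaner setup. The paper first puts the generator matrix into Smith normal form $M=UDV$, changes coordinates so that $G$ is generated by $q_ie_i$, and then works modulo $q=\mathrm{LCM}(q_i)$ in the new basis. You bypass the basis change entirely: since $\ZZ^s/G$ is finite of order $q=[\ZZ^s:G]$, Lagrange gives $q\ZZ^s\subseteq G$, and $q\ZZ^s$ is automatically $\Phi$-invariant because $\Phi$ is integral, so you can reduce modulo $q$ in the original coordinates. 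This saves the bookkeeping of transporting $\Phi$, $x_0$, and $v_0$ through $U$ and $V$, at the modest cost of using the index $q=\prod q_i$ rather than the largest invariant factor $q_s=\mathrm{LCM}(q_i)$ as the modulus. You are also slightly more careful than the paper on one point: the paper's phrase ``arithmetic progressions $k\pmod q$'' reads as if the common difference were $q$, whereas the actual common difference is the period $T$ of the eventually periodic sequence $\Phi^nx_0\bmod q$, which you correctly locate by iterating until a repeat and bound by $q^s$.

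One stylistic remark: the parenthetical detour through $G'=\bigcap_{k\ge0}\Phi^{-k}G$ and the sentence fragment ``$G\supseteq qG$ is false in general'' is confusing — $qG\subseteq G$ holds trivially for any subgroup — and reads like an uncleared scaffold. The single clean observation that $q$ annihilates $\ZZ^s/G$, hence $q\ZZ^s\subseteq G$, is all you need; the rest of that sentence should be cut.
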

\begin{proof}
First let $G$ be a diagonal subgroup whose generators are columns of
the matrix  
  $$
  D=\begin{pmatrix}
    q_1&0&\dots&0\\
    0  &q_2&\dots&0\\
    \hdotsfor{4}\\
    0 &0&\dots&q_s
  \end{pmatrix}
  $$
The shift induced by $G$ is given by
  \begin{equation}\label{cond}
    x_i\equiv   \xi_i\pmod{q_i},\quad 1\leq i\leq s.
  \end{equation}

Let $q$ be the least common multiple of $q_i$.  Now calculating
$\Phi^kx_0$ modulo $q$ we can find a finite set of exponents $k$, and
the corresponding arithmetic progressions $k\pmod{q}$,
satisfying~\eqref{cond}.  Besides these arithmetic projections $H$ may
include only numbers that should be less than initial terms of the
progressions found on the previous step, so that we can directly
compute this finite set.

The general case is reduced to a diagonal one by reducing the generator matrix
$M$ to the Smith normal form  
(see~\cite[\S4.4]{Schrijver}) If $G$ is a subgroup of ${\mathbb Z}^s$
of full rank then Smith normal form is given by
  \begin{equation}\label{Smith}
    M=UDV, 
  \end{equation}
where $U$, $V$~are unimodular matrices and $D$ is a nondegenerate
diagonal matrix.  $V$ corresponds to column transformations and gives
a system of new generators $\tilde v_1$, \dots, $\tilde v_s$ of~$G$.
$U$ corresponds to row transformations and shows how the vectors $e_i$
of the initial basis can be expressed as a linear combinations of the
new basis vectors $\tilde e_i$:
  $$
  (e_1,\dots, e_s)=(\tilde e_1,\dots \tilde e_s) U.
  $$
  By~\eqref{Smith} 
the generator matrix 
$\tilde v_i$ of the group $G$ is diagonal in the basis
$\tilde e_i$.

To complete a reduction to the diagonal case we should express the matrix
$\Phi$ and the initial vectors $x_0$, $v_0$ in the new basis $\tilde
e_i$.
\end{proof}

\begin{lemma}\label{Zsimplexhit->Qhit}
  If vectors  $v_1$, \dots,
  $v_r$ are linear independent then the \textup{IHP} 
  is reduced to the \textup{PHP}.
\end{lemma}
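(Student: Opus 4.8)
The plan is to combine Propositions~\ref{simplex}, \ref{subspace} and~\ref{group} so that the integral‑cone membership condition splits into a purely rational‑cone (hence \textup{PHP}) condition along a computable family of arithmetic progressions of exponents.

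Let an instance of the \textup{IHP} be given by a matrix $\Phi$ and vectors $x_0,v_0,v_1,\dots,v_r$ with $v_1,\dots,v_r$ linearly independent in $\ZZ^s$. First I would compute, by the standard algorithm for the dual description of a cone (\cite[\S7.2]{Schrijver}), linear functions $h_1,\dots,h_p$ on $\QQ^s$ with $W_{\mathbb Q}=\{x:h_j(x)\geq0,\ 1\leq j\leq p\}$, where $W_{\mathbb Q}$ is the rational conical hull of $v_1,\dots,v_r$; these inequalities automatically force membership in $\mathrm{span}_{\QQ}(v_1,\dots,v_r)$, since $W_{\mathbb Q}$ lies in that subspace. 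Next, as in the discussion preceding Proposition~\ref{subspace}, I would extend $v_1,\dots,v_r$ by suitable unit vectors $e_{j_1},\dots,e_{j_{s-r}}$ to a $\ZZ$‑basis of a rank‑$s$ subgroup $\tilde G\subseteq\ZZ^s$. By Propositions~\ref{simplex} and~\ref{subspace}, together with the inclusion $W_{\mathbb Q}\subseteq\mathrm{span}_{\QQ}(v_i)$, a vector $x$ lies in $v_0+\NN(v_1,\dots,v_r)$ if and only if $x-v_0\in W_{\mathbb Q}$ \emph{and} $x\in v_0+\tilde G$.

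Now I would apply Proposition~\ref{group} to the rank‑$s$ subgroup $\tilde G$, the vector $v_0$ and the matrix $\Phi$: effectively it outputs a finite set $H_0\subseteq\NN$ and finitely many arithmetic progressions $\{a_m+b_m t:t\geq0\}$ whose union equals $H=\{n:\Phi^nx_0\in v_0+\tilde G\}$. For each $n\in H_0$ I would check directly (a finite computation) whether $\Phi^nx_0-v_0\in W_{\mathbb Q}$. For each progression $a_m+b_m t$, writing $\Phi^{a_m+b_m t}x_0=(\Phi^{b_m})^t(\Phi^{a_m}x_0)$, I would ask a \textup{PHP}‑oracle whether the orbit of the matrix $\Phi^{b_m}$ starting at $\Phi^{a_m}x_0$ meets the translate $v_0+\{x:h_j(x)\geq0\}$. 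The \textup{IHP} instance is answered `yes' iff at least one of the finitely many direct checks or one of the finitely many oracle calls returns `yes'. Correctness is immediate from the previous paragraph: an exponent $n$ is a witness for the \textup{IHP} exactly when $n$ satisfies the lattice condition (i.e.\ $n\in H_0$ or $n$ lies on one of the progressions) and in addition $\Phi^nx_0-v_0\in W_{\mathbb Q}$.

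The argument is mostly bookkeeping once the three propositions are available; the one point needing care is that $W_{\mathbb Q}$ is in general not full‑dimensional, so the linear‑span part of Proposition~\ref{subspace} must be absorbed into the half‑space description of the cone rather than treated as a separate \textup{PHP}/\textup{ODP} call — which is precisely what the dual‑description algorithm delivers. The other mild subtlety is that the lattice condition and the cone condition interact only through the common exponent $n$; this coupling is exactly what the arithmetic‑progression decomposition from Proposition~\ref{group} resolves, by letting us substitute $n=a_m+b_m t$ and reduce each branch to a single \textup{PHP} instance for the matrix $\Phi^{b_m}$.
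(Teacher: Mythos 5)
Your proof is correct and follows essentially the same route as the paper: reduce integral-cone membership to the conjunction of a rational-cone condition and a full-rank-lattice condition via Propositions~\ref{simplex} and~\ref{subspace}, apply Proposition~\ref{group} to decompose the exponents satisfying the lattice condition into a finite set plus arithmetic progressions, and for each progression issue one PHP-oracle call with matrix $\Phi^{b_m}$ and start vector $\Phi^{a_m}x_0$. Your explicit remark that the half-space description of $W_{\mathbb Q}$ already absorbs the span condition (so no separate equality test is needed) is a useful clarification of a point the paper leaves implicit.
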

\begin{proof}
  Let us describe an algorithm with PHP-oracle that solves the IHP for
  instances of 
  linear independent set of vectors
  $v_1$, \dots,  $v_r$.

  First let find the set $H$ defined in proposition~\ref{group}. We
  should check whether condition~\eqref{Zhit-def} holds for $H$. The
  finite part $H_0\subseteq H$ can be checked directly.  And for each
  arithmetic progression $n=n_0+Nk$, $k=0,1,\dots$ in $H$ we compute a
  vector
  $x_1=\Phi^{n_0}$  and a matrix
  $\Phi_1=\Phi^N$. 
  Then using PHP-oracle we check whether there exists $k$ such that
  $\Phi_1^kx_1-v_0$ 
  hits the rational cone~$W_{\mathbb Q}$. 
  If all such tests fail we answer `no'. Otherwise we answer `yes'.

To check correctness of the algorithm involved note that it follows
from the propositions~\ref{subspace} and \ref{group} that all tests
fail if and only if the orbit $\Phi^nx_0$ either does not intersect
the cone~$W_{\mathbb Q}$ at all or hits it for exponents~$n$ such that
$\Phi^nx_0-v_0$ is not in the group~$G$.  Hence, the answer in the IHP
is negative.

On the other hand, if some test is successful then 
we find~$n$, such that   $\Phi^nx_0-v_0$ is in the group $G$ and hits the 
cone~$W_{\mathbb Q}$. 
It follows from proposition~\ref{simplex} 
that the answer in the  IHP is positive.
\end{proof}

Now we discuss the general case.  Let 
$$W_{\mathbb N}={\mathbb N}(v_1,\dots,v_r)$$ 
be the set of integral linear combinations
of~$v_i$ with nonnegative coefficients (the integral cone generated
by~$v_i$). Below we describe an algorithm that given any integral cone
$W_{\mathbb N}$ produces its representation as a union of a finite set
$W_0$ of singular points and a finite set of translates of integral
simplicial cones $u_i+{\mathbb N}(v_{i_1},\dots,v_{i_t})$.

Recall that Caratheodory's theorem (see, e.g.~\cite[\S7.7]{Schrijver})
states that any vector in ${\mathbb Q}_+(v_1,\dots, v_r)$ 
is a rational nonnegative linear 
combination of some $\leq s$ (where $s$ is
the dimension) vectors from the system
$v_1$, \dots, $v_r$.

As one can form not more than $\binom rs$ systems of linear
independent vectors out of $r$ vectors, then it is sufficient to
consider the case of the intersection of the integral cone $W_{\mathbb
N}$ and a rational simplicial cone with $s$ generators chosen from the
set $v_1,\dots, v_r$.

Let $K$ be one of such simplicial cones and let $\tilde
v_1,\dots,\tilde v_t$ be the set of its generating vectors.  $K$
contains ${\mathbb N}(\tilde v_1,\dots,\tilde v_t)$, but besides that
$K$ may contain some extra integral points.  Now we use the fact that
$K$ has \emph{Hilbert basis\/}: a set of integral vectors $u_1$,
\dots, $u_m$, such that
\begin{equation}\label{Hilbert}
  K\cap{\mathbb Z}^s={\mathbb N}(u_1,\dots,u_m).
\end{equation}
In particular, it follows from~\eqref{Hilbert} 
that
$$
K\cap{\mathbb Z}^s=\bigcup_{i=1}^m\big(u_i+{\mathbb N}(\tilde
v_1,\dots,\tilde v_t)\big). 
$$

Recall that one can effectively find Hilbert basis for~$K$
as it coincides with the set of integral points of the polytope
\begin{equation*}
  \{\lambda_1\tilde v_1+\lambda_2\tilde v_2+\dots+\lambda_t\tilde v_t : 0\leq
  \lambda_i\leq 1\}.
\end{equation*}

Some vectors from Hilbert basis belongs to the subgroup ${\mathbb
Z}^s$, generated by $v_i$, i.e. to the integral hull of $v_i$.
Checking this condition is effective as it is reduced to solving linear
Diophantine equations.

Let take one of such vectors
$u=\sum_{i}b_iv_i$ and let
$B$ be maximum of modules of the coefficients
$b_i$. 
Now all points of the intersection of 
$u+{\mathbb N}(\tilde v_1,\dots,\tilde v_t)$ and $W_{\mathbb N}$ belong to the 
union of the finite set
\begin{equation}\label{first}
\{x: x=u+\sum_{i=1}^ta_i\tilde v_i,\ 0\leq a_i\leq B\},
\end{equation}
the integral cone
\begin{equation}\label{second}
  u+(B+1)\sum_i\tilde v_i+{\mathbb N}(\tilde v_1,\dots,\tilde v_t)
\end{equation}
and  $(B+1)t$ sets of the form
\begin{equation}\label{rest}
  \big(u+a\tilde v_i+{\mathbb Q}_+(\tilde v_1,\tilde v_2,\dots, 
  \tilde v_{i-1}, \tilde v_{i+1},\dots, \tilde v_t)\big)\cap
  W_{\mathbb N},\quad
  0\leq a\leq B,\ 1\leq i\leq t.
\end{equation}
Each set in~\eqref{rest} is an intersection of the integral cone
$W_{\mathbb N}$
and a translate of some rational simplicial cone of the lower
dimension
$t-1$.

Now we can give a complete description of the algorithm that given an
integral cone $W_{\mathbb N}$ and a translate of some rational
simplicial cone whose generating vectors are taken from the set of
generators of $W_{\mathbb N}$, finds representation of the
intersection of the cones involved as a union of a finite set of
singular points~$W_0$ and a finite set of translates of simplicial
integral cones.

The procedure is recursive.  Applying it to the intersection of
$W_{\mathbb N}$ and a $t$-dimen\-sional rational simplicial cone $K$ we
at first compute Hilbert basis of~$K$. Then for any vector in Hilbert
basis that belongs to the integral hull of vectors $v_1$, \dots,
$v_r$, we find the sets~\eqref{first},~\eqref{second}
and~\eqref{rest}.

The set~\eqref{first} is added to the singular part~$W_0$.  The
simplicial integral cone~\eqref{second} is included to the family of
integral cones computed on the previous steps and the algorithm is
continued recursively by processing all sets~\eqref{rest}.

The procedure is finite as the sets~\eqref{rest} are empty for any
one-dimensional cone.  
Hence, the recursion tree has finite height and finite degree at any point.

Thus the following theorem is proved.

\begin{theorem}\label{FrobNdim}
An integral translate of an integral cone
$$v_0+{\mathbb N}(v_1,\dots,v_r)$$ 
can be represented as a union of a finite set and a finite family of translates
of simplicial integral cones. There is an algorithm that finds such a representation
from the list of vectors 
$v_0$, $v_1,\dots,v_r$.
\end{theorem}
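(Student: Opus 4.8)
The plan is to reduce the statement to a concrete recursive procedure built around three classical ingredients: Carathéodory's theorem, Hilbert bases of rational simplicial cones, and solvability of linear Diophantine systems. First I would dispose of the shift vector $v_0$ by translation — the intersection $x\in v_0+{\mathbb N}(v_1,\dots,v_r)$ is equivalent to $x-v_0\in {\mathbb N}(v_1,\dots,v_r)$ — so it suffices to represent the (unshifted) integral cone $W_{\mathbb N}={\mathbb N}(v_1,\dots,v_r)$; the final representation is then translated back by $v_0$. Next, by Carathéodory's theorem every point of the rational cone ${\mathbb Q}_+(v_1,\dots,v_r)$, and hence every point of $W_{\mathbb N}$, lies in one of the at most $\binom{r}{s}$ rational simplicial subcones spanned by a linearly independent subfamily of the $v_i$ (here $s$ is the ambient dimension). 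So $W_{\mathbb N}$ is the finite union, over all such subcones $K$, of the sets $W_{\mathbb N}\cap K$, and it is enough to represent each such set.

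The core of the argument is a recursion on the dimension $t$ of the rational simplicial cone $K$ in the pair $(W_{\mathbb N},K)$. Given $K$ with generating vectors $\tilde v_1,\dots,\tilde v_t$, I would compute its Hilbert basis — effectively, since it equals the set of integral points of the half-open parallelepiped $\{\sum_i\lambda_i\tilde v_i : 0\le\lambda_i\le 1\}$ — and write $K\cap{\mathbb Z}^s=\bigcup_i\bigl(u_i+{\mathbb N}(\tilde v_1,\dots,\tilde v_t)\bigr)$. For each Hilbert basis vector $u$ one first checks, by solving a linear Diophantine system, whether $u$ lies in the integral hull of $v_1,\dots,v_r$; if not, $u+{\mathbb N}(\tilde v_1,\dots,\tilde v_t)$ contributes nothing to $W_{\mathbb N}$ and is discarded. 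If $u=\sum_i b_i v_i$ with $b_i\in{\mathbb Z}$ and $B=\max_i|b_i|$, then the intersection of $u+{\mathbb N}(\tilde v_1,\dots,\tilde v_t)$ with $W_{\mathbb N}$ splits, as displayed in~\eqref{first}--\eqref{rest}, into a finite set, one simplicial integral cone $u+(B+1)\sum_i\tilde v_i+{\mathbb N}(\tilde v_1,\dots,\tilde v_t)$, and $(B+1)t$ sets of the form~\eqref{rest}, each of which is the intersection of $W_{\mathbb N}$ with a translate of a rational simplicial cone of strictly smaller dimension $t-1$. I would feed these lower-dimensional pairs back into the procedure, collect all emitted singular points into the finite set $W_0$ and all emitted simplicial integral cones into the final family, and finally translate everything by $v_0$.

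Termination is the one point that needs care: since each recursive call strictly drops the simplicial-cone dimension, and the sets~\eqref{rest} are empty for a one-dimensional cone (a one-dimensional integral translate of a ray inside an integral cone generated by one of its own generators is already of the desired form, or reduces to finitely many points), the recursion tree has height at most $s$; at each node the branching is finite (at most $\binom{r}{s}$ Carathéodory subcones at the top, then at each internal node finitely many Hilbert basis vectors times $(B+1)t$), so the whole tree is finite and the algorithm halts, outputting a finite set and a finite family of translates of simplicial integral cones. The main obstacle, and the place where I would be most careful in a full write-up, is verifying the set identity~\eqref{first}--\eqref{rest}: that every integral point of $u+{\mathbb N}(\tilde v_1,\dots,\tilde v_t)$ lying in $W_{\mathbb N}$ is accounted for exactly once (or at least covered) by the finite part, the shifted full-dimensional cone, and the lower-dimensional residual pieces — this is an elementary but slightly fiddly case analysis on the sizes of the coefficients $a_i$ in $x=u+\sum_i a_i\tilde v_i$ relative to the bound $B$ coming from expressing $u$ over the $v_j$.
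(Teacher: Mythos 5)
Your proposal follows the paper's argument essentially step for step: translate away $v_0$, apply Carath\'eodory's theorem to cover $W_{\mathbb N}$ by at most $\binom{r}{s}$ rational simplicial subcones, use the Hilbert basis of each subcone $K$ to write $K\cap\mathbb{Z}^s$ as a finite union of translates $u+\mathbb{N}(\tilde v_1,\dots,\tilde v_t)$, discard the $u$ not in the integral hull of the $v_j$, and split each remaining intersection with $W_{\mathbb N}$ into the finite set~\eqref{first}, the full-dimensional simplicial integral cone~\eqref{second}, and the lower-dimensional residual pieces~\eqref{rest} that drive the recursion, which terminates because the simplicial dimension strictly drops. This is the same decomposition, the same recursion, and the same termination argument as in the paper.
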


Now the main result of this section easily follows.

\begin{proof}[Proof of theorem~\textup{\ref{Zhit->Qhit}}]
  Using algorithm from Theorem~\ref{FrobNdim} find representation 
  of the integer cone  
  as a union of a finite set of singular points
  and finite family of translates of simplicial integral cones. Now
  for any cone in the family check whether the orbit hits it using
  PHP-oracle from Lemma~\ref{Zsimplexhit->Qhit}.

  Checking whether or not the orbit hits a point (and hence, 
  any finite set of points) is
  also reduced to the PHP as a point may be regarded as a translate of
  the zero cone.
\end{proof}

\section{Decidable and undecidable variants of the regular
  realizability problem}\label{decid-and-undec}

The Skolem problem is open for almost eighty years.  Using slight
abuse of language, presently it falls `on the border between
decidability and undecidability' \cite{Halava05}. In 
this
section
we show that analogous `borderline' pattern holds for a more general
\fP-realizability problem and give some 
decidable and undecidable problems closely related to it.
All these 
problems
are problems of regular realizability. The languages 
specifying the problems consist of binary \emph{block words} separated by the
delimiter $\#$. All blocks have the same length (the block rank $n$). 
Blocks of a block word form a multiset of binary words of the
same length. We will define languages indicating the properties of
this \emph{block multiset\/}.

Now we present decidable examples. 

\emph{The surjective filter} \fS{} consists of those block words  which
  block multiset contains all words of the length $n$.

\emph{The injective filter} $\fI$ consists of those block words  which
  block multiset is a set, i.e. each block appears at most once in a
  word from \fI.

It is clear from the definitions that  $\fI\cap\fS=\fP$.

The problems of \fI-realizability and \fS-realizability are
decidable. Let's outline the proof.  It turns out that 
both
\fI-
or, respectively, \fS- realizability 
can be reduced
to some restricted versions of the
integer cone hitting problem
whose decidability 
follows from specific properties of maps~$\Phi$ defined in
Subsection~\ref{route}.

We use the componentwise partial order on the integer orthant $\ZZ_+^d$
\begin{equation}
  x\prec y\ \Leftrightarrow \ x_i\leq y_i\ \text{for all }i.
\end{equation}

\medskip

\textbf{Up-hitting Problem}

{\sc INPUT:}
a square matrix $\Phi$ of order $d$; 
a $d$-dimensional vector $x_0$; 
a family of vectors $v_{i}\in \ZZ^d$, $i=0,1,\dots,r$. 

{\sc OUTPUT:}
`yes' if  the \emph{orbit up-shadow\/} 
intersects the translate of the integer cone $v_0+{\mathbb
  N}(v_1,\dots,v_r)=v_0+W$ 
and `no' otherwise. It means that 
\begin{equation}\label{suporbit}
  \Phi^nx_0\prec y
\end{equation}
holds for some integer~$n$ and $y\in v_0+W$.

The \textbf{Down-hitting problem} is defined similarly except for the
condition~\eqref{suporbit} which is replaced by the condition
\begin{equation}\label{suborbit}
  y\prec \Phi^nx_0.
\end{equation}

\medskip

\begin{lemma}\label{Inj->suborbit}
  The \reg(\fI) problem is Turing reducible to the down-hitting problem. 
\end{lemma}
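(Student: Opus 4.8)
The reduction should follow the template of Lemma~\ref{PermF->hitRoute} (the reduction of \fP-realizability to WWHP), since \fI-realizability is the "injective half" of \fP-realizability and the injective filter $\fI$ differs from $\fP$ only by dropping the surjectivity requirement. So I would take a deterministic automaton $A$ with state set $Q$, initial state $q_s$, accepting set $Q_a$, recognising the input regular language $R\subseteq\{0,1,\#\}^*$, pass to the transition monoid, and set up exactly the linear map $\Phi$ on $\QQ(Q^Q)$ of~\eqref{Phi-def}, together with the counting vectors $\nu_n(g)$ of Proposition~\ref{2Phi}. The key new point is that an accepting run of $A$ on a word $\#w_1\#\dots\#w_N\#$ whose blocks $w_i$ are \emph{distinct} binary words of length $n$ (but not necessarily all of them) is witnessed by choosing, for each map $g\in Q^Q$, a number $k_g$ of blocks inducing $g$ with $0\le k_g\le \nu_n(g)$, and a way of ordering those $\sum_g k_g$ blocks so that the composition of $f_\#g$'s carries $f_\#(q_s)$ to $q_f$. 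As in Subsection~\ref{route}, the ordering condition is a walk-in-the-Cayley-graph condition; the novelty is the inequality $k_g\le\nu_n(g)$ rather than equality $k_g=\nu_n(g)$.

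The concrete construction: build the Cayley graph $\Gamma_M$ of the monoid $M$ generated by $\{f_\#f:f\in M_{01}\}$, colour each edge $(h,(f_\#g)h)$ by the generator $f_\#g$, with $s=|M_{01}|$ colours. For each accepting state $q_f$ and each $h\in M$ with $h(f_\#(q_s))=q_f$, the existence of a valid $\fI$-word with block rank $n$ amounts to: there is a walk $\tau$ from $\mathrm{id}$ to $h$ in $\Gamma_M$ whose colour-count vector $w(\tau)$ satisfies $w(\tau)\prec \Phi^n e(\mathrm{id})$ (coordinatewise, after the zero-padding of the Remark in Subsection~\ref{route}). Since the set of such $w(\tau)$ is, by Lemma~\ref{basic}, a finite union of translates $v_0+\NN(v_1,\dots,v_r)$ of integer cones, the condition becomes: for some $n$, $\Phi^n e(\mathrm{id})$ lies in the up-shadow of $v_0+\NN(v_1,\dots,v_r)$, which is precisely an instance of the down-hitting problem (the orbit point dominates a point of the cone — note the direction: we need $w(\tau)\prec\Phi^n x_0$, i.e. the cone element is $\prec$ the orbit point, matching~\eqref{suborbit}). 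Feeding all these finitely many instances (over $q_f$, over $h$, and over the finitely many cones produced by Parikh's theorem) to the down-hitting oracle and answering "yes" iff some instance answers "yes" gives the reduction.

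Correctness splits the usual two ways. If $R\cap\fI\ne\varnothing$, pick such a word; its blocks are distinct, inducing maps $g_1,\dots,g_t$ on the block-level with the accepting-run composition equal to some $h$ with $h(f_\#(q_s))=q_f\in Q_a$; reading off the sequence of $f_\#g$'s gives a walk $\tau$ in $\Gamma_M$ from $\mathrm{id}$ to $h$, and since distinct length-$n$ blocks can induce each $g$ at most $\nu_n(g)$ times, $w(\tau)\prec \Phi^n e(\mathrm{id})$ by Proposition~\ref{2Phi}; hence the corresponding down-hitting instance answers "yes". Conversely, if some instance answers "yes", we get $n$, a walk $\tau$ to some $h$ with $h(f_\#(q_s))=q_f\in Q_a$, and $w(\tau)\prec\Phi^n e(\mathrm{id})$; by the combinatorial meaning of $\nu_n$ we can realise the walk's colour multiset by an actual choice of \emph{distinct} length-$n$ blocks (for each $g$ we need $w(\tau)_{f_\#g}\le\nu_n(g)$ distinct blocks inducing $g$, which exist), concatenate them in the walk order with delimiters, producing a word in $\fI$ accepted by $A$; so $R\cap\fI\ne\varnothing$.

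The main obstacle is the direction/padding bookkeeping: one must be careful that "injective" means each \emph{binary word} appears at most once, which is exactly why the bound $k_g\le\nu_n(g)$ on the number of blocks inducing a given monoid element $g$ is correct (distinct preimages under $w\mapsto f(w)$), and that the inequality in the down-hitting problem points the right way ($\Phi^n x_0$ on the larger side). Everything else — constructing $M$, $M_{01}$, $\Gamma_M$, applying Lemma~\ref{basic}, enumerating $(q_f,h)$ pairs — is routine and effective, exactly parallel to Subsection~\ref{route}.
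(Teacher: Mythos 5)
Your proposal is correct and follows exactly the same approach as the paper: reuse the setup of Lemma~\ref{PermF->hitRoute} (transition monoid, linear map~$\Phi$ of~\eqref{Phi-def}, Cayley graph $\Gamma_M$), observe that injectivity weakens the exact-count condition~\eqref{Cayley-hit} to the componentwise inequality $w(\tau)\prec\Phi^n x_0$, and then push this inequality through the Parikh-theorem cone decomposition of Lemma~\ref{WWHP->IHP} to land on the down-hitting condition~\eqref{suborbit}. The paper's proof says this in three sentences; your write-up fills in the same argument with explicit correctness checks in both directions, which match what the paper leaves implicit.
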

\begin{proof}
  Repeat the arguments from the proof of
  Lemma~\ref{PermF->hitRoute}. Now it is possible that 
  some binary words  in a block word are missed. It means that the
  condition~\eqref{Cayley-hit} is replaced by 
  \begin{equation}\label{down-hit}
    w(\tau)\prec\Phi^n x_0.
  \end{equation}
  Applying the arguments from Lemma~\ref{WWHP->IHP} we see
  that~\eqref{down-hit} is transformed to~\eqref{suborbit} for cones
  appeared in the reduction from  Lemma~\ref{WWHP->IHP}. 
\end{proof}

In a similar way we reduce the surjective filter.

\begin{lemma}\label{Sur->suporbit}
  The \reg(\fS) problem is Turing reducible to the up-hitting problem. 
\end{lemma}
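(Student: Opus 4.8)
The plan is to transcribe the proof of Lemma~\ref{Inj->suborbit} (which itself follows Lemma~\ref{PermF->hitRoute}), replacing one inequality by its mirror image and then repairing the single place where surjectivity behaves differently from injectivity. As in Subsection~\ref{route}, for each accepting state $q_f$ of the input automaton and each map $h$ with $h(f_\#(q_s))=q_f$ I pass to the transition monoid, form the Cayley graph $\Gamma_M$, and for $g\in Q^Q$ write $\nu_n(g)$ for the number of $w\in\{0,1\}^n$ with $f(w)=g$; by Proposition~\ref{2Phi} this is the $e(g)$-coordinate of $\Phi^n e(\id)$. A \emph{surjective} block word of block rank~$n$ differs from a permutation word only in that each binary word of length~$n$ must occur \emph{at least} once instead of exactly once, so the number of blocks whose induced map is~$g$ is now \emph{at least} $\nu_n(g)$. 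Consequently the exact-hit condition~\eqref{Cayley-hit} gets replaced by $\Phi^n e(\id)\prec w(\tau)$, and running this through the argument of Lemma~\ref{WWHP->IHP} (Parikh's theorem together with Lemma~\ref{basic}) turns it into the up-hitting condition~\eqref{suporbit} for the finitely many translates of integer cones that make up the walk-weight set.

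The step I expect to be the real obstacle is that, in contrast with the injective case, the inequality $\Phi^n e(\id)\prec w(\tau)$ does not by itself certify that $\tau$ is realizable at block rank~$n$. In Lemma~\ref{Inj->suborbit} the reverse inequality $w(\tau)\prec\Phi^n e(\id)$ automatically forces every edge $f_\#g$ used by $\tau$ to satisfy $\nu_n(g)\geq w(\tau)_g\geq1$, hence to be induced by some length-$n$ word; for the surjective filter no such automatic support control is available, and a naive reduction would accept spurious walks traversing an edge $f_\#g$ with $\nu_n(g)=0$. The repair exploits that the support $S_n:=\{f(w):w\in\{0,1\}^n\}$, which by Proposition~\ref{2Phi} equals $\operatorname{supp}\Phi^n e(\id)$, is transformed into $S_{n+1}=\{g f_0:g\in S_n\}\cup\{g f_1:g\in S_n\}$ by~\eqref{Phi-def}; being a subset of the finite set $Q^Q$ and a function of its predecessor, the sequence $(S_n)_{n\geq1}$ is eventually periodic, with preperiod and period computable from the automaton. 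On each residue class of exponents where $S_n\equiv T$ is constant I restrict $\Gamma_M$ to the edges $f_\#g$ with $g\in T$. Then every walk weight is supported inside~$T$, while imposing $\Phi^n e(\id)\prec w(\tau)$ (whose left side has support exactly $S_n=T$) forces $\operatorname{supp}w(\tau)=T=S_n$, and together with $w(\tau)_g\geq\nu_n(g)$ this is precisely the condition that $\tau$ arises from a surjective block word of block rank~$n$ (assign to the blocks inducing $g$ the $\nu_n(g)$ distinct words with that image, the rest arbitrary). So false positives are excluded.

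It remains to package this into an oracle algorithm. For each accepting $q_f$, each $h$ with $h(f_\#(q_s))=q_f$, each of the finitely many residue classes $n\equiv r\pmod p$ on which $S_n\equiv T$, and each translate $u+\NN(v_1,\dots,v_t)$ of an integer cone in the semilinear description (Lemma~\ref{basic}) of the set of weights of walks from $\id$ to $h$ in the $T$-restricted Cayley graph, the algorithm queries the up-hitting oracle on the instance with matrix $\Phi_1=\Phi^p$, start vector $x_1=\Phi^r e(\id)$ and cone $u+\NN(v_1,\dots,v_t)$ --- the restriction of exponents to an arithmetic progression being handled exactly as in the proof of Lemma~\ref{Zsimplexhit->Qhit}. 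The finitely many exceptional small ranks $n$ below the preperiod are decided directly by a reachability search in the product of the automaton with the finite powerset lattice of $\{0,1\}^n$. The $\fS$-realizability instance is declared positive iff at least one query is positive: the forward implication is immediate from the discussion above, and the backward implication holds because, by the support argument of the previous paragraph, any positive query yields a genuine surjective block word driving $q_s$ to an accepting state. Everything except the eventual periodicity of $(S_n)$ is a routine rewriting of Lemmas~\ref{PermF->hitRoute} and~\ref{Inj->suborbit}.
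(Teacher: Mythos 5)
Your proposal is correct, and in fact more careful than the paper at the one point where care is genuinely required. The paper's proof of Lemma~\ref{Sur->suporbit} is two sentences: it says that since a surjective block word contains every length-$n$ word at least once, the exact-count condition~\eqref{Cayley-hit} from Lemma~\ref{PermF->hitRoute} becomes the up-hitting condition~\eqref{suporbit}, and then invokes Lemma~\ref{WWHP->IHP}. It does not address the support asymmetry you identified. Your observation is accurate: in the injective case the inequality $w(\tau)\prec\Phi^n e(\id)$ automatically guarantees $w(\tau)_g=0$ whenever $\nu_n(g)=0$, so every edge used by the walk is witnessed by a length-$n$ binary word; in the surjective case the inequality $\Phi^n e(\id)\prec w(\tau)$ gives no such guarantee, and a walk traversing $f_\#g$ with $\nu_n(g)=0$ is not realizable by a block word of rank $n$ even though the coordinate inequality $0=\nu_n(g)\leq w(\tau)_g$ is trivially satisfied. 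So the naive up-hitting query on the unrestricted walk-weight set can in principle return a false positive, and some additional device is needed to enforce $\operatorname{supp}w(\tau)\subseteq\operatorname{supp}\Phi^n e(\id)$.

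Your repair is sound and uses exactly the machinery the paper develops anyway for the decidability theorems that follow: eventual periodicity of the supports $S_n=\operatorname{supp}\Phi^n e(\id)$, which is a direct consequence of Proposition~\ref{1-letterPos} (semilinearity of each $\{n:\nu_n(g)>0\}$ with all period lengths dividing $N=\mathrm{LCM}(1,\dots,|V(\Gamma)|)$) and of the monotonicity in Proposition~\ref{uniform-monotone}. Restricting the Cayley graph to edges $f_\#g$ with $g$ in the stable support of the relevant residue class, and replacing $(\Phi,x_0)$ by $(\Phi^p,\Phi^r e(\id))$ for each class, is the same arithmetic-progression trick the paper itself uses in Lemma~\ref{Zsimplexhit->Qhit}, so the reduction remains a legitimate Turing reduction to the up-hitting problem with a finite number of oracle calls. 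The finitely many small ranks below the preperiod are handled separately, as you note. In short: you take the same route as the paper (mirror the injective reduction through the Cayley graph and Lemma~\ref{WWHP->IHP}), but you correctly flag and repair the missing support-control step, which the paper's terse proof silently assumes; the paper's decidability argument for $\fS$-realizability should likewise be read as tacitly working on the support-restricted walk-weight sets.
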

\begin{proof}
  In a block word taken from the \fS{} all binary words of the
  length~$n$ appear (possibly, several times). It leads to the
  condition~\eqref{suporbit} for cones appeared in the reduction from
  Lemma~\ref{WWHP->IHP}.  
\end{proof}

Dickson's lemma 
claims that there are no infinite antichains in the poset
$(\ZZ^d_+,\prec)$. So an orbit up-shadow is a finite union of
translated copies of the orthant $\ZZ^d_+$. In the case of an orbit
down-shadow copies of the orthant are replaced by  `parallelepipedons'
(the Cartesian products of segments). Thus the up-hitting problem as
well as the down-hitting problem is reduced to the nonemptiness check
for intersections of integer cones (parallelepipedons), which is an
integer linear
programming problem, provided the representations
mentioned above can be constructed efficiently.  

To construct the aforementioned representations we use specific
properties of asymptotic behavior of the orbit points~$\Phi^nx_0$,
where $\Phi$ is determined by~\eqref{Phi-def}.

Recall that an $e(g)$-component of $\Phi^nx_0$ is expressed as the
number $\nu_n(g)$ of walks of the length~$n$ from the vertex $\id$ to
the vertex $g$ in the graph $\Gamma$.  The vertex set of the graph
$\Gamma$ is $V(\Gamma)=Q^Q$ and the edge set $E(\Gamma)$ consists of
pairs in the form $(f,ff_0)$ or $(f,ff_1)$.  (See
Subsection~\ref{route} for a detailed exposition.)

Note that the integer $\nu_n(g)$ is the number of words of the length $n$
in a regular language. This language is recognized by an automaton with
the transition graph $\Gamma$. Thus the generating function 
$$
\ph_g(t)=\sum_{n=0}^\infty \nu_n(g)t^n
$$
for the
language is a rational function and its representation in the form
$P(t)/Q(t)$ can be found efficiently.  

In the arguments below we need some properties specific to generating
functions of regular languages.
So it is more suitable for our purposes to analyze the asymptotic
behavior in combinatorial settings by considering walks on the
graph $\Gamma$.

\begin{prop}\label{1-letterPos}
  For any vertex $g\in V(\Gamma)$ the set
  $$
  P_g=\{n: \nu_n(g)>0\}
  $$
  is a semilinear set (a finite union of an exceptional finite set and
  finite collection of arithmetic progressions) and its description
  can be constructed efficiently.
\end{prop}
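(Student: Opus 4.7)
The plan is to recognize $P_g$ as the length set of a regular language and appeal to the standard structure theorem for regular languages over a one-letter alphabet. Concretely, let $L_g \subseteq \{0,1\}^*$ be the language accepted by the DFA whose transition graph is $\Gamma$, with initial state $\id$ and unique accepting state $g$; then by definition $\nu_n(g) = |\{w \in L_g : |w|=n\}|$, so $P_g = \{|w| : w \in L_g\}$. If $h \colon \{0,1\}^* \to \{a\}^*$ is the homomorphism with $h(0)=h(1)=a$, regular languages are closed under homomorphisms, so $h(L_g)$ is a regular subset of $\{a\}^*$, and $P_g = \{n : a^n \in h(L_g)\}$.

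Next I would use the elementary fact that any regular subset of $\{a\}^*$ is ultimately periodic. One standard way to see this constructively: determinize the NFA obtained from $\Gamma$ by relabelling every edge with $a$; a DFA over a unary alphabet has the shape of a simple ``lasso'' (a tail followed by a cycle). The accepting states lying on the tail give a finite exceptional set, and the accepting states lying on the cycle give a finite collection of arithmetic progressions whose common period equals the cycle length. Reading off these data produces a semilinear description of $P_g$, and all of these operations (subset construction, detection of the tail/cycle decomposition, identification of accepting states) are standard and terminate in finite time, establishing effectiveness.

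If one wishes to avoid the subset construction one can argue directly on $\Gamma$. Let $U \subseteq V(\Gamma)$ be the set of vertices reachable from $\id$ that can also reach $g$, and restrict attention to the induced subgraph on $U$. For each strongly connected component of this subgraph that contains a cycle, compute the gcd of its simple cycle lengths; let $p$ be the gcd of all these numbers over components through which at least one walk from $\id$ to $g$ passes. Standard number-theoretic reasoning (a Sylvester--Frobenius style argument) then shows that for some effectively computable threshold $N_0$ the membership of $n \geq N_0$ in $P_g$ depends only on $n \bmod p$; the finitely many admissible residues are read off from the lengths of short $\id$-to-$g$ walks (of length $\leq |V(\Gamma)|$) together with the loop increments, and the values of $n < N_0$ are checked directly.

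The main obstacle is the quantitative part of the second approach, namely pinning down $N_0$ and verifying that every residue class mod $p$ realized by some walk is realized by \emph{all} sufficiently large $n$ in that class. This is why I prefer the first route via the one-letter DFA: the lasso structure of a unary DFA makes both the exceptional set and the periodic part manifest, and bounds of the form $N_0 \leq 2^{|V(\Gamma)|}$ (size of the determinized automaton) fall out for free.
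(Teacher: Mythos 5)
Your proof is correct and takes essentially the same route as the paper: the paper regards $\Gamma$ as a nondeterministic automaton over a one-letter alphabet and invokes Parikh's theorem, which in the unary case is exactly the effective ultimate periodicity of the length set that you establish via the lasso structure of the determinized unary automaton. The only difference is that you prove the needed unary special case of Parikh's theorem directly rather than citing it.
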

\begin{proof}
  Regard the $\Gamma$ as the transition graph of a nondeterministic
  automaton over a 1-letter alphabet. Then the proposition follows from
  Parikh's theorem.
\end{proof}

\begin{remark}\label{divisors}
  Differences of all progressions in Proposition~\ref{1-letterPos} are
  the cycle lengths in the graph $\Gamma$. So they are divisors of the 
  least common multiple of integers from $1$
  to $ |V(\Gamma)|$. In 
the sequel
use we denote this common multiple
  by~$N$.
\end{remark}

Take a vertex $g$ on a directed cycle of the length $\ell$. The
following inequality
\begin{equation}\label{monotone}
  \nu_{n+\ell}(g)\geq\nu_n(g).
\end{equation}
holds. Indeed, one can extend any walk of the length $n$ by the cycle. 

Now we divide the vertices of the graph $\Gamma$ into three groups.
\begin{itemize}
\item [--] $V_1$ consists of vertices $v$ such that some directed cycle
  (possibly, a loop) goes through the vertex $v$.
\item [--] $V_2$ consists of vertices $v$ such that there is a walk
  starting at the $\id$, finishing at the $v$ and passing through a
  vertex from the set~$V_1$.
\item [--] $V_3$ consists of all other vertices.
\end{itemize}

\begin{prop}\label{zero-set}
  If $g\in V_3$ then $\nu_n(g)=0$ for $n>|V(\Gamma)|$.
\end{prop}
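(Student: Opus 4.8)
The plan is to unwind the definitions of $V_1$, $V_2$, $V_3$ and show that a vertex in $V_3$ is only reachable from $\id$ by walks whose length is bounded by $|V(\Gamma)|$. First I would recall that $\nu_n(g)$ counts walks of length $n$ from $\id$ to $g$ in $\Gamma$, so $\nu_n(g)>0$ exactly when such a walk exists. Suppose $g\in V_3$ and $\tau$ is a walk of length $n$ from $\id$ to $g$. The key claim is that $\tau$ is a simple path (no repeated vertices): if some vertex $u$ were visited twice along $\tau$, then $\tau$ contains a closed subwalk through $u$, hence $u$ lies on a directed cycle, so $u\in V_1$; but then $\tau$ itself is a walk from $\id$ to $g$ passing through a vertex of $V_1$, which puts $g\in V_2$, contradicting $g\in V_3$. (The boundary cases — $u=\id$ or $u=g$ — are handled the same way, since a repeated vertex anywhere produces a cycle through it.)

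Once we know every walk from $\id$ to $g$ is a simple path, its length is at most $|V(\Gamma)|-1<|V(\Gamma)|$, because a simple path visits each vertex at most once and thus uses at most $|V(\Gamma)|-1$ edges. Consequently, for $n>|V(\Gamma)|$ there is no walk of length $n$ from $\id$ to $g$, i.e. $\nu_n(g)=0$, which is exactly the statement. I would also remark for safety that even $n=|V(\Gamma)|$ already forces $\nu_n(g)=0$, so the stated bound $n>|V(\Gamma)|$ is comfortably sufficient.

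I do not anticipate a serious obstacle here; the only point requiring a little care is the argument that a non-simple walk forces $g\in V_2$, in particular making sure the definition of $V_2$ ("there is a walk starting at $\id$, finishing at $v$ and passing through a vertex from $V_1$") is applied with the witnessing walk being $\tau$ itself and the $V_1$-vertex being the repeated vertex $u$. Everything else is the standard "long walk contains a repeated vertex, hence a cycle" pigeonhole observation.

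\begin{proof}
Recall that $\nu_n(g)$ is the number of walks of length $n$ from $\id$ to $g$ in $\Gamma$; in particular $\nu_n(g)>0$ iff at least one such walk exists. Let $g\in V_3$ and suppose $\tau$ is a walk of length $n$ from $\id$ to $g$. We claim $\tau$ visits no vertex twice. Indeed, if some vertex $u$ occurs twice along $\tau$, then the portion of $\tau$ between the two occurrences of $u$ is a directed closed walk through $u$, so some directed cycle passes through $u$ and hence $u\in V_1$. But then $\tau$ is a walk from $\id$ to $g$ passing through a vertex of $V_1$, so $g\in V_2$, contradicting $g\in V_3$. Therefore $\tau$ is a simple path, and a simple path in $\Gamma$ uses at most $|V(\Gamma)|-1$ edges, so $n\leq |V(\Gamma)|-1$. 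Consequently, for every $n>|V(\Gamma)|$ (in fact already for $n\geq|V(\Gamma)|$) there is no such walk, i.e. $\nu_n(g)=0$.
\end{proof}
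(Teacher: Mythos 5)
Your argument is the same pigeonhole observation the paper uses: a walk of length $\geq|V(\Gamma)|$ repeats a vertex, that vertex lies on a directed cycle and hence in $V_1$, which would place $g$ in $V_2$, contradicting $g\in V_3$. The proposal is correct and matches the paper's proof, merely phrased contrapositively (every walk to $g$ is simple) rather than by direct contradiction.
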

\begin{proof}
  Any walk of the length $n>|V(\Gamma)|$ from $\id$ to $g$ must
  contain repeating vertices. It means that a part of the walk is a
  directed cycle. So the walk passes a vertex from the set $V_1$.
\end{proof}

\begin{prop}\label{uniform-monotone}
  The inequality
  \begin{equation}\label{N-monotone}
  \nu_{n+N}(g)\geq \nu_n(g),
  \end{equation}
  where $N=\mathrm{LCM}(1,\dots,|V(\Gamma)|$,
  holds for all
  $g\in V(\Gamma)$ and $n>|V(\Gamma)|$.
\end{prop}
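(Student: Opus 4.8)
The plan is to prove the uniform monotonicity inequality~\eqref{N-monotone} by exhibiting, for every walk of length $n$ from $\id$ to $g$, a corresponding walk of length $n+N$ from $\id$ to $g$, in an injective way. The key point is that when $n>|V(\Gamma)|$ any such walk must revisit a vertex, hence contains a directed cycle; splicing in extra copies of that cycle lengthens the walk while keeping the endpoints fixed. Since $N=\mathrm{LCM}(1,\dots,|V(\Gamma)|)$ is divisible by the length of \emph{every} simple cycle in $\Gamma$ (all cycle lengths are at most $|V(\Gamma)|$, cf.\ Remark~\ref{divisors}), we can always top up the length by exactly $N$ regardless of which cycle we found.

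First I would fix $g\in V(\Gamma)$ and $n>|V(\Gamma)|$, and take an arbitrary walk $\tau$ of length $n$ from $\id$ to $g$. Scanning $\tau$ from the start, let $v$ be the first vertex that repeats; then the segment of $\tau$ between the two occurrences of $v$ is a closed walk, and inside it we can extract a simple directed cycle $\gamma$ through some vertex $u$ lying on $\tau$, of length $\ell=|\gamma|\le|V(\Gamma)|$. Define $\tau'$ by inserting $N/\ell$ additional traversals of $\gamma$ at the first visit of $\tau$ to $u$. Then $\tau'$ is a walk from $\id$ to $g$ of length $n+N$. To get the inequality on counts I need this assignment $\tau\mapsto\tau'$ to be injective; the clean way is to make the choice of $u$ and $\gamma$ canonical (e.g.\ take the lexicographically-first repeated vertex as seen along $\tau$, and a canonically chosen simple cycle through it) and then observe that from $\tau'$ one recovers $\tau$ by deleting exactly $N/\ell$ consecutive copies of $\gamma$ at the designated position — so the map has a left inverse.

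An alternative, and perhaps smoother, route avoids the injectivity bookkeeping entirely: use Proposition~\ref{zero-set} to restrict attention to $g\in V_1\cup V_2$, since for $g\in V_3$ both sides of~\eqref{N-monotone} vanish when $n>|V(\Gamma)|$. For $g\in V_1$ the vertex $g$ itself lies on a cycle of some length $\ell\le|V(\Gamma)|$, and appending $N/\ell$ copies of that cycle at the \emph{end} of a walk is manifestly injective (the tail is deleted to invert), which gives~\eqref{monotone} iterated to step $N$. For $g\in V_2$ pick a fixed walk witnessing that $g$ is reachable through some $v\in V_1$; every walk $\id\to g$ either already passes through a $V_1$-vertex — in which case splice cycles there — or, more simply, one shows the generating function $\ph_g(t)=P(t)/Q(t)$ has the property that the dominant behaviour is governed by a cycle, and~\eqref{N-monotone} follows from $\nu_{n+N}(g)-\nu_n(g)$ counting walks that use an extra cycle. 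I expect the combinatorial splicing argument to be cleaner than the generating-function one.

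The main obstacle is making the cycle-insertion map provably injective in the general case $g\in V_2$, where the cycle one splices in need not be at a canonical location relative to $g$: two different walks could a priori be sent to the same longer walk if the inserted cycles overlap ambiguously with cycles already present in the walks. I would handle this by always inserting at the \emph{first} vertex along the walk that lies on $V_1$ (this position is determined by $\tau$), and always inserting copies of one fixed, globally-chosen simple cycle $\gamma_w$ through each $w\in V_1$; then the inverse operation — locate the first $V_1$-vertex $w$ on $\tau'$, strip off $N/|\gamma_w|$ leading copies of $\gamma_w$ there — is well-defined and recovers $\tau$. Verifying that stripping copies of $\gamma_w$ does not accidentally disconnect the walk or change the first-$V_1$-vertex position is the one routine check that needs care; it works because the inserted copies sit as a contiguous block immediately at the designated position.
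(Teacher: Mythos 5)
Your proof is correct but takes a genuinely different route from the paper's. The paper works case by case over the partition $V_1\cup V_2\cup V_3$: for $g\in V_3$ both sides vanish by Proposition~\ref{zero-set}; for $g\in V_1$ it iterates~\eqref{monotone} (which is your easy case, appending cycles at the end); and for the delicate case $g\in V_2$ it establishes a recurrence $\nu_n(g)=\sum_{g'\in V_g,\,k\le|V(\Gamma)|}p_{g',k}\,\nu_{n-k}(g')$ by showing that the subgraph $\Gamma_g$ of walks from a ``last $V_1$-layer'' $V_g$ into $g$ is acyclic with bounded path length, and then transfers $V_1$-monotonicity to $g$ term by term. You instead build one uniform injection, valid for every $g$ at once, from walks of length $n$ to walks of length $n+N$: at the first occurrence of the first $V_1$-vertex $w$ along the walk, splice in $N/|\gamma_w|$ copies of a globally fixed simple cycle $\gamma_w$ through $w$ (which has length at most $|V(\Gamma)|$, hence dividing $N$). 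Your injection is sound: the prefix up to $w$ contains no $V_1$-vertex, while the inserted block and everything after lies beyond, so $w$ and the insertion point are recoverable from the lengthened walk, and stripping the contiguous block of $N$ inserted edges there inverts the map. The hypothesis $n>|V(\Gamma)|$ enters for you only to guarantee that the walk revisits a vertex and hence meets a $V_1$-vertex at all. The trade-off is that the paper reuses the ready-made $V_1$ inequality and avoids bijection bookkeeping at the cost of proving acyclicity of $\Gamma_g$, whereas your argument is more elementary and self-contained but needs exactly the canonical choices you identify (first $V_1$-vertex, globally fixed cycle per vertex) to be invertible.
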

\begin{proof}
  For $g\in V_3$ apply Proposition~\ref{zero-set}.

  For $g\in V_1$ the inequality~\eqref{N-monotone} follows
  from~\eqref{monotone} and Remark~\ref{divisors}.

  Now take a vertex $g\in V_2$. The set $V_g\subseteq V_1$ consists of
  vertices $g'\in V_1$ such that $g'$ belongs to a walk from~$\id$ to
  $g$ and for all walks of this type all vertices after the $g'$ along a walk
  are in the set~$V_2$. The subgraph $\Gamma_g$ is induced by the
  edges of all walks from a vertex in $V_g$ to $g$. 

  Let observe the following properties of the $\Gamma_g$.

  There are no edges to vertices of the set $V_g$ in the graph
  $\Gamma_g$. Indeed, such an edge contradicts definition of the set~$V_g$.

  There are no edges outgoing from the vertex $g$ in the graph
  $\Gamma_g$. Otherwise one would detect a directed cycle passing
  through $g$.

  From these properties we conclude that the graph  $\Gamma_g$ is
  acyclic. By definition there are no directed cycles passing through
  vertices in the set $V_2$. Other vertices in the $\Gamma_g$ are in
  the set  $V_g$. There are no directed cycles passing through these
  vertices because there are no edges ingoing  to them.

  Note also that the maximum of the path length from $g'\in V_g$ to
  $g$ does not exceed  $|V(\Gamma)|$.

  From all the properties above we get
  \begin{equation}\label{V2-rec}
    \nu_n(g)=\sum_{\substack{g'\in V_g\\ k\leq |V(\Gamma)|}}
      p_{g',k}\nu_{n-k}(g'),
  \end{equation}
  where $p_{g',k}$ is the number of paths from $g' $ and $g$ in
  the~$\Gamma_g$.  

  Applying the inequality~\eqref{N-monotone} to all terms in the
  right-hand side 
  of~\eqref{V2-rec} we get the same inequality for the left-hand side,
  i.e. for the vertex~$g$.
\end{proof}

\begin{theorem}
  The \reg(\fS) problem is decidable.
\end{theorem}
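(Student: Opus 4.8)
The plan is to reduce \reg(\fS) to the up-hitting problem via Lemma~\ref{Sur->suporbit}, and then show that the up-hitting problem (for the specific matrices $\Phi$ of the form~\eqref{Phi-def}) is decidable. By Lemma~\ref{Sur->suporbit} it suffices to decide, for $\Phi$ defined by~\eqref{Phi-def}, whether $\Phi^nx_0\prec y$ for some $n$ and some $y\in v_0+W$, where $W=\NN(v_1,\dots,v_r)$. The key structural fact is Proposition~\ref{uniform-monotone}: for all $g\in V(\Gamma)$ and $n>|V(\Gamma)|$ the sequence $\nu_n(g)$ satisfies $\nu_{n+N}(g)\geq\nu_n(g)$, where $N=\mathrm{LCM}(1,\dots,|V(\Gamma)|)$. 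Since the $e(g)$-coordinate of $\Phi^nx_0$ is $\nu_n(g)$, this means that along each residue class $n\equiv n_0\pmod N$ the orbit points are \emph{componentwise nondecreasing} once $n>|V(\Gamma)|$.

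First I would handle the finitely many ``small'' indices $n\le|V(\Gamma)|$ directly: for each such $n$, checking whether $\Phi^nx_0\prec y$ for some $y\in v_0+W$ is an integer-linear-feasibility question (does the translated cone $v_0+W$ meet the ``box above'' $\{y:\Phi^nx_0\prec y\}$, which is a translate of the orthant $\ZZ_+^d$), hence decidable. Next, for each residue class $n_0\in\{0,1,\dots,N-1\}$ with $n>|V(\Gamma)|$, monotonicity implies that the up-shadow of the subsequence $\{\Phi^{n_0+Nk}x_0\}_{k\ge0}$ is an increasing union of translated orthants; by Dickson's lemma this up-shadow is a \emph{finite} union of translated orthants, namely it equals the up-shadow of the finitely many minimal orbit points in that class. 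The main point is that these minimal points can be found effectively: since the coordinates $\nu_{n_0+Nk}(g)$ are each eventually monotone in $k$ and governed by the rational generating functions $\ph_g(t)$, one can compute (using the semilinear description of each support set $P_g$ from Proposition~\ref{1-letterPos} and the explicit rational form $P(t)/Q(t)$) a bound $K_0$ beyond which no new minimal vectors appear — intuitively, once every coordinate that will ever grow has started growing, subsequent points only go up. Then the minimal elements among $\{\Phi^{n_0+Nk}x_0:0\le k\le K_0\}$ generate the whole up-shadow.

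Having reduced the up-shadow to a finite union of translated orthants $u+\ZZ_+^d$ (over all residue classes and all minimal points), the up-hitting question becomes: does $(u+\ZZ_+^d)\cap(v_0+W)\neq\varnothing$ for one of these finitely many $u$? Each such question is integer linear programming feasibility (intersect a translated orthant with a translated integer cone given by generators), which is decidable. Combining the small-index check and the finitely many orthant-vs-cone feasibility checks yields a decision procedure for the up-hitting problem restricted to matrices of the form~\eqref{Phi-def}, and hence, via Lemma~\ref{Sur->suporbit}, for \reg(\fS).

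The hard part will be making the ``effective Dickson'' step rigorous: proving that one can compute an explicit threshold $K_0$ such that the minimal points of the up-shadow of $\{\Phi^{n_0+Nk}x_0\}$ all occur for $k\le K_0$. This requires using the rational generating functions $\ph_g(t)=P(t)/Q(t)$ to control, coordinatewise, \emph{when} each coordinate becomes positive (Proposition~\ref{1-letterPos}) and to argue that strict increases $\nu_{n+N}(g)>\nu_n(g)$, once begun, persist — so that after finitely many steps the sequence of orbit points in a fixed residue class forms an ascending chain, whose only minimal element is then its first term. The monotonicity of Proposition~\ref{uniform-monotone} is exactly the tool that turns the potentially subtle asymptotics into this clean ascending-chain behavior; the remaining work is bookkeeping over the $N$ residue classes and the finitely many vertices $g$.
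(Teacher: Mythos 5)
Your approach is the same as the paper's: reduce to the up-hitting problem via Lemma~\ref{Sur->suporbit}, then use Proposition~\ref{uniform-monotone} to express the orbit up-shadow as a finite (and explicitly computable) union of translated orthants, and finally invoke integer linear programming. However, you significantly overcomplicate the middle step, and the ``hard part'' you flag at the end is not actually there. Proposition~\ref{uniform-monotone} gives \emph{unconditional} coordinatewise monotonicity $\nu_{n+N}(g)\geq\nu_n(g)$ for all $g$ and all $n>|V(\Gamma)|$ --- not merely \emph{eventual} monotonicity, and not monotonicity coordinate-by-coordinate with different onsets. Consequently, within each residue class $n\equiv n_0\pmod N$ with $n>|V(\Gamma)|$, the orbit points $\Phi^{n}x_0$ form an ascending chain in the order $\prec$ from the very first term onward. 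An ascending chain has a unique minimal element, namely its first term, so the up-shadow is simply
$$
\bigcup_{i=0}^{|V(\Gamma)|+N}\bigl(\Phi^{i}x_0+\NN^m\bigr),
$$
with $m=|Q^Q|$. There is nothing to compute beyond these first $|V(\Gamma)|+N+1$ orbit points, and hence no need for Dickson's lemma, no need to locate ``minimal points'' by inspecting rational generating functions $P(t)/Q(t)$, and no need for a threshold $K_0$ or the semilinear description from Proposition~\ref{1-letterPos}. Your own closing remark --- that monotonicity ``turns the potentially subtle asymptotics into this clean ascending-chain behavior'' --- is the whole proof; once you have an ascending chain the ``effective Dickson'' machinery you set up is vacuous. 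The paper's proof is accordingly a two-line application of Proposition~\ref{uniform-monotone} followed by the observation that hitting a translated orthant with a translated integer cone is an ILP feasibility test. Your proof is correct in substance but spends most of its length addressing a difficulty that the lemma already eliminates.
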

\begin{proof}
  It follows from Lemma~\ref{Sur->suporbit} that it is enough to
  construct an integer cone representation for an orbit up-shadow.
  Proposition~\ref{uniform-monotone} implies that the orbit up-shadow is
  $$
  \bigcup_{i=0}^N \big(\Phi^ix_0 + \NN^m\big),
  $$
  where   $m$ is a dimension, i.e. the cardinality of $Q^Q$. 

  So the problem is reduced  to the integer linear programming problem.
\end{proof}

To prove decidability of the injective filter we should determine for
each $0\leq r<N$ unbounded components of $\Phi^{nN+r}x_0$ and the
limit values of the bounded components. 

All subsequences $t^r_g(n)=\nu_{nN+r}(g)$, where $0\leq r <N$, are
nondecreasing due to Proposition~\ref{uniform-monotone}. 

A subsequence $t^r_g(n)$, where $g\in V_3$ stabilizes for
$n>|V(\Gamma)|$ and the limit value for it is $0$.

It follows from~\eqref{V2-rec} that a subsequence $t^r_g(n)$, where
$g\in V_2$, tends to
infinity iff at least one of the subsequences $t^{r-k}_{g'}(n)$ tends
to infinity, where $p_{g',k}\ne0$.

The remaining case  $t^r_g(n)$, where $g\in V_1$, is covered by the
following proposition. 

\begin{prop}\label{2infinity}
  Let  $g\in V_1$. Then $\lim\limits_{n\to\infty}t^r_g(n)=\infty$ iff
   there exist a directed cycle~$C$ passing through $g$ and an edge
   $(g',g'')$ such that
   \begin{itemize}
   \item [\textup{(i)}] the edge $(g',g'')$ is not included in the cycle~$C$;
   \item [\textup{(ii)}] the cycle $C$ passes through the vertex $g''$;
   \item [\textup{(iii)}]  $\nu_{nN+r-\ell-1}(g')>0$, where
   $\ell$ is the distance from $g''$ to $g$ along the cycle~$C$.
   \end{itemize}
\end{prop}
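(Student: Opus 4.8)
\textbf{Proof plan for Proposition~\ref{2infinity}.}
The plan is to analyze the subsequence $t^r_g(n)=\nu_{nN+r}(g)$ combinatorially by counting walks of length $nN+r$ from $\id$ to $g$ in $\Gamma$, and to pin down exactly when the count grows without bound. The key structural fact I would use is that since $g\in V_1$, the vertex $g$ lies on a directed cycle; and since the progression difference $N$ is a common multiple of all cycle lengths (Remark~\ref{divisors}), inserting a cycle into a walk increases its length by a multiple of $N$, so $t^r_g$ is nondecreasing by Proposition~\ref{uniform-monotone}. The subsequence tends to infinity precisely when, for arbitrarily large $n$, there is a walk of length $nN+r$ from $\id$ to $g$ that is ``genuinely different'' from all shorter ones in a way that cannot be absorbed just by pumping a single cycle. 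I expect this to happen exactly when there are at least two distinct ways to route a length-$N$ detour through $g$'s strongly connected component.

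\emph{The forward direction} ($\Rightarrow$): Suppose $t^r_g(n)\to\infty$. Since each walk of length $nN+r$ decomposes into a ``skeleton'' (a walk through $\Gamma$ of bounded length $\le|V(\Gamma)|$ to $g$, passing the right vertices) plus inserted cycles, and there are only boundedly many skeletons, unboundedness forces the number of distinct cycle-insertion patterns to grow. If every cycle through the component of $g$ that could be inserted were ``the same'' — i.e. if the only closed walks available were powers of one fixed cycle $C$ through $g$ — then the count $\nu_{nN+r}(g)$ would be bounded by the number of ways to distribute a fixed total detour length among a bounded skeleton, which is polynomially bounded only if there is genuine branching. More precisely, I would argue that unboundedness yields two closed walks through the component of $g$ of the same total length that differ, which after reorganization produces a cycle $C$ through $g$ together with an edge $(g',g'')$ with $g''$ on $C$, $(g',g'')\notin C$, and a walk from $\id$ reaching $g'$ of the right length-class — this is exactly conditions (i)--(iii), with (iii) recording that $g'$ is reachable from $\id$ by a walk of length $\equiv r-\ell-1\pmod N$ (equivalently $\nu_{nN+r-\ell-1}(g')>0$ for large $n$), where $\ell$ is the $C$-distance from $g''$ to $g$ so that closing up $g'\to g''\to_C g$ returns to $g$.

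\emph{The converse} ($\Leftarrow$): Given such $C$, $(g',g'')$, and a walk $\sigma$ of length $nN+r-\ell-1$ from $\id$ to $g'$ (for suitable $n$), form walks $W_{j}$ for $j=0,1,\dots,n$ by following $\sigma$, then the edge $(g',g'')$, then going around $C$ from $g''$ back to $g$ but with $C$ inserted $j$ extra times somewhere, adjusting $\sigma$'s length by removing $j$ copies of $C$ pumped in an earlier cycle (using $g\in V_1$ so $C$ or another cycle is available on a walk to $g'$) — the point is to exhibit $\Theta(n)$ distinct walks of length $nN+r$ from $\id$ to $g$, which shows $t^r_g(n)\to\infty$. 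Here I use that the detour through $(g',g'')$ and around $C$ has length $\ell+1$ which, combined with the $C$-pumping of period dividing $N$, keeps the total in the residue class $r\pmod N$. The main obstacle I anticipate is the forward direction: carefully extracting, from mere unboundedness of a walk-count, the specific local configuration (a cycle plus one off-cycle edge returning to it) rather than some more diffuse source of growth; this will require a pigeonhole/decomposition argument on walks together with the observation that the strongly connected component of $g$ must contain a vertex of out-degree $\ge2$ within the component (a ``branch point'') reachable appropriately, and then translating that branch point into the edge $(g',g'')$ with $g''$ back on a cycle through $g$.
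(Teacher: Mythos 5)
The paper's proof is far shorter and turns both directions on a single inequality that you do not isolate. Fix any $C$, $(g',g'')$ satisfying (i)--(ii) and let $\ell$ be as in (iii); then
\begin{equation*}
  \nu_{(n+1)N+r}(g)\;\geq\; \nu_{nN+r}(g) + \nu_{(n+1)N+r-\ell-1}(g').
\end{equation*}
The two terms on the right correspond to two disjoint families of length-$(n+1)N+r$ walks ending at $g$: those obtained from a length-$nN+r$ walk to $g$ by appending $N/|C|$ laps of $C$, and those obtained from a length-$((n+1)N+r-\ell-1)$ walk to $g'$ by appending the edge $(g',g'')$ followed by the $C$-path of length $\ell$ from $g''$ to $g$. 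Disjointness is exactly what (i)--(ii) buy, since the second family has the off-cycle edge $(g',g'')$ at a fixed position near the end while the first family ends entirely inside $C$. Granting this inequality, the `if' direction is immediate: by Remark~\ref{divisors}, (iii) holding once makes it hold for all $n$, so the right summand $\nu_{(n+1)N+r-\ell-1}(g')$ is strictly positive for all $n$ and $t^r_g(n)$ is strictly increasing, hence unbounded. The `only if' direction is the contrapositive of the same inequality: if $t^r_g(n)$ is bounded it eventually stabilizes (it is nondecreasing by Proposition~\ref{uniform-monotone}), so the inequality forces $\nu_{nN+r-\ell-1}(g')=0$ for all large $n$ and hence (again by Remark~\ref{divisors}) for all $n$, for every $C$ and $(g',g'')$ satisfying (i)--(ii).

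Your converse as written has a gap: you propose to build $\Theta(n)$ distinct walks of the fixed length $nN+r$ by ``removing $j$ copies of $C$ pumped in an earlier cycle'' from the witnessing walk $\sigma$ to $g'$, but nothing guarantees that $\sigma$ passes through $C$, or through any cycle at all, so these removals may be impossible; even when they are, the resulting walks need not be pairwise distinct. The fix is precisely the inequality above, which trades your construction-at-a-single-length for a comparison between lengths $nN+r$ and $(n+1)N+r$ and never touches the internal structure of $\sigma$. Your forward direction is the bigger overreach: the pigeonhole/skeleton decomposition you sketch is substantially harder than what is needed, and you correctly flag it as the main obstacle. In the paper there is no such argument at all — once the key inequality is stated, `only if' is its contrapositive and costs nothing extra. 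So while your high-level intuition (detours through an off-cycle edge returning to $C$ are what drive growth) is right, both of your directions as planned are either gappy or needlessly heavy compared to the one-inequality proof.
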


Note that due to Remark~\ref{divisors} the conditions (iii) are
equivalent for all~$n$. It is clear that
(i)--(iii) are verified efficiently.

\begin{proof}[Proof of Proposition~\textup{\ref{2infinity}}.]
  `If' part of the proposition follows from
  \begin{equation}\label{increase}
  \nu_{(n+1)N+r}(g)\geq \nu_{nN+r}(g) +\nu_{(n+1)N+r-\ell-1}(g')> 
  \nu_{nN+r}(g).
  \end{equation}

  Prove now `only if'. Suppose that
  $\nu_{nN+r}(g)=T$ for $n>n_0$. For any directed cycle $C$ passing
  through  $g$ and any edge satisfying (i)--(ii) the first
  inequality in~\eqref{increase} implies that
  $\nu_{nN+r-\ell-1}(g')=0$ for $n>n_0$.
\end{proof}

\begin{theorem}
   The \reg(\fI) problem is decidable.
\end{theorem}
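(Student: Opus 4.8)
The plan is to reduce the \reg(\fI) problem to the down-hitting problem via Lemma~\ref{Inj->suborbit}, and then to show that the down-hitting problem is decidable by making the ``parallelepipedon'' representation of the orbit down-shadow effective. Recall that the down-shadow of the orbit is the set $\{y : y\prec \Phi^nx_0 \text{ for some } n\}$. By Dickson's lemma the set of maximal elements of this down-shadow forms a finite antichain, so the down-shadow is a finite union of Cartesian products of coordinate segments $[0,c_1]\times\dots\times[0,c_m]$ (with $c_i=\infty$ allowed, in which case the $i$th factor is all of $\NN$). Hitting such a parallelepipedon by a translate of an integer cone $v_0+W$ is an integer linear programming feasibility question, so the only thing to establish is that these maximal vectors (and which components are unbounded) can be computed effectively from the matrix $\Phi$ of~\eqref{Phi-def} and the vector $x_0=e(\id)$.

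To compute this representation, I would use the structural analysis already set up for the surjective case. For each residue $0\leq r<N$ (with $N=\mathrm{LCM}(1,\dots,|V(\Gamma)|)$), consider the nondecreasing subsequence $t^r_g(n)=\nu_{nN+r}(g)$. By Propositions~\ref{zero-set}, \ref{uniform-monotone}, \ref{2infinity}, together with the recurrence~\eqref{V2-rec}, one can decide for each $g$ and each $r$ whether $t^r_g(n)\to\infty$ or stabilizes, and in the latter case the eventual value $T^r_g$ is computable: subsequences with $g\in V_3$ stabilize at $0$ for $n>|V(\Gamma)|$; for $g\in V_2$ the limit is infinite iff one of the feeding subsequences $t^{r-k}_{g'}(n)$ is infinite (which propagates downward through a finite acyclic dependency, so it terminates), and if all feeding subsequences stabilize then $T^r_g$ is obtained by plugging their stable values into~\eqref{V2-rec}; for $g\in V_1$, Proposition~\ref{2infinity} gives an effectively checkable criterion for unboundedness, and if bounded the stable value is reached at some computable index and can be read off by direct computation of $\nu_n(g)$ for $n$ up to that bound. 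Moreover the index beyond which every bounded subsequence has stabilized is itself bounded by an effectively computable function of $|V(\Gamma)|$ and $N$ (a crude bound: the stabilization for any single bounded $t^r_g$ happens within the number of distinct values it can take, which is at most the stable value, and these stable values are themselves bounded using the acyclic structure).

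Putting this together: the orbit down-shadow equals the union over $0\leq r<N$ and over the finitely many initial indices $n$ before stabilization of the parallelepipedons $\prod_{g} [0,\nu_n(g)]$, together with, for each $r$, the ``limiting'' parallelepipedon in which the $g$-coordinate is $[0,T^r_g]$ when $t^r_g$ stabilizes and is all of $\NN$ when $t^r_g\to\infty$; since the $t^r_g$ are nondecreasing, every down-shadow point $y\prec\Phi^nx_0$ with $n$ large lies under the corresponding limiting parallelepipedon, so this finite family indeed covers the whole down-shadow. This representation is computed by the procedure above. Then, following Lemma~\ref{Inj->suborbit} and the reduction in Lemma~\ref{WWHP->IHP}, \reg(\fI) reduces to finitely many instances of: does a translate of an integer cone $v_0+W$ meet a fixed parallelepipedon? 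Each such instance is a system of linear inequalities and a membership-in-an-integer-cone condition, i.e.\ an integer linear programming feasibility problem, hence decidable. The main obstacle is the bookkeeping in the last paragraph's stabilization argument --- specifically, producing an explicit effective bound on the index at which all the bounded subsequences $t^r_g(n)$ have reached their limits, so that the ``limiting'' parallelepipedons can actually be written down; the structural facts needed (acyclicity of $\Gamma_g$, the recurrence~\eqref{V2-rec}, and Proposition~\ref{2infinity}) are already in hand, so this is a matter of turning a finiteness statement into a computable one rather than proving anything genuinely new.
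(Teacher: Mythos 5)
Your proposal follows essentially the same route as the paper: reduce via Lemma~\ref{Inj->suborbit} to the down-hitting problem, use Propositions~\ref{zero-set}, \ref{uniform-monotone}, \ref{2infinity} and the recurrence~\eqref{V2-rec} to classify, for each residue $r$, which coordinates $t^r_g(n)=\nu_{nN+r}(g)$ are unbounded and to extract the limit values of the bounded ones, package the orbit down-shadow as a finite union of coordinate ``parallelepipedons,'' and finish by integer linear programming. That is exactly the paper's argument. Two minor differences are worth noting. First, you explicitly keep the finitely many initial parallelepipedons $\prod_g[0,\nu_n(g)]$ for small $n$ (before the monotonicity of Proposition~\ref{uniform-monotone} kicks in); the paper's displayed description of the down-shadow is only the union over residues of the limiting parallelepipedons, so your added care here is a genuine (if small) improvement in rigor. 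Second, you flag the need for an effective bound on the index at which the bounded subsequences stabilize; the paper treats this as implicit in ``use Proposition~\ref{2infinity} and the observations preceding it.'' Your sketch of how to close this (the $V_3$ case is immediate, the $V_2$ limits are obtained by pushing stable values through the finite acyclic dependency of~\eqref{V2-rec}, and the $V_1$ case can be settled via Proposition~\ref{2infinity} or via the linear recurrence satisfied by $\nu_n(g)$, whose transient length is bounded by its degree) is correct and fills a detail the paper leaves to the reader; it does not change the overall approach.
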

\begin{proof}
  Determine all
  unbounded components for all subsequences
  $\Phi^{nN+r}x_0$ and the limit values for bounded components. 
  For this purpose use  Proposition~\ref{2infinity} and
  the observations preceding it.
  Then the down-shadow is the union of the sets
  \begin{equation*}
    \left\{
    \begin{aligned}
      &y_g\geq0,&&\text{if } \lim_{n\to\infty}(\Phi^{nN+r}x_0)_g=\infty ,\\
      y^\infty_g\geq &y_g\geq0, &&\text{if }       
      \lim_{n\to\infty}(\Phi^{nN+r}x_0)_g=y^\infty_g.
    \end{aligned}\right.
  \end{equation*}
  over all   $0\leq r<N$. Here $y_g$ are coordinates in the space $\QQ^{Q^Q}$.

  So the problem is reduced  to the integer linear programming problem.
\end{proof}

Now we present an undecidable realizability problem 
that is 
related to the \fP-realizability problem.

In the construction we use a track product of languages consisting of
block words (\emph{block languages}). Here blocks are words over a
finite alphabet $\Sigma$ and a block word consists of blocks of the
same length separated by the delimiter~$\#$.

Let $\Sigma_1$, $\Sigma_2$ be finite alphabets. For the alphabet
$\Sigma_1\times\Sigma_2$ there are two natural projections from
$(\Sigma_1\times\Sigma_2)^*$ to $\Sigma_1^*$ (resp. to $\Sigma_2^*$):
\begin{equation}
  \begin{aligned}
    &\pi_1\colon (a_1,b_1)(a_2,b_2)\dots(a_n,b_n)\mapsto a_1a_2\dots
    a_n,\\
    &\pi_2\colon (a_1,b_1)(a_2,b_2)\dots(a_n,b_n)\mapsto b_1b_2\dots
    b_n.
  \end{aligned}
\end{equation}

\emph{The track product} of two block languages $L_1$ (over an
alphabet $\Sigma_1$) and $L_2$ (over the alphabet $\Sigma_2$) is a
block language $L=L_1{\parallel}L_2$ consisting of all block words
over the alphabet $\{\#\}\cup\Sigma_1\times \Sigma_2$ such that
the projection $\pi_1$ is in the language $L_1$ and the projection
$\pi_2$ is in the language $L_2$. (For consistency we assume that
$\pi_1(\#)=\pi_2(\#)=\#$.)  

Denote by $\Per_\Sigma$ the block language consisting of all periodic words
over the alphabet $\Sigma$ (all blocks of a word in $\Per_\Sigma$ are equal) 
and by $P_\Sigma$ the block language
consisting of permutation block words over the alphabet 
$\Sigma$ (blocks of a word in $P_\Sigma$ form the set of all words in
$\Sigma^n$, where $n$ is the block rank).

The \reg(\Per_\Sigma) problem is decidable. Actually it is
PSPACE-complete~\cite{Vya09}. It turns out that the track product of the
periodic filter  with the permutation one is undecidable.

\begin{theorem}\label{PePe:undecidable}
  There are alphabets $\Sigma_1$, $\Sigma_2$ such that
  the  \reg(\PePe) problem is undecidable.
\end{theorem}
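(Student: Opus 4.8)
The plan is to encode an undecidable computational problem---say the halting problem for Minsky (two-counter) machines, or equivalently the Post correspondence problem---into the emptiness of an intersection $R\cap(\pepe)$ for a suitably constructed regular language $R$. The key observation to exploit is that a word in $\pepe$ is a block word over $\{\#\}\cup\Sigma_1\times\Sigma_2$ whose $\pi_1$-projection is \emph{periodic} (all blocks equal, say to a fixed word $u\in\Sigma_1^n$) while its $\pi_2$-projection is a \emph{permutation word} (the $\pi_2$-blocks $v_1,\dots,v_N$ range over all of $\Sigma_2^n$, $N=|\Sigma_2|^n$). So a single block word simultaneously carries (a) one fixed ``program/configuration'' word $u$ repeated $N$ times on the first track, and (b) an exhaustive enumeration of all strings of length $n$ over $\Sigma_2$ on the second track. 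The combination lets a \emph{finite} automaton $R$ check a property of the form ``for all $v\in\Sigma_2^n$, the pair $(u,v)$ satisfies local constraint $\varphi$'': the automaton reads each block $(u,v_i)$ in turn, using the first track to recall $u$ and the second track to range over all candidates, and rejects if any block violates $\varphi$. Because $\pepe$ forces the second track to hit every $v$, acceptance of \emph{some} word in the intersection is equivalent to $\exists n\,\forall v\in\Sigma_2^n\,\varphi(u,v)$ (with $u$ itself guessed, also of length $n$, on the periodic track).

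Concretely, I would take $\Sigma_2=\{0,1\}$ (or a slightly larger alphabet) so that a block $v$ of length $n$ codes a number or a tape/counter contents in $[0,2^n)$, and let $\Sigma_1$ code a description of a single transition step. A universal choice is to simulate a deterministic Turing machine or two-counter machine $M$: the periodic track holds (the repetition of) an initial configuration $c_0$ of $M$ padded to length $n$; the automaton $R$ uses the permutation track to quantify over all purported length-$n$ configurations $v$ and checks the ``no bad successor'' invariant. The trick to make this finite-state is the standard one for PCP/tiling-style reductions: rather than storing a whole configuration, $R$ verifies consistency \emph{locally} (cell by cell, matching a window in block $v_i$ against block $v_{i+1}$), so it only ever needs a bounded window. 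I would set things up so that the intersection $R\cap(\pepe)$ is nonempty iff there is a length $n$ for which $M$, started on $c_0$, halts within space $n$ (or reaches an accepting configuration of size $\le n$); since halting space is unbounded exactly when $M$ halts, nonemptiness is equivalent to halting of $M$, which is undecidable. One must be a little careful to make the ``$\exists n$'' work for us rather than against us: the cleanest route is to make $R$ accept a permutation block word iff it \emph{witnesses} a halting computation of length $\le 2^n$ laid out across the blocks $v_1,\dots,v_N$ in lexicographic order (so block index encodes time), using the periodic first track to supply the fixed program and the halting target; then nonemptiness $\iff$ $M$ halts.

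The steps, in order: (1) fix the undecidable source problem (halting of a 2-counter machine $M$ on empty input); (2) choose $\Sigma_1$ to encode $M$'s transition table / program and $\Sigma_2$ to encode configurations or tape cells; (3) design the deterministic automaton $A$ for $R$ that, reading a block word over $\{\#\}\cup\Sigma_1\times\Sigma_2$, (i) checks that the $\Sigma_1$-track is constant and equals a legal encoding of $M$, (ii) uses the lexicographic/position structure of a permutation $\Sigma_2$-word to interpret successive blocks as successive configurations (or time steps), and (iii) verifies the local step-consistency relation and the presence of a halting block, rejecting otherwise; (4) prove the equivalence: if $M$ halts in $T$ steps, pick $n$ with $2^n>T$ and large enough to hold the whole computation, build the block word whose $i$-th $\Sigma_2$-block is the binary word of length $n$ for $i$ and whose paired ``data'' is the $i$-th configuration (this is a legal permutation word on the second track and periodic on the first), and check $A$ accepts it; conversely any word accepted by $A$ in $R\cap(\pepe)$ forces the second track to enumerate all length-$n$ strings, hence in particular contains, in order, a complete halting run of $M$, so $M$ halts; (5) conclude undecidability.

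The main obstacle I expect is step (3)/(4): arranging the bookkeeping so that (a) the automaton is genuinely finite-state---i.e. all checks are \emph{local} (a bounded window between consecutive blocks) and the only ``global'' data, the program description, lives on the periodic track where it is literally re-presented in every block---and (b) the permutation condition on $\pepe$, which we do not control, is actually \emph{used} to guarantee exhaustiveness (every configuration/time step appears) rather than merely tolerated. In particular one must handle the mismatch between ``the permutation track enumerates all $2^n$ strings in some order'' and ``we want them in lexicographic/temporal order'': either force the order by having $R$ reject any two consecutive blocks that are not lexicographic successors (which, combined with the permutation property, pins the order down uniquely), or make the encoding order-independent by having each block carry its own time-stamp and letting $R$ check pairwise consistency of stamped configurations---the former is cleaner and is the route I would take. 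Everything else (the ``$\exists n$'' quantifier matching ``unbounded halting space'', the reduction from Turing machines to counter machines, the routine verification that $\Per$ plus permutation is exactly what gives us ``one fixed program against all inputs/times'') is standard once this core gadget is in place.
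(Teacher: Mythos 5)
There is a genuine gap, and it sits exactly where you flag your ``main obstacle.'' Your plan requires the automaton to relate consecutive permutation-track blocks $v_i$ and $v_{i+1}$ by a check such as ``$v_{i+1}$ is the one-step successor configuration of $v_i$'' or ``$v_{i+1}$ is the lexicographic successor of $v_i$.'' These checks are \emph{not} finite-state in the block-word format. The track product only synchronizes the two tracks at the same position \emph{within} a block; it does not interleave consecutive $\Sigma_2$-blocks, which the automaton reads one after another. To compare $v_i$ with $v_{i+1}$ bit by bit the automaton would have to remember all $n$ symbols of $v_i$, and $n$ is unbounded. So neither of your two variants (``force lexicographic order'' or ``time-stamp each block and check pairwise consistency'') can be carried out by a deterministic finite automaton. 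The periodic track does not help here: it is constant, so it cannot carry $v_i$ while the other track carries $v_{i+1}$. There is also a secondary difficulty: the permutation filter fixes the \emph{set} of $\Sigma_2$-blocks (it must be exactly $\Sigma_2^n$), so the blocks carry no free information; if you could force lexicographic order the permutation track would be entirely determined, and the ``paired data'' you propose to attach to block $i$ has nowhere to live --- enlarging $\Sigma_2$ to hold an extra data field just makes the permutation filter demand \emph{every} (index, data) combination, not the specific trace you want. Indeed, if the automaton's inter-block check degenerates to a per-block property $\varphi(u,v)$ with $\varphi$ regular, the resulting question ``$\exists u\ \forall v\in\Sigma_2^{|u|}\ \varphi(u,v)$'' is decidable (it reduces to emptiness of the regular language $\Sigma_1^*\setminus\pi_1(\overline{\varphi})$), so something genuinely beyond per-block and beyond unbounded inter-block comparison is needed.

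The paper sidesteps this by not trying to encode a computation trace at all. It reduces from the undecidable matrix-semigroup problem (zero in the upper-right corner, Theorem~\ref{mortality}): the periodic $\Sigma_1$-track carries a fixed sequence $\boldsymbol j=j_1,\dots,j_\ell$ of matrix indices, visible synchronously in every block; each $\Sigma_2$-block is classified, by a regular check against the periodic track, into one of three constant-size classes $T^+$, $T^-$, $T^{\mathrm{bad}}$; and the automaton uses the same pairing device as in Section~\ref{HCP} (blocks from $T^+$ immediately followed by blocks from $T^-$, then the rest) to certify $|T^+|=|T^-|$, which by Proposition~\ref{pos-expansion} is exactly the condition $(A_{j_1}\cdots A_{j_\ell})_{1D}=0$. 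The only information the automaton carries across a block boundary is which class the previous block fell into --- a constant amount --- so the construction \emph{is} finite-state. The moral you can take from the paper: the permutation filter is naturally suited to \emph{cardinality/counting} constraints (via bijective pairing of regular classes), not to writing out computation traces, and the periodic filter supplies a shared parameter readable inside every block. Your proposal correctly identifies the role of the periodic track but misuses the permutation track, and that misuse is what makes the required checks escape finite state.
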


A suitable undecidable problem that is reduced 
to the \reg(\PePe) problem
 is the following.

\medskip

\textbf{Zero in the Upper Right Corner Problem.} (The ZURC problem.)  
For a given collection of $D\times D$ integer
matrices 
$A_1,\dots, A_N$ check whether the multiplicative semigroup generated
by $\{A_i\}$ contains a matrix $M$ such that $M_{1D}=0$.

In other words the ZURC problem is to check an existence of an integer
sequence  $j_1,\dots,j_\ell$, where $1\leq j_t\leq N$, such that
\begin{equation}\label{zero-expression}
  (A_{j_1}A_{j_2}\dots A_{j_\ell})_{1D}=0.
\end{equation}

\begin{theorem}[see \cite{BellPotapov}]\label{mortality}
  The ZURC problem is undecidable for  $N=2$ and $D=18$.
\end{theorem}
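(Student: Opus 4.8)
The plan is to reduce the Post Correspondence Problem (PCP), which is undecidable, to the ZURC problem, while driving the number of generators down to $N=2$ and the matrix order down to $D=18$. Fix a PCP instance with pairs of words $(u_1,v_1),\dots,(u_n,v_n)$ over an alphabet identified with the digit set $\{1,2,\dots,k\}$ of \emph{bijective} base-$k$ numeration. The crucial point is that bijective base-$k$ gives a bijection $w\mapsto\sigma(w)$ between words and nonnegative integers satisfying $\sigma(ww')=\sigma(w)k^{|w'|}+\sigma(w')$; hence $\sigma(U)=\sigma(V)$ already forces $U=V$ (in particular $|U|=|V|$), so the matching condition $u_{j_1}\cdots u_{j_\ell}=v_{j_1}\cdots v_{j_\ell}$ is equivalent to the single arithmetic identity $\sigma(U)-\sigma(V)=0$, with no separate length test. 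The target of the reduction is a family of $18\times 18$ integer matrices such that the generated semigroup contains a matrix whose upper-right entry is $0$ iff the PCP instance has a (nonempty) solution.

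\textbf{The arithmetic gadget.} Appending a pair $(u_i,v_i)$ acts on the vector $(\sigma(U),k^{|U|},\sigma(V),k^{|V|},1)$ by an affine update, hence (thanks to the padding coordinate $1$) by a linear map $M_i$; composing the $M_i$ along a sequence $j_1,\dots,j_\ell$, starting from $(0,1,0,1,1)$, yields exactly $(\sigma(U),k^{|U|},\sigma(V),k^{|V|},1)$. The quantity that must vanish, $\sigma(U)-\sigma(V)$, is a fixed linear functional $\lambda_{\mathrm{out}}$ of this state, and the start state $x_{\mathrm{in}}$ is fixed. To convert the scalar $\lambda_{\mathrm{out}}\big(\prod_t M_{j_t}\,x_{\mathrm{in}}\big)$ into a single entry of a matrix product, I would embed each $M_i$ into a block-triangular matrix with two extra boundary coordinates (a source and a sink), and add an \emph{initialiser} $S$, which injects $x_{\mathrm{in}}$ from the source, and a \emph{finaliser} $F$, which reads $\lambda_{\mathrm{out}}$ into the sink. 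One designs $S$, $F$, and the embedded $M_i$ so that: (a) in a product that begins with $S$, ends with $F$, and uses at least one $M_i$ in between, the source-to-sink entry equals $\sigma(U)-\sigma(V)$; and (b) the source and sink coordinates are one-way, so that once $F$ has fired the computation cannot be re-entered, and any product with no $S$, with $F$ not last, or with no $M_i$ between $S$ and $F$ (the latter would otherwise give the spurious zero $\sigma(\varepsilon)-\sigma(\varepsilon)$) is forced to keep a provably nonzero value in the relevant corner. Then the semigroup on $\{S,F,M_1,\dots,M_n\}$ contains a matrix with zero in the designated corner iff the PCP instance is solvable.

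\textbf{Reducing to two generators.} The $n+2$ matrices above are folded into two matrices $A_1,A_2$ by binary-encoding the generator index: each original generator receives a fixed binary name, a small group of coordinates implements a digit counter, and $A_b$ ($b\in\{0,1\}$) advances this counter by one bit, actually applying the named original generator only when a complete name has been scanned; a product whose names are broken off out of phase is again trapped into leaving a nonzero residue in the target corner. This is the technically heaviest step and the place where the exact dimension bookkeeping lives: one must simultaneously carry the (roughly $5$-dimensional) arithmetic state, the boundary coordinates, the index-counter coordinates, and enough error-trap coordinates to kill every malformed product, then check that a sufficiently economical layout fits into $D=18$ with only $N=2$ generators. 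I expect this — certifying that no spurious product produces a corner zero while squeezing the dimension to $18$ — to be the main obstacle, and its resolution is precisely the content of \cite{BellPotapov}.
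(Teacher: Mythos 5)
The statement you are proving is not proved in the paper at all: it is imported verbatim from Bell and Potapov \cite{BellPotapov}, so there is no internal proof to compare against. Judged on its own terms, your sketch correctly reproduces the standard high-level strategy behind such results (reduce from PCP; encode words as integers via an injective morphism such as bijective base-$k$, so that $u_{j_1}\cdots u_{j_\ell}=v_{j_1}\cdots v_{j_\ell}$ becomes a single linear identity $\sigma(U)-\sigma(V)=0$; carry the state $(\sigma(U),k^{|U|},\sigma(V),k^{|V|},1)$ by linear maps; extract the functional into a fixed matrix corner via source/sink coordinates). But the theorem being claimed is precisely the quantitative statement $N=2$, $D=18$, and your proposal explicitly defers that part (``its resolution is precisely the content of \cite{BellPotapov}''). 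A proof whose conclusion is delegated to the reference is not a proof; as submitted, the argument establishes at best ``ZURC is undecidable for some $N$ and $D$.''

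There is also a concrete technical problem with the generator-reduction gadget as you describe it. A fixed linear map $A_b$ cannot ``apply the named original generator only when a complete name has been scanned'': that would require the counter coordinates to buffer an unbounded partial product of integer matrices, which finitely many extra coordinates cannot do. The reduction to two generators works only because the original generators are themselves images of words under a monoid morphism into a matrix monoid, so each $G_i$ genuinely factors as a product of two letter-matrices, and the burden shifts to showing that products which break tile boundaries cannot produce a spurious corner zero. This is why the actual argument routes through Claus-form instances of PCP (undecidable already for $7$ pairs, with forced first and last tiles), and it is exactly this combination --- letter-level generators plus boundary control --- that yields $D=18$. If you want a self-contained proof, you must either carry out that construction explicitly or, as the paper does, simply cite \cite{BellPotapov} and not present the theorem as proved.
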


We will reduce the ZURC problem with  $N=2$ and $D=18$ to the
\reg(\PePe) problem where
\begin{equation}\label{alphabets}
  \Sigma_1=[1,\dots,N],\qquad \Sigma_2=[1,\dots,D]\times[1,\dots,D]
  \times\{0,1\}.
\end{equation}

The reduction is similar to the reduction in Section~\ref{HCP}.

Let  $A_1,\dots, A_N$ be an instance of the ZURC problem for
 $D=18$, $N=2$. Rewrite the matrices in the form
$$
A_j=
\begin{pmatrix}
  \eps^j_{11}m^j_{11}& 
  \eps^j_{12}m^j_{12}&
  \dots          &
  \eps^j_{1D}m^j_{1D}\\ 
  \eps^j_{21}m^j_{21}& 
  \eps^j_{22}m^j_{22}&
  \dots          & 
  \eps^j_{2D}m^j_{2D}\\
  \hdotsfor{4}\\
  \eps^j_{D1}m^j_{D1}&
  \dots          &
  \eps^j_{D\;(D-1)}m^j_{D\;(D-1)}& 
  \eps^j_{DD}m^j_{DD} 
\end{pmatrix}\;,
$$
where $m^j_{ik}>0$ and
$\eps^j_{ik}\in\{\pm1,0\}$. Let $M$ be the maximum of  $m^j_{ik}$.

Fix now a sequence $A_{j_1}$, $A_{j_2}$, $\dots$,
$A_{j_\ell}$. 
Matrix elements in the product have the form

\begin{equation}\label{prod-expression}
  \big(A_{j_1}A_{j_2}\dots A_{j_\ell}\big)_{ik}=
  \sum_\tau \eps(\tau)m(\tau),
\end{equation}
where $\tau $ runs over all sequences of  pairs $(i_\al{}k_\al)$ such
that  the length of a sequence is~$\ell$ and 
$i_1=i$, $k_\ell=k$,
$i_{\al+1}=k_\al$,
\begin{equation}\label{prod-element}
\eps(\tau)=\prod_{\al=1}^\ell \eps^{j_\al}_{i_\al{}k_\al},\qquad
m(\tau)   =\prod_{\al=1}^\ell m^{j_\al}_{i_\al{}k_\al}.
\end{equation}

Using the expansion~(\ref{prod-expression}, \ref{prod-element}) we
define the partition of words of the length $\ell\cdot\lceil\log_2
M\rceil$ over the alphabet $\Sigma_1\times \Sigma_2$ into three sets 
$T^+_{ik}(\boldsymbol j)$, $T^-_{ik}(\boldsymbol j)$
and $\Tb_{ik}(\boldsymbol j)$, where   $\boldsymbol j=j_1,\dots,
j_\ell$. (Below we drop out $\boldsymbol j$ while the sequence
$\boldsymbol j$ is fixed.)

It is convenient to represent
a word over the alphabet $\Sigma_1\times\Sigma_2$, where $\Sigma_i$
are given by~\eqref{alphabets}, by a $4$-row table.
A table column represents a symbol in the word. The first row bears
symbols from $\Sigma_1$ while the remaining three columns
represent symbols from
$\Sigma_2$ (they have three components as indicated
in~\eqref{alphabets}). 

A word in the set $T^+_{ik}$ ($T^-_{ik}$) can be divided in  $\ell$
subwords of the length $\lceil\log_2 M\rceil$. The $\al$th subword has
the form
\begin{equation}\label{T-group}
\begin{pmatrix}
  j_\al&j_\al&\dots&j_\al \\
  i_\al&i_\al&\dots&i_\al \\
  k_\al&k_\al&\dots& k_\al \\
  \beta_0&\beta_1&\dots&\beta_{\lceil\log_2 M\rceil-1}
\end{pmatrix}\;.
\end{equation}
in the table representation defined above. The upper three elements in
each row are the same in~\eqref{T-group}. Note that $j_\al$ is the
$\al$th element of the  sequence $\boldsymbol j$. The fourth row
is a binary representation of an integer~$\beta$.

A word in the set $T^+_{ik}$ should satisfy the following requirements
\begin{itemize}
\item [(a)]  $i_1=i$;
\item [(b)] $k_\ell=k$;
\item [(c)] $i_{\al+1}=k_\al$;
\item [(d)] $\beta<m^{j_\al}_{i_\al{}k_\al}$;
\item [(e)] $\eps(\tau)=1$, where $\tau$ is the sequence of pairs
  $(i_\al,k_{\al})$ and $\eps(\tau)$ is given by~\eqref{prod-element}. 
\end{itemize}

A word in the set $T^-_{ik}$ 
should satisfy the requirements (a)--(d) and the modified requirement
(e$'$)
 $\eps(\tau)=-1$.

The set  $\Tb_{ik}$ collects the rest of words.

\begin{prop}\label{pos-expansion}
  In notation above
  $$
  \big(A_{j_1}A_{j_2}\dots A_{j_\ell}\big)_{ik} = |T^+_{ik}|-|T^-_{ik}|.
  $$
\end{prop}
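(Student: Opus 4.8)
The plan is to set up a weight-preserving bijection between the three-way partition of words of length $\ell\lceil\log_2 M\rceil$ over $\Sigma_1\times\Sigma_2$ and the terms appearing in the expansion~(\ref{prod-expression},~\ref{prod-element}). First I would make precise the decomposition of such a word $w$ into $\ell$ consecutive blocks of length $\lceil\log_2 M\rceil$; call the $\al$th block $B_\al$. By the table representation, $B_\al$ is determined by its four rows, and the requirement that its top three rows be constant (with value $j_\al$ in row~1, $i_\al$ in row~2, $k_\al$ in row~3) is a syntactic condition. A word $w$ lying in $T^+_{ik}\cup T^-_{ik}$ thus yields a sequence $\tau=((i_1,k_1),\dots,(i_\ell,k_\ell))$ satisfying $i_1=i$, $k_\ell=k$, $i_{\al+1}=k_\al$ — exactly the index sequences over which $\tau$ runs in~\eqref{prod-expression} — together with, in row~4 of each block, a binary integer $\beta_\al$ with $0\le\beta_\al<m^{j_\al}_{i_\al k_\al}$ by condition~(d).

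The key counting step is then: for a fixed admissible $\tau$, the number of words $w\in T^+_{ik}\cup T^-_{ik}$ whose associated index sequence is $\tau$ equals $\prod_{\al=1}^{\ell} m^{j_\al}_{i_\al k_\al}=m(\tau)$, since the fourth row of block $B_\al$ may be any of the $m^{j_\al}_{i_\al k_\al}$ binary strings representing integers in $[0,m^{j_\al}_{i_\al k_\al})$, independently across $\al$. (Here I would remark that $\lceil\log_2 M\rceil$ bits suffice to write any such $\beta$ since $m^{j_\al}_{i_\al k_\al}\le M$.) Moreover $\eps(\tau)$, as defined in~\eqref{prod-element}, depends only on $\tau$, not on the choice of the $\beta_\al$; hence all $m(\tau)$ such words lie in $T^+_{ik}$ if $\eps(\tau)=1$ and all lie in $T^-_{ik}$ if $\eps(\tau)=-1$, while for $\eps(\tau)=0$ there are no admissible words with that index sequence (the relevant $m^{j_\al}_{i_\al k_\al}$ factor is still positive, but condition~(e)/(e$'$) excludes these words into $\Tb_{ik}$). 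Summing over $\tau$,
\begin{equation*}
  |T^+_{ik}| = \sum_{\tau:\,\eps(\tau)=1} m(\tau),\qquad
  |T^-_{ik}| = \sum_{\tau:\,\eps(\tau)=-1} m(\tau),
\end{equation*}
so that $|T^+_{ik}|-|T^-_{ik}|=\sum_\tau \eps(\tau)m(\tau)$, which is precisely $\big(A_{j_1}A_{j_2}\dots A_{j_\ell}\big)_{ik}$ by~\eqref{prod-expression}.

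The only point requiring care — and the step I expect to be the main obstacle — is bookkeeping around condition~(d) and the binary encoding: one must check that the encoding of $\beta$ in exactly $\lceil\log_2 M\rceil$ bits gives a genuine bijection between $\{0,1,\dots,m^{j_\al}_{i_\al k_\al}-1\}$ and the set of admissible fourth rows of $B_\al$, with no off-by-one issues when $m^{j_\al}_{i_\al k_\al}=M$ or when $M$ is a power of $2$, and that $\eps^{j_\al}_{i_\al k_\al}=0$ is correctly absorbed into $\Tb_{ik}$ rather than contributing a spurious zero term. Once the bijection is verified block by block and the independence across blocks is noted, the identity follows immediately from distributing the product in the matrix multiplication, which is routine.
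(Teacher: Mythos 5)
Your proof is correct and follows essentially the same route as the paper's: restrict to the top three rows to recover the index sequence $\tau$, observe that the sign $\eps(\tau)$ depends only on $\tau$, and count the fourth-row choices per block (condition~(d)) to get multiplicity $m(\tau)$, then sum over $\tau$. The paper states this more tersely but the decomposition, the block-by-block multiplicity count, and the handling of the $\eps(\tau)=0$ case via $\Tb_{ik}$ all coincide with what you wrote.
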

\begin{proof}
  Restricting words in the set  $T^+_{ik}$ to the upper three rows one
  obtains all correct sequences $\tau$ such that
  $\eps(\tau)=1$. The same restriction for the set $T^-_{ik}$ gives
  the sequences $\tau$ such that $\eps(\tau)=-1$.

  The multiplicity of a sequence $\tau$ in the set
  $T^\pm_{ik}$ depends on fourth rows of the tables.
  According to the requirement (d) and
  the permutation property there are exactly $m^{j_\al}_{i_\al{}k_\al}$
  subwords  bearing 
   $(j_\al,i_\al,k_\al)$ in the upper three rows. So the multiplicity
  of the sequence $\tau$ is $m(\tau)$, where $m(\tau)$ is given
  by~\eqref{prod-element}.  
\end{proof}

It follows from the above construction that the sets $T^{\$}_{ik}(\boldsymbol
j)$, where $\$\in\{+,-,\text{bad}\}$,
do not intersect for different $\boldsymbol j$ because the sequence
$\boldsymbol j$ is recovered from the first row in the table
representation. 

\begin{prop} For a fixed collection of matrices  $A_1,\dots, A_N$ and
  for each pair $i,k$ the language
  \begin{equation*}
    \Ta{\$}_{ik}=\bigcup_{\boldsymbol j} T^{\$}_{ik}(\boldsymbol j)
  \end{equation*}
  is regular for  $\$\in\{+,-,\text{\upshape bad}\}$.
\end{prop}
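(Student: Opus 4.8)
The plan is to show that each of the three languages $\Ta{\$}_{ik}$ is recognized by a finite automaton by exhibiting an automaton that processes a block word and checks, block by block, the defining conditions (a)--(e) (or (e$'$), or their negation). The key observation is that all of the conditions (a)--(e) are \emph{local} in a suitable sense: condition (d) ($\beta<m^{j_\al}_{i_\al k_\al}$) depends only on the single block currently being read, condition (c) ($i_{\al+1}=k_\al$) couples two consecutive blocks and hence needs only a bounded amount of memory (the value $k_\al$ of the previous block), conditions (a) and (b) are boundary conditions on the first and last block, and condition (e) is a running product of signs $\eps^{j_\al}_{i_\al k_\al}\in\{\pm1,0\}$, which is tracked by a single state bit (plus an absorbing ``zero encountered'' state, which forces rejection since a zero sign makes the word fall into $\Tb$). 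None of these requires comparing blocks at unbounded distance, so finite memory suffices.

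Concretely, I would build the automaton in two layers. The inner layer is a ``block gadget'' $B$ that reads one block (a maximal $\#$-free factor), i.e.\ a word over $\Sigma_1\times\Sigma_2$ of the form required in~\eqref{T-group}: it must verify that the three top rows are constant along the block, record the common triple $(j_\al,i_\al,k_\al)$, read the binary string $\beta$ in the fourth row, and upon reaching the block boundary check $\beta<m^{j_\al}_{i_\al k_\al}$ (condition (d)); reading $\beta$ as an integer only requires comparing it against the fixed constant $M$, which is finite-state since $m^{j_\al}_{i_\al k_\al}\le M$ and the length $\lceil\log_2 M\rceil$ of the fourth row is fixed. Crucially the block rank is \emph{not} fixed in advance, so I must be careful: the automaton should simply read digits of $\beta$ until the next $\#$ and, because all block words in the language have the same block rank $\ell\cdot\lceil\log_2M\rceil$ with the first $\ell-1$ factors carrying complete subwords of length exactly $\lceil\log_2M\rceil$, the gadget must also enforce that each block decomposes into $\lceil\log_2M\rceil$-long subwords of the form~\eqref{T-group} --- this is again a periodicity-mod-$\lceil\log_2M\rceil$ check, finite-state. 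The outer layer threads these gadgets together: after each $\#$ it starts a fresh copy of $B$, carries over the previous block's $k_\al$ to check (c), carries a sign bit to accumulate $\eps(\tau)$, and marks whether it is on the first or last block to check (a),(b). The automaton accepts iff every block check succeeds, the chaining constraint (c) holds throughout, the boundary constraints (a),(b) hold, and the accumulated sign equals $+1$ (for $\Ta{+}_{ik}$), $-1$ (for $\Ta{-}_{ik}$), or for $\Tb_{ik}$ iff at least one of the conditions (a)--(d) fails somewhere or the sign hits $0$. Taking the union over $\boldsymbol j$ costs nothing: the sequence $\boldsymbol j$ is read off the first row as we go, so the automaton already quantifies over all $\boldsymbol j$ implicitly, and this is exactly why $\Ta{\$}_{ik}=\bigcup_{\boldsymbol j}T^{\$}_{ik}(\boldsymbol j)$ rather than a single $T^{\$}_{ik}(\boldsymbol j)$ is the natural regular object.

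The main obstacle I anticipate is bookkeeping rather than conceptual: I must make sure the finite state set genuinely suffices, in particular that the within-block phase counter (position modulo $\lceil\log_2M\rceil$), the partial value of the current $\beta$ truncated at the comparison threshold $M$, the carried-over index $k_\al\in[1,D]$, the sign bit, and the first/last-block flags all live in a set whose size depends only on the fixed data $M$, $N$, $D$ --- not on $\ell$ or the block rank. Since $\ell$ and the block rank are exactly the unbounded parameters, the whole point is to show none of the stored quantities scales with them, which the decomposition above makes clear. For $\Tb_{ik}$ one extra subtlety: ``the rest of the words'' means words of the correct block-word shape (over the right alphabet, with uniform block rank) that are \emph{not} in $T^+_{ik}\cup T^-_{ik}$; regularity then follows because the class of syntactically well-formed block words of uniform block rank is itself regular (another periodicity check) and regular languages are closed under difference, so $\Tb_{ik}$ is the difference of this ambient regular language and the regular language $\Ta{+}_{ik}\cup\Ta{-}_{ik}$. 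This closure argument lets me avoid building the $\Tb$ automaton by hand.
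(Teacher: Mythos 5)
Your proof follows the same route as the paper's: conditions (a)--(d) are local finite-state checks on $\lceil\log_2 M\rceil$-long segments, the sign $\eps(\tau)$ is accumulated in the two-element cyclic group (together with an absorbing state once a zero factor occurs), and $\Tb_{ik}$ follows from closure of regular languages under complement. One structural slip that does not affect the substance: the sets $T^{\$}_{ik}$ are by definition languages over $\Sigma_1\times\Sigma_2$ alone, containing no $\#$ --- the $\#$ delimiters appear only one level up in the ambient filter $\pepe$, where entire $T$-words serve as blocks --- so the automaton should locate segment boundaries purely via a position counter modulo $\lceil\log_2 M\rceil$ (which you in fact describe) rather than by reading to the next $\#$.
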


\begin{proof}
  The requirements (a)--(d) are verified by local check on the
  subwords of the fixed length
  $\lceil\log_2 M\rceil$. 

  Computing the sign $\eps(\tau)$ can be done in the cyclic group of
  two elements. 

  So the sets $\Ta{\pm}_{ik}$ are regular. But the class of regular
  languages is closed under the complement. Thus the set
  $\Ta{\text{\textup{bad}}}_{ik}$ is also regular.
\end{proof}

\begin{proof}[Proof of Theorem~\textup{\ref{PePe:undecidable}}.]
We repeat the construction from Section~\ref{HCP}. 
The reduction algorithm
takes an instance  of the ZURC problem ($N=2$, $D=18$) and
constructs an automaton $C$ such that the
condition~\eqref{zero-expression}
holds for some element in the
semigroup generated by the input matrices if and only if $L(C)\cap
\pepe\ne\varnothing$.  

The automaton expects an input $w$ from the language $\pepe$ such that
block rank is $\ell\cdot\lceil\log_2 M\rceil$ and the word $w$ is a
certificate for zero representation~\eqref{zero-expression}. 
A period in the first
(periodic) component determines the sequence $\boldsymbol j=j_1,\dots,
j_\ell$ such that~\eqref{zero-expression} holds. The automaton expects
that each symbol in the sequence $\boldsymbol j$ is repeated
$\lceil\log_2 M\rceil$ times. In the second (permutation) component
the automaton expects that the  blocks from the sets
$T^+_{1D}(\boldsymbol j)$, $T^-_{1D}(\boldsymbol j)$ are paired and are followed by
the blocks from the set   $\Tb_{1D}$. 
Note that such a pairing exists
iff $|T^+_{1D}|=|T^-_{1D}|$. 
The structure of this part of
the automaton $C$ is similar to the automaton $C'$ described in
Section~\ref{HCP} and shown in Fig.~\ref{Auto=}.

The correctness of the reduction is proved in a way similar to the
arguments in Section~\ref{HCP}. If the automaton accepts a word $w$ in
the language $\pepe$ then one can extract the sequence $\boldsymbol j$
from the first components of symbols in the word~$w$. The check in the
second components guarantees that~\eqref{zero-expression} holds for the
sequence $\boldsymbol j$. We apply here
Proposition~\ref{pos-expansion}. 

In other direction, if~\eqref{zero-expression} holds for a
sequence $\boldsymbol j$ then there exists a word in the language
$\pepe$ satisfying the properties expected (and verified) by the
automaton~$C$. Thus  $L(C)\cap\pepe\ne\varnothing$.  
\end{proof}

\begin{remark}
  By suitable encoding 
of
the symbols of the alphabets $\Sigma_1$ and
  $\Sigma_2$ one can prove that the
  $(\Per_{\BB}{\parallel}P_{\BB})$-realizability problem is also
  undecidable.
\end{remark}

\subsection*{Acknowledgments}

Authors are grateful to J. Shallit and I. Sparlinski for helpful remarks.

\end{document}